\newif\ifcomm
\commfalse

\newif\ifblind
\blindfalse

\newif\ifs
\sfalse

\documentclass[letterpaper,twocolumn,10pt]{article}
\usepackage{usenix-2020-09}

\ifs %
    \usepackage[subtle]{savetrees}
    \usepackage[compact,small]{titlesec}
\else
\fi

\usepackage{amsmath}

\usepackage{tikz}
\usepackage{amsmath, nccmath}
\usepackage[english]{babel}
\usepackage{blindtext}
\usepackage{outlines}
\usepackage{graphicx}
\usepackage{times}
\usepackage{hyperref}
\usepackage{ulem}
\usepackage{xspace}
\usepackage[dvipsnames]{xcolor}
\usepackage{balance}
\usepackage{outlines}
\usepackage{caption}
\usepackage{multirow}
\usepackage[]{url}
\usepackage{todonotes}
\usepackage{algorithm}
\usepackage{algpseudocode}
\usepackage{tabularx}
\usepackage{array}
\usepackage{listings}
\usepackage[font=small,labelfont=bf]{caption}
\usepackage{xurl}
\usepackage[capitalise]{cleveref}
\usepackage[inline,shortlabels]{enumitem}
\setitemize{noitemsep,topsep=0pt,parsep=0pt,partopsep=0pt,leftmargin=1em}
\usepackage{makecell}
\usepackage{multicol}
\usepackage{subcaption}
\usepackage{bm} %
\usepackage{comment}

\usepackage[table]{xcolor} %

\usepackage[font={small,bf},labelfont={small,bf},textfont={small},belowskip=-5pt]{caption}
\usepackage[font={small,bf},labelfont={small,bf},textfont={small},belowskip=-6pt,aboveskip=1pt, labelformat=simple]{subcaption}

\newcommand{\eat}[1]{}
\ifcomm
    \newcommand{\allnotes}[1]{#1} 
\else
    \newcommand{\allnotes}[1]{} 
\fi

\usepackage[english]{babel}
\usepackage{blindtext}

\usepackage{amsmath}
\usepackage{amsthm} %
\usepackage{caption}
\usepackage{graphicx}
\usepackage{xurl}
\usepackage{booktabs}  %
	
\usepackage{array}
	\newcolumntype{C}[1]{>{\centering\let\newline\\\arraybackslash\hspace{0pt}}m{#1}}

\newcommand{\bl}{\begin{enumerate*}[(1)]}
\newcommand{\el}{\end{enumerate*}}

\usepackage{pifont} %

\crefname{equation}{Eq.}{Eqs.}
\Crefname{equation}{Eq.}{Eqs.}
\crefname{figure}{Fig.}{Figs.}
\Crefname{figure}{Fig.}{Figs.}
\crefname{table}{Table}{Tables}
\Crefname{table}{Table}{Tables}
\crefname{property}{Property}{Properties}
\Crefname{property}{Property}{Properties}
\crefname{theorem}{Thm.}{Thms.}
\Crefname{theorem}{Thm.}{Thms.}
\let\oldappendix\appendix
\renewcommand{\appendix}{%
  \oldappendix%
  \crefalias{section}{appendix}%
}

\AtBeginDocument{%
    \crefname{appendix}{}{}%
    \Crefname{appendix}{}{}%
    \crefname{section}{}{}%
    \Crefname{section}{}{}%
    
    \crefformat{appendix}{Appendix~#2#1#3}
    \Crefformat{appendix}{Appendix~#2#1#3}
    \crefrangeformat{appendix}{Appendices~#3#1#4--#5#2#6}
    \Crefrangeformat{appendix}{Appendices~#3#1#4--#5#2#6}
    \crefmultiformat{appendix}{Appendices~#2#1#3}{ and~#2#1#3}{, #2#1#3}{ and~#2#1#3}
    \Crefmultiformat{appendix}{Appendices~#2#1#3}{ and~#2#1#3}{, #2#1#3}{ and~#2#1#3}

    \crefname{subappendix}{}{}%
    \Crefname{subappendix}{}{}%
    \crefformat{subappendix}{Appendix~#2#1#3}
    \Crefformat{subappendix}{Appendix~#2#1#3}
    \crefrangeformat{subappendix}{Appendices~#3#1#4--#5#2#6}
    \Crefrangeformat{subappendix}{Appendices~#3#1#4--#5#2#6}
    \crefmultiformat{subappendix}{Appendices~#2#1#3}{ and~#2#1#3}{, #2#1#3}{ and~#2#1#3}
    \Crefmultiformat{subappendix}{Appendices~#2#1#3}{ and~#2#1#3}{, #2#1#3}{ and~#2#1#3}

    \crefformat{section}{\S#2#1#3}
    \Crefformat{section}{\S#2#1#3}
    \crefrangeformat{section}{\S\S#3#1#4--#5#2#6}
    \Crefrangeformat{section}{\S\S#3#1#4--#5#2#6}
    \crefmultiformat{section}{\S\S#2#1#3}{ and~#2#1#3}{, #2#1#3}{ and~#2#1#3}
    \Crefmultiformat{section}{\S\S#2#1#3}{ and~#2#1#3}{, #2#1#3}{ and~#2#1#3}

    \crefname{subsection}{}{}%
    \Crefname{subsection}{}{}%
    \crefname{subsubsection}{}{}%
    \Crefname{subsubsection}{}{}%
    \crefformat{subsection}{\S#2#1#3}
    \Crefformat{subsection}{\S#2#1#3}
    \crefrangeformat{subsection}{\S\S#3#1#4--#5#2#6}
    \Crefrangeformat{subsection}{\S\S#3#1#4--#5#2#6}
    \crefmultiformat{subsection}{\S\S#2#1#3}{ and~#2#1#3}{, #2#1#3}{ and~#2#1#3}
    \Crefmultiformat{subsection}{\S\S#2#1#3}{ and~#2#1#3}{, #2#1#3}{ and~#2#1#3}
    \crefformat{subsubsection}{\S#2#1#3}
    \Crefformat{subsubsection}{\S#2#1#3}
    \crefrangeformat{subsubsection}{\S\S#3#1#4--#5#2#6}
    \Crefrangeformat{subsubsection}{\S\S#3#1#4--#5#2#6}
    \crefmultiformat{subsubsection}{\S\S#2#1#3}{ and~#2#1#3}{, #2#1#3}{ and~#2#1#3}
    \Crefmultiformat{subsubsection}{\S\S#2#1#3}{ and~#2#1#3}{, #2#1#3}{ and~#2#1#3}
}

\newcommand{\thm}[1]{Thm.~\ref{#1}} %
\newcommand{\fig}[1]{Fig.~\ref{#1}} %
\newcommand{\inv}[1]{Inv.~\ref{#1}} %

\newcommand{\be}{\begin{equation}}
\newcommand{\ee}{\end{equation}}

\providecommand{\vs}{{vs.}\xspace}
\providecommand{\ie}{{i.e.,}\xspace}
\providecommand{\eg}{{e.g.,}\xspace}
\providecommand{\etc}{{etc.}\xspace}

\ifcomm
    \newcommand{\mycomm}[3]{{\footnotesize{{\color{#2} {[#1: #3]}}}}}
    \newcommand{\Fmycomm}[3]{{\color{red} \footnote{{{\color{#2} {[#1: #3]}}} }}}
    \newenvironment{AddBlock}{\color{blue}}{}
\else
    \newcommand{\mycomm}[3]{}
    \newcommand{\Fmycomm}[3]{}

\fi

\newcommand{\I}[1]{}

\newcommand{\T}[1]{\par\smallskip\noindent\textbf{#1}} %
\newcommand{\Ts}[1]{\par\noindent\textit{#1}} %
\newcommand{\mypar}[1]{\T{#1.\xspace}}
\newcommand{\mypars}[1]{\Ts{#1.\xspace}}

\newtheoremstyle{boldthm}{}{}{\itshape}{}{\bfseries}{.}{ }{\thmname{#1}\thmnumber{ #2}\thmnote{ (#3)}} %
\theoremstyle{boldthm}
\newtheorem{invariant}{Invariant}
\newtheorem{theorem}{Theorem}%
\newtheorem{property}{Property} %

\newcommand{\bp}{\begin{proof}}
\newcommand{\bpo}{ \begin{proof}[Proof Outline] }
\newcommand{\ep}{\end{proof}}       %
\newcommand{\proofof}[1]{\begin{proof}[Proof of #1]} %

\newcommand{\para}[1]{\left( #1 \right)}        %
\newcommand{\brac}[1]{\left\{ #1 \right\}}
\newcommand{\sbrac}[1]{\left[ #1 \right]}
\newcommand{\floor}[1]{\left\lfloor #1 \right\rfloor} 
\newcommand{\ceil}[1]{\left\lceil #1 \right\rceil} %

\newcommand{\unit}[1]{\;\mathrm{#1}} %
\usepackage{upgreek} %
\newcommand{\us}{\unit{\mu s}}
\newcommand{\p}[1]{\Pr \para{#1}}  %
\newcommand\bigO[1]{\ensuremath{\mathcal{O}\left(#1\right)}}
\renewcommand{\th}{\ensuremath{^\text{th}}}

\newcommand{\cmark}{\ding{51}}
\newcommand{\xmark}{\ding{55}}
\newcommand{\vx}{\checkmark\kern-1.1ex\raisebox{.7ex}{\rotatebox[origin=c]{125}{--}}} %

\newcommand{\dr}{DR\xspace} 
\newcommand{\DR}{\dr} 

\newcommand{\htsim}{\textit{htsim}\xspace} 
\newcommand{\lblabel}[1]{\textsc{#1}\xspace}
\newcommand{\hpkt}{\lblabel{RPS}} %

\newcommand{\hpktar}{\lblabel{REPS}}
\newcommand{\reps}{\hpktar}
\newcommand{\spktar}{\lblabel{AR}}
\newcommand{\ar}{\spktar}
\newcommand{\jsq}{\lblabel{JSQ}}
\newcommand{\rsq}{\lblabel{RSQ}}
\newcommand{\rps}{\lblabel{RPS}}

\newcommand{\simplerr}{\lblabel{Simple RR}}
\newcommand{\hdr}{\lblabel{DRB}} %
\newcommand{\drb}{\hdr}
\newcommand{\ethereal}{\lblabel{Ethereal}}

\newcommand{\Ofan}{Ofan} %
\newcommand{\name}{\lblabel{\Ofan}} 
 
\providecommand{\sysname}{{\Ofan}\xspace}
 
\newcommand{\pipe}{\lblabel{p\Ofan}} 
\newcommand{\four}{\lblabel{4\Ofan}} 
\newcommand{\one}{\lblabel{1\Ofan}} 

\newcommand{\drill}{\lblabel{DRILL}}
\newcommand{\choice}{\lblabel{Choice}}
\newcommand{\conga}{\lblabel{CONGA}}
\newcommand{\dlb}{\lblabel{DLB}}
\newcommand{\sglb}{\lblabel{SGLB}}

\newcommand{\m}{m}
\newcommand{\n}{n}
\newcommand{\C}{C}
\newcommand{\q}{q(\m)}
\newcommand{\qup}{q_{\uparrow}(\m)}
\newcommand{\qdown}{q_{\downarrow}(\m)}
\renewcommand{\k}{k}
\newcommand{\w}{w} %
\renewcommand{\a}{a}
\newcommand{\pkt}{L_{\text{PKT}}}

\newcommand{\rhomax}{\rho_{\text{max}}}
\renewcommand{\H}{H}
\newcommand{\Td}{T_d}
\newcommand{\Ta}{T_a}
\newcommand{\Tg}{T_g}
\renewcommand{\d}{T'_d}
\renewcommand{\a}{T'_a}
\renewcommand{\p}{T_p}

\begin{document}

\date{}

\title{\Large \bf \Ofan: Optimal Load Balancing for AI Training}

\newcommand{\aut}[2]{#1\texorpdfstring{$^{#2}$}{(#2)}}  %
\author{ \rm
  \aut{Sarah McClure}{1} \quad
  \aut{Evyatar Cohen}{2} \quad
  \aut{Jakob Krebs}{2} \quad
  \aut{Alexander Shpiner}{3} \\
  \rm
  \aut{Mark Silberstein}{2,3}\quad
  \aut{Scott Shenker}{1,4} \quad
  \aut{Sylvia Ratnasamy}{1} \quad
  \aut{Isaac Keslassy}{1,2}
  \\ \\%[0.6em]
  {%
  $^1$ \textit{UC Berkeley}\quad
  $^2$ \textit{Technion} \quad
  $^3$ \textit{NVIDIA} \quad
  $^4$ \textit{ICSI}
  }
}%

\maketitle

\begin{abstract}
The extreme collective completion time (CCT) demands of AI workloads challenge existing packet spraying algorithms, which can have trouble efficiently load-balancing workloads that are sent at full line rates.

We trace this to a structural cause: on a fat tree, once a packet picks its upward path, the downward path to its destination is unique, so destination-oblivious schemes cannot undo the imbalance it creates. We prove that such schemes can suffer from $\Theta(\sqrt{m})$ queueing for messages of size $m$, thus eventually triggering rate reductions by the congestion control. Instead, we suggest \name, a switch-based destination-aware LB scheme that can reach $O(1)$ queueing. We also present its  \pipe variant that fits the pipe-based architecture of current switches. Our P4 implementation shows that it consumes modest resources. An end-to-end FSDP2 evaluation with Llama-3 405B-parameter models shows that \name cuts CCT inflation by $16$--$39\times$ when compared to existing algorithms.
\end{abstract}

\section{Introduction}
\label{sec:intro}

\T{Thirty years of load balancing.}
Datacenter networks offer a rich set of parallel paths, and a long line of load-balancing (LB) algorithms has been proposed to exploit them.
That line was long constrained by the transport protocol's intolerance for reordering, and so operated at a \emph{coarse} granularity: per-flow hashing~\cite{ecmprfc} and its weighted variants~\cite{wcmp,ietf-idr-link-bandwidth-24}, subflows~\cite{mptcp,alibaba,ethereal}, host-based path adaptation~\cite{plb,falcon,meta}, and flowlet switching~\cite{kandulaflowlet,letitflowlet,clove}.
All of these pin a unit of traffic, whether a flow, a subflow, or a flowlet, to a single path for its lifetime.
On the contrary, relying on NICs that accept out-of-order delivery~\cite{nvidia-ooo,uec-spec,conweave}, \textit{spraying} algorithms load-balance at the granularity of a single packet. 
These include (i)~random packet spraying (RPS)~\cite{rps} as in Alibaba's Stellar~\cite{alibaba-stellar}, (ii)~host-based congestion-aware spraying algorithms like REPS and MRC~\cite{strack, bonatoreps, MRC} as in the Ultra Ethernet Consortium~\cite{uec-spec}, and (iii)~switch-based algorithms that steer packets based on local congestion like \drill~\cite{drill} and Nvidia's Spectrum-X~\cite{nvidia_AR_whitepaper,nvidia_AR_whitepaper2,spectrum-x} adaptive routing (AR)~\cite{duato650interconnection,gomez2007deterministic}, or also take into account congestion at neighboring switches~\cite{bytedance-glb,conga}, like Broadcom's GLB and ByteDance's \sglb.

\begin{figure}
    \centering
    \begin{subfigure}[]{1\columnwidth}
    \centering
        \includegraphics[width=0.5\columnwidth]{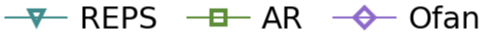}
    \end{subfigure}
    \begin{subfigure}[b]{0.31\columnwidth}
        \includegraphics[width=\columnwidth]{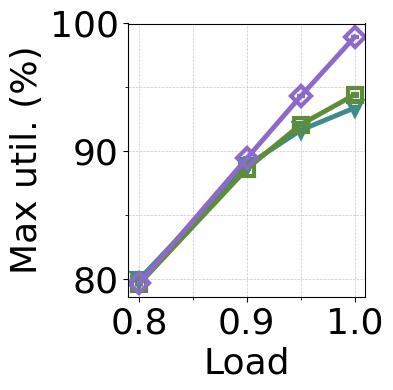}
        \caption{Max utilization}
        \label{fig:load1}
    \end{subfigure}%
    \hspace{0.02\columnwidth}
    \begin{subfigure}[b]{0.31\columnwidth}
        \includegraphics[width=\columnwidth]{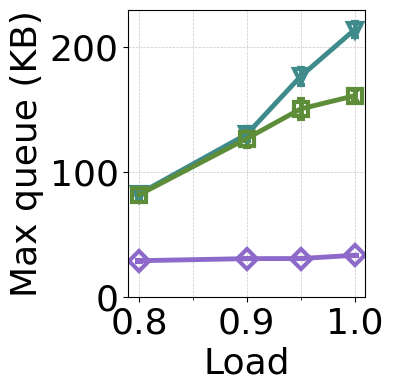}
        \caption{Max queue size}
        \label{fig:load2}
    \end{subfigure}
    \hspace{0.02\columnwidth}
    \begin{subfigure}[b]{0.31\columnwidth}
        \includegraphics[width=\columnwidth]{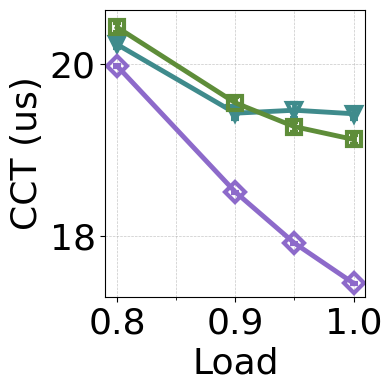}
        \caption{CCT}
        \label{fig:load3}
    \end{subfigure}    
    \caption{Utilization and max queue size as load increases.
    }
    \label{fig:load}
\end{figure}

\T{AI training.}
Recent AI training workloads have new demands.  They focus on minimizing collective completion time (CCT), \ie the worst-case flow completion time among all flows in a collective, and therefore push every source to send at $100\%$ of its line rate~\cite{ncclcollectives,nccl,megascale-moe}.

Unfortunately, as \fig{fig:load} shows, existing algorithms fail to meet this demand. We simulate a three-level fat-tree with a random permutation workload using $1\unit{MB}$ messages (as detailed in \cref{sec:sensitivity}), and sweep the offered load toward $100\%$.
We also remove congestion control entirely, pacing each source at a fixed rate dictated by the load, so that we only measure the effects of load balancing (\S\ref{sec:approach}).
The results display the limits of current state-of-the-art spraying algorithms. \Cref{fig:load1} shows that for REPS and AR, utilization cannot keep up with load beyond load = 0.9 (it disconnects from the $y=x$ diagonal), and at load 1 it saturates below 95\%. Likewise, \Cref{fig:load2} illustrates how their queue sizes of \reps and \ar grow significantly despite the small message size. Finally, \Cref{fig:load3} shows that their CCTs  improve less beyond a load of $0.9$, also pointing to some saturation. 
In this paper, we claim that all these issues come from a sub-optimal load balancing: as packets do not reach all queues equally, some queues become overloaded (high queue sizes) while others remain underutilized (low utilization), thus impacting the CCT.
We introduce a new LB algorithm, \name. These plots show how its efficient LB properties enable \name to achieve about 99\% utilization, much smaller queue sizes, and a clearly better CCT. Note that once congestion control is switched back on, we would expect ECN congestion signals (and PFC pauses for larger messages) to occur for REPS and AR, yielding an even worse CCT gap.

\T{\name.} In the paper, we introduce \name, a new LB algorithm that relies on three key ideas:

\T{(1) Per-destination LB.}
In fat trees, LB has freedom only on the way up, while the path down to a destination is unique.
Existing switch-based LB schemes typically select outgoing port to optimize upward LB. 
This balances the northbound links beautifully. But once a packet has picked its upward path, 
no downstream switch can correct a bad choice.
A locally-efficient but destination-oblivious LB scheme therefore spreads each destination's traffic unevenly when going up, and that unevenness cannot be addressed when going down. 

In this paper, we consider random permutation workloads with message size $\m$ and assume that all hosts blast traffic at 100\% load. We prove that  randomized spraying algorithms incur an average queue size of $\Omega(\sqrt{\m})$ (\thm{thm:sqrt}). We also show that in evaluations, additional destination-oblivious algorithms follow a similar heuristic growth. 
In practice, these bounds  mean that as message sizes grow, queues grow, and therefore congestion control algorithms (CCAs) will be forced by ECNs and PFCs to slow down sending rates and increase CCTs, hurting AI training performance.

Instead, inspired by host-based DRB~\cite{drb} and \ethereal~\cite{ethereal}, we propose a \emph{destination-based rotation} (\DR) discipline, in which consecutive packets to a given destination are deterministically rotated across paths. Using  $N\,D/D/1$ queueing models~\cite{menth2010packet, roberts1996broadband}, we show that destination-based rotation can achieve optimal $\bigO{1}$ queue bounds even at 100\% loads (\thm{thm:DRB}).

\T{(2) Switch-based LB.}
We argue that the switches are a better home for \DR than the hosts.
First, the switch already knows the full topology and its link failures using BGP, while the host must either implement BGP in the NIC or keep re-inferring it indirectly. 
Second, and this is the property we find most surprising, a switch can \emph{consolidate} traffic for a given destination, making packets from different sources share the same rotation pointer.
A host, seeing only its own flows, cannot do this.

\T{(3) Pipe-based LB.} 
Finally, we show that we can exploit current \textit{pipe}-based switch implementations (called pipes in Broadcom's Tomahawk switches and slices in Cisco’s Silicon One switches). Pipes aggregate a few ports (\eg 4 ports at 800Gb/s each) and guarantee atomic Read-Modify-Write (RMW) accesses to local registers at line rate. Thus, we introduce \pipe, a more practical variant of \name that stores a local pointer at each pipe for each destination instead of a shared pointer accessible across the whole switch. 

\T{Contributions.} In summary, we make the following contributions:
\begin{itemize}
    \item We demonstrate that existing randomized-spraying algorithms suffer from an intrinsic queuing lower bound that grows with the message size $m$ as $\Omega(\sqrt{\m})$. We also demonstrate that some LB algorithms like join-the-shortest-queue (JSQ) even suffer from newly discovered synchronization issues at full line rate, and suffer from a $\Theta(m)$ queue growth. On the contrary, we show that destination-based rotation achieves $\bigO{1}$ queue bounds that enable larger messages without queue growth (\cref{sec:optimality}).
    \item We present \name, a switch-based \DR, and explain the intuition behind its consolidation properties (\S\ref{sec:algo}).
    \item We further present how to implement its variant \pipe in a pipe-based switch, and how it requires limited  resources in a P4 implementation for an Intel Tofino switch (\cref{sec:implementation}).
    \item Finally, we show how network size, message size, packet size, collective type,  link failures, topology, incremental deployment and congestion control affect the results. We find that \name and \pipe keep outperforming existing algorithms across a wide variety of conditions. 
    We further evaluate an AI training scenario with Llama-3 (405B parameters) using FSDP2, and rely on two different recent CCAs designed for spraying: UEC's NSCC~\cite{uec,uec-spec} and MSwift~\cite{mswift}. \name reduces the CCT increase beyond a CCT lower-bound by at least $16\times$ with NSCC and $39\times$ with MSwift when compared to AR and REPS  (and \pipe with four ports per pipe reduces by $15\times$ and $36\times$, respectively). (\S\ref{sec:sensitivity}).
\end{itemize}

The rest of the paper is organized as follows:
\S\ref{sec:background} states the workload properties our analysis assumes and the methodology that isolates the load-balancing layer;
\S\ref{sec:optimality} explain the intuition behind per-destination LB, proves the limits of most existing algorithms, and offers \DR as a way forward;
\S\ref{sec:algo} presents \name and \S\ref{sec:implementation} its realization in switch hardware;
\S\ref{sec:sensitivity} evaluates it, and \S\ref{sec:discussion} discusses what remains open.%
\ifblind
\else
\footnote{This paper is a follow-up on \cite{LB-arxiv}, which had two parts: (1)~systematically evaluating leading load-balancing designs, while decoupling them from specific congestion control and loss recovery stacks; and (2)~introducing and studying \name. This paper focuses on the second part.}
\fi

\section{Background and Methodology}
\label{sec:background}

\subsection{AI Training Workloads}

AI training alternates rounds of computation with rounds of communication, where the communication is one of a few standard collectives, such as AllReduce, AllGather, and AlltoAll, drawn from a small set of libraries~\cite{ncclcollectives,nccl,parallelism}.
Two properties of this traffic matter for load balancing:

\mypar{Uniform} The nodes in a collective run the same communication library and the same transport and NICs, and most collectives are symmetric: every participant has the same amount of data to send, with the same packet size. 

\mypar{Synchronous} All nodes begin at roughly the same time~\cite{ethereal}. This cuts both ways. It helps load balancing, since there are no newly arriving flows to adapt to mid-collective. But it also means that every source steps through its packets in lockstep, which can cause a load balancer's port choices to synchronize across sources and degrade catastrophically (as shown in \S\ref{sec:optimality}).

Together with the last-byte nature of CCT, these properties also mean that the traffic is long-lived and gapless: techniques that rebalance during idle periods, such as PLB or flowlet switching, are less helpful. 
We also assume the cluster runs the job in isolation, as is common at scale~\cite{meta,alibaba,openai-ms-blog}.

Throughout this paper, we will focus on three types of generic traffic matrices that reflect generic AI training workloads~\cite{meta100k}: (i)~\textit{random permutations}, (ii)~\textit{all-to-all}, and (iii)~\textit{MoE (all-to-all-v)}. In particular, we will show that  all-to-all is the easier one to load-balance, and all-to-all-v suffers from a large incast that reduces the impact of load-balancing. Therefore, we will most focus on permutations, where the LB scheme can make the biggest impact. 

\subsection{Isolating the Load-Balancing Layer}
\label{sec:approach}
Load balancing does not run alone: congestion control sets sending rates, and loss recovery decides what to retransmit. Both react to the queues that load balancing creates, so measuring a full stack conflates the three. Our methodology in the analysis is therefore to remove the other two and measure load balancing on its own\ifblind{.}\else \xspace(as in~\cite{LB-arxiv}). \fi In evaluations (\cref{sec:sensitivity}), we add  CCAs back, and find that they do not change the main results, including the ranking among LB algorithms.

\mypar{Congestion control} 
A perfect load balancer makes the network's entire capacity available, and an ideal congestion control algorithm (CCA) would provide each host with its fair share of that capacity. 
Under uniform AI collectives, the fair share is the same for all hosts. Thus, we cancel the CCA and allow instead each host to send at the maximum fraction of the line rate that the network can support, which we call $\rhomax$. 
In a fault-free fabric, each host can blast its traffic at full line rate. 
Under failures, $\rhomax$ is the highest rate at which all flows could send if traffic were balanced perfectly across all valid shortest paths. In addition, if a host has several flows, it simply performs round-robin (RR) to schedule them.
Any queue that then forms, and any capacity left unused, is attributable to the load balancer and nothing else. This is what \fig{fig:load} measures.

\mypar{Loss recovery} Similarly, we attempt to avoid any losses: we assume large buffers and do not consider failed links (unless mentioned).

This idealization is deliberate, and it is what lets us state an optimality criterion at all: a bound proved against a fixed-rate source is a property of the load balancer, not of a particular CCA's reaction to it. It is also what makes the criterion demanding, since there is no rate reduction available to rescue a scheme whose queues grow.

\section{Per-Destination LB}
\label{sec:optimality}

In this section, we present theoretical analysis that gives us a more rigorous understanding of optimality in load balancing approaches, and leads us to later propose \sysname, a new and optimal approach, in \S\ref{sec:algo}.

\begin{table}
\caption{Considered LB schemes.
}
\label{tab:models}
\centering
\setlength{\tabcolsep}{2pt} %
\resizebox{\linewidth}{!}{%
\scriptsize
    \begin{tabular}{l l l c l l l}
        \toprule
        \textbf{Scheme} & \textbf{Where} & \textbf{Unit} & \makecell{\textbf{Dest.-}\\\textbf{based}} & \makecell{$\qup$\\\textbf{(model)}} & \makecell{$\qdown$\\\textbf{(model)}} & \textbf{Assumptions} \\
        \midrule
        \multicolumn{7}{l}{\emph{Destination-oblivious LB}}\\
        \simplerr & switch & packet & \xmark & $\Theta(1)$ & $\Theta \para{\m}$ & --- \\
        \jsq~\cite{duato650interconnection} & switch & packet & \xmark & $\Theta(1)$ & $\Theta \para{\m}$ & --- \\
        \conga~\cite{conga} & switch & packet & \xmark & --- & --- & Pkt-based \\
        \choice~\cite{drill} & switch & packet & \xmark & --- & --- & --- \\
        \drill~\cite{drill} & switch & packet & \xmark & --- & --- & Pkt-based \\
        \dlb~\cite{broadcom-dlb} & switch & packet & \xmark & --- & --- & Local \sglb \\
        \sglb~\cite{bytedance-glb} & switch & packet & \xmark & --- & --- & --- \\
        \rps~\cite{rps} & host & packet & \xmark & $\Theta \para{\sqrt{\m}}$ & $\Theta \para{\sqrt{\m}}$ & --- \\
        \rsq & switch & packet & \xmark & $\Theta \para{\sqrt{\m}}$ & $\Theta \para{\sqrt{\m}}$ & --- \\
        \reps~\cite{bonatoreps} & host & packet & \xmark & --- & --- & ECN thres.: 10\% \\
        \ar~\cite{nvidia_AR_whitepaper} & switch & packet & \xmark & --- & --- & Quanta: 5-10-20\% \\

        \midrule
        \multicolumn{7}{l}{\emph{Destination-based LB}}\\
        \hdr~\cite{drb} & host & packet & \cmark & $\Theta(1)$ & $\Theta(1)$ & --- \\
        \ethereal~\cite{ethereal} & host & subflow & \cmark & $\Theta(1)$ & {$\Theta(1)$} & --- \\
        \four (\S\ref{sec:implementation}) & switch & packet & \cmark & $\Theta(1)$ & {$\Theta(1)$} & --- \\
        \name (\S\ref{sec:algo}) & switch & packet & \cmark & $\Theta(1)$ & $\Theta(1)$ & --- \\
        \bottomrule
    \end{tabular}%
} %
\end{table}

\subsection{LB Algorithms}
\label{sec:space}

\Cref{tab:models} lists load-balancing (LB) schemes that we consider in this section. They are all packet-based spraying algorithms except for \ethereal, which relies on subflows, and can be either host- or switch-based. It also lists some assumptions we made in simulations: \eg the initial \conga fits flowlets, and we adapt it to packet spraying to fit our workloads. Likewise, Broadcom's \dlb is not described in detail, but the \sglb paper~\cite{bytedance-glb} sketches it as an \sglb-like algorithm that only uses the 2 local types of information out of the 5 described ones, so we implement it in this way. While all schemes take into account the destination to decide at each switch whether to send packets up or down, the \textit{destination-oblivious LB} schemes consider all packets that go up in the same way. 

We also introduce a few fundamental LB algorithms in the table, as they help convey intuition: \textit{\simplerr} consists of  a simple round-robin model in which the switch load-balances all northbound packets across a set of next-hop ports in round-robin order. In \textit{Join-the-Shortest-Queue (JSQ)}, each switch selects the next-hop port with the shortest queue for a northbound packet, with random tie-breaking between equal-length queues. It is a non-quantized version of \ar. In addition, in \textit{Randomized Switch Queueing (\rsq)}, each switch sends each northbound packet to a random port, and is analytically equivalent to host-based \rps, which changes the packet entropy (\eg source port) to modify the ECMP header-based hashed port decision at each switch.

\begin{figure*}
    \centering
    \begin{subfigure}[b]{0.53\linewidth}
        \includegraphics[width=\textwidth]{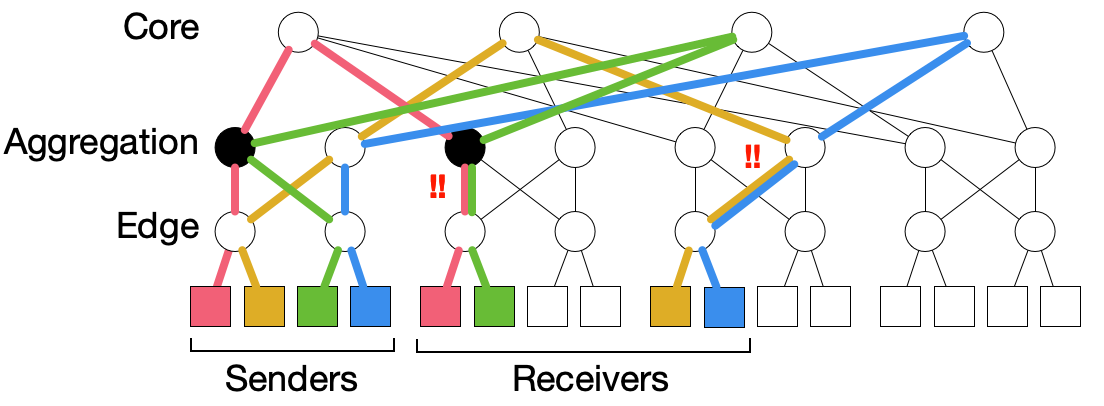}
        \caption{Destination-oblivious LB}
        \label{fig:cong1} %
    \end{subfigure}%
    \hfill
    \begin{subfigure}[b]{0.45\linewidth}
        \includegraphics[width=\textwidth]{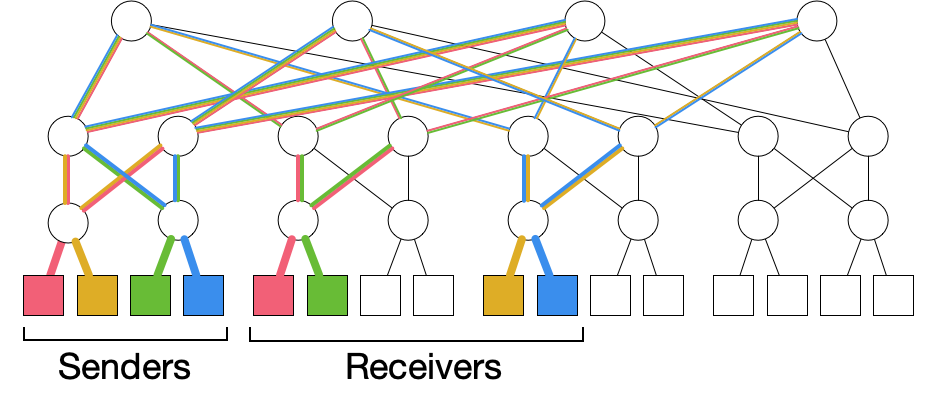}
        \caption{Destination-based LB}
        \label{fig:cong2}
    \end{subfigure}%
    \caption{Impact of  destination-based LB. The three-level fat tree with $\k=4$ includes 16 hosts, four pods (each with two edge and two aggregation switches), and  four core switches. (a)~Destination-oblivious LB can result in conflicts on the way down (red ``!!''), while (b)~destination-based LB avoids them.
    }
    \label{fig:cong}
\end{figure*}

\subsection{The Downlink Problem}
\label{sec:downlink}

Throughout the paper, we consider three-level fat-tree topologies. In such topologies, LB flexibility exists strictly in upward paths, while downward paths to any destination are fixed. 
Consequently, while existing  destination-oblivious LB algorithms may perfectly optimize for local LB properties on the way up, they may also distribute traffic bound for the same edge or aggregation switch unevenly across upward paths, thus creating downstream bottlenecks.

\Cref{fig:cong1} illustrates how it is not sufficient for switches to simply balance traffic on the up routes. Each switch in the senders pod in the figure nicely sends half of its ingress traffic on each link, thus all up links are equally loaded. However, in this extreme example, all traffic to any destination edge switch ends up on the same aggregation switch in the senders pod. While this is fine for the up routes, this can cause collisions on the down routes, where destinations determine the route, not the LB. For example, the red and green paths meet at the first aggregation switch of the first pod, and therefore have to meet again at the first aggregation switch of the second pod (both drawn in black), and therefore conflict on the way down.
In \cref{fig:cong2}, however, switches not only balance load equally across all uplinks, they also ensure that the traffic sent to a particular destination is spread evenly. This ensures that there are no collisions on the down routes as an equal share of each flow traverses every possible path.
Thus, this example shows how a destination-oblivious scheme can balance one side of the network and leave the other to chance, effectively causing statistical imbalance.

\begin{figure}
    \centering
    \begin{subfigure}[]{0.65\columnwidth}
        \centering
        \includegraphics[width=\linewidth]{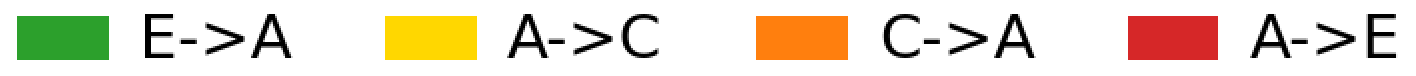}
    \end{subfigure}%
    \par 
    \begin{subfigure}[t]{\columnwidth}
        \centering
        \includegraphics[width=0.9\linewidth]{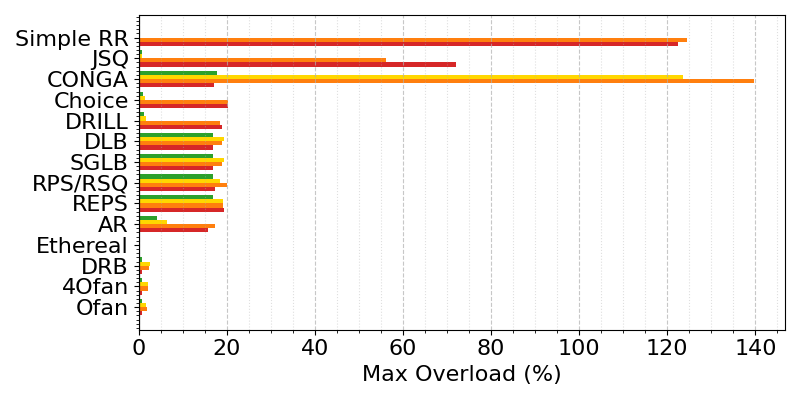}%
    \end{subfigure}
    \caption{Worst-case link overload at each stage. %
    }
    \label{fig:overload}
    \vspace{-0.5em}
\end{figure}

\fig{fig:overload} quantifies the imbalance. We plot the worst-case link overload beyond the ideal, uniform amount, at each of the four inter-switch stages: edge to aggregation ($E\to A$), aggregation to core ($A\to C$), core to aggregation ($C\to A$), and aggregation to edge ($A\to E$). We assume a random inter-pod permutation that forces all flows to cross pods and in which all links are ideally equally loaded. 
The asymmetry is stark. For example, \simplerr and \jsq achieve essentially perfect balance on the two \emph{uplink} stages and then collapse on the two \emph{downlink} stages; \choice and \drill also exhibit similar profiles; the other destination-oblivious schemes suffer at both uplink and downlink. Only the destination-aware schemes are near-flat across all four, as we will detail below. In particular, \ethereal achieves ideal balance because it can decompose a single message into many subflows with one small packet in each.

\subsection{Queue-Size Lower Bounds}
\label{sec:theory}

We now want to provide a fundamental analysis of the performance of each LB scheme. We consider $n$ hosts in a three-level $\k$-ary fat tree without link failures. The hosts perform a random permutation collective, such that each host sends a single flow to a distinct random host (random derangement).  All hosts send traffic at line rate, buffers are infinite, and there are no packet losses nor ACKs. 

We define optimality in terms of the average queue length that a scheme produces.
Collectives can have a large range of message sizes~\cite{meta100k}, and we want to ensure that as they get larger, the queue size does not grow without bound.
For a message size $\m$, let $\q$ be the average queue size across \emph{all} switch queues, over the time it takes for \emph{all} hosts to finish sending. An \emph{optimal} load-balancing algorithm is one that achieves $\q = \bigO{1}$, independent of $\m$. We also decompose $\q=\qup+\qdown$ to obtain the contribution of the uplink and downlink queues, respectively.

Queue size is the right target for several reasons. First, it is strongly correlated with CCT performance: given fixed propagation delay and our assumptions of uniform traffic, ideal sending rates, and zero loss, queueing is the dominant variable component of the CCT, as we validate in \S\ref{sec:theory-sims}. 
Second, it is also a dominant factor in congestion control, as it incurs ECNs, PFCs and buffer overflows. 
Third, it directly affects host NIC reassembly buffer size and complexity.
Finally, an $\bigO{1}$ queue is valuable in itself, since it makes performance predictable.

\mypar{Square-root growth}
We find that in algorithms based on random spraying, both uplink and downlink queues grow as the square root of the message size. The proof, presented in \cref{sec:sqrt}, is based on the fact that the number of packets that land on a given uplink or downlink is a sum of independent random variables, each variable corresponding to whether a packet lands on that link. It finds that the sum has a standard deviation of $\sqrt\m$, which translates to a queue growth of $\sqrt\m$. For downlinks, the proof relies on an approximation that the variations in uplink queue delays are negligible. Later, we show that simulations confirm the square-root growth (\cref{fig:max}).

\begin{theorem}
\label{thm:sqrt}
\rsq and \hpkt, which are based on random spraying, see both their uplink and downlink queues grow as the square root of the message size: $\qup=\Theta(\sqrt\m)$ and $\qdown=\Theta(\sqrt\m)$ (neglecting upstream queueing delays). Hence
\begin{equation} \label{eq:sqrt}
    \q = \Theta \para{\sqrt{\m}}
\end{equation}
\end{theorem}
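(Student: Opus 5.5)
I model each switch output link as a discrete-time queue serving one packet per slot, and write its queue length as a Lindley recursion $Q(t)=\max_{s\le t}\bigl(A(s,t)-(t-s)\bigr)$, where $A(s,t)$ is the number of packets routed to that link in $(s,t]$. The core of both bounds is that, under \rsq (and thus \hpkt, which is analytically equivalent), the arrivals to any link form a sum of independent Bernoulli choices. Their mean exactly matches the link capacity because the load is $100\%$ and the fat tree is non-blocking. The queue therefore behaves like a critically loaded random walk, whose maximum over a horizon of length $\Theta(\m)$ is $\Theta(\sqrt\m)$.

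\textbf{Uplinks.} Consider an edge switch with $\k/2$ hosts, each sending one packet per slot at line rate, and $\k/2$ uplinks. In every slot, each of the $\k/2$ arriving packets independently picks an uplink uniformly at random. A fixed uplink therefore receives $X_t\sim\mathrm{Bin}(\k/2,2/\k)$ packets per slot. These counts are i.i.d. across slots, with mean $1$ and constant variance $\sigma^2>0$. The increments $X_t-1$ are i.i.d. with mean $0$ for the $\m$ slots during which the hosts send, so the queue equals the reflected random walk $Q(t)=S_t-\min_{s\le t}S_s$. I then make two standard observations.
\begin{itemize}
\item $Q(t)$ has the same law as $\max_{s\le t}S_s$.
\item $\mathbb{E}[\max_{s\le t}S_s]\sim\sigma\sqrt{2t/\pi}$, by Donsker's theorem or the reflection principle. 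This is $\Theta(\sqrt t)$, with matching upper and lower constants: the upper bound follows from Doob's $L^2$ inequality, $\mathbb{E}\max\le 2\sigma\sqrt t$.
\end{itemize}
Averaging over $t\in[0,\m]$ gives $\Theta(\sqrt\m)$. The queue then drains in time $O(\sqrt\m)$, which does not change the order of the average. Aggregation-to-core uplinks are handled the same way. Their input is the departure process of the edge uplinks, which I idealize as one packet per slot per link. This is where the neglect of upstream delays enters. Since the random choices at the aggregation switch are fresh and independent, the same binomial random-walk argument applies. Averaging over all switch queues, which are a constant fraction uplinks, yields $\qup=\Theta(\sqrt\m)$.

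\textbf{Downlinks.} This is the main obstacle, because arrivals to a downlink are a superposition of packets that traversed different upstream queues. Their timing depends on those queues, so the arrival counts are neither independent nor stationary in any exact sense. Following the statement, I neglect upstream queueing delays: each packet reaches its downlink after a fixed propagation delay from its send slot. Under this approximation, take a core$\to$aggregation link toward pod $P$. Each of the $\Theta(\k)$ packets per slot destined to pod $P$ chooses this core, through its two independent random choices, with probability exactly matching the link's share, so arrivals are again i.i.d. per slot with mean $1$. For the aggregation$\to$edge link, the destination edge switch receives exactly $\k/2$ packets per slot, one from each source it receives from under the derangement. Each packet picks its aggregation switch uniformly, so the counts are $\mathrm{Bin}(\k/2,2/\k)$, and the uplink computation applies verbatim. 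The final edge$\to$host link receives exactly one packet per slot, so it contributes $O(1)$.

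For the lower bound, I must check that the variance is strictly positive whenever at least two sources share the link. For $\k\ge4$ this holds for every aggregation$\to$edge link regardless of the permutation, so $\qdown=\Omega(\sqrt\m)$. The upper bound $O(\sqrt\m)$ follows from Doob's inequality as before. Summing gives $\q=\qup+\qdown=\Theta(\sqrt\m)$, which is \eqref{eq:sqrt}.
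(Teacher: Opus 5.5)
Your overall approach matches the paper's: a Lindley recursion, a critically loaded binomial random walk for the lower bound, and Doob's maximal inequality for the upper bound. The lower bound, however, rests on a false step. You assume that every link you analyze has per-slot mean arrival exactly $1$. For the downlinks you then conclude $\qdown=\Omega(\sqrt m)$ from positive variance ``regardless of the permutation.'' The operative condition is zero drift, not positive variance. Under a random derangement, each host under an edge switch $E$ sends to another host under $E$ with probability $(w-1)/(n-1)$. When this happens, only $j<w$ packets per slot go north at $E$, so each uplink of $E$ has mean $j/w<1$. Symmetrically, if an edge switch $F$ receives an internal flow, the aggregation-to-edge links into $F$ carry only the non-internal flows. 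Their mean is again strictly below $1$. Your count of ``exactly $k/2$ packets per slot'' into $F$ includes internal packets that never use those links. A reflected random walk with negative drift has bounded expectation, so these queues contribute $O(1)$, not $\Omega(\sqrt m)$.

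What is missing is an estimate over the permutation showing that a constant fraction of these queues are exactly critical. The paper obtains it with a union bound over the $w$ hosts of an edge switch: the switch has no internal traffic with probability at least $1-w(w-1)/(n-1)\ge 13/15$, since $n=2w^3$. Restrict your random-walk bound to the uplinks of such edge switches and to the aggregation-to-edge downlinks into them. Each of these two layers is $n$ of the $5n$ queues, so the expected time-average is $\Omega(\sqrt m)$ for both $\qup$ and $\qdown$. Once critical load is secured, your reflected-walk identity works as well as the paper's cruder $Q(t)\ge(A_t-t)^+$ with a binomial mean-absolute-deviation bound.

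The same oversight makes your claims about the aggregation-to-core and core-to-aggregation layers wrong. Their mean load is the fraction of inter-pod flows leaving (resp.\ entering) the pod. That fraction is below $1$ whenever the pod has an intra-pod flow, and there are about $k/4$ such flows in expectation. Also, up to $k^2/4$ packets per slot enter a pod through the cores, not $\Theta(k)$. These layers are not needed for the lower bound, and under-loading only helps the Doob upper bound, so drop them from the lower-bound argument.
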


As we saw in \cref{fig:overload}, some destination-oblivious schemes, such as \drill and \ar,
perform well locally on the uplinks, while others perform more poorly. However, all destination-oblivious schemes suffer from the downlink problem. Intuitively, it is because they simply do not optimize for this downlink and pick uplink ports independently of the exact destination of the packets. We later show that they all exhibit queues that grow with the message size, often in what appears as the square root of $m$ (\cref{fig:max}).

\mypar{Linear growth}
While we proved above a square-root queue growth with the message size, we can also prove a worse queue growth for two more destination-oblivious algorithms (proof in \cref{sec:sync}):

\begin{theorem}
\label{thm:sync}
\simplerr and \jsq keep their uplink queues bounded, $\qup=\Theta(1)$, but suffer from a linear growth of their downlink queues, $\qdown=\Theta(\m)$. Hence
\begin{equation} \label{eq:linear}
    \q = \Theta \para{\m}
\end{equation}
\end{theorem}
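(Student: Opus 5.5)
The plan is to prove the uplink and downlink parts separately. The uplink bound is a direct counting argument. The downlink bound comes from a \emph{synchronization} (pinning) argument: \simplerr and \jsq turn packet spraying into a fixed per-flow path assignment, which behaves like static hashing.

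\textbf{Uplinks.} Work in time slots of one packet transmission time. Because hosts send at line rate in lockstep, an edge switch receives at most $\k/2$ northbound packets per slot, one from each attached host, and it has $\k/2$ uplinks. Under \simplerr, the pointer hands out consecutive packets to consecutive ports. In any window of one slot, each uplink therefore receives at most $\lceil (\k/2)/(\k/2)\rceil=1$ packet, plus a carry-over of at most one packet from a pointer wrap-around. So each uplink queue stays bounded by a small constant.

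Under \jsq, the same bound follows by induction on slots. If all uplink queues hold at most $c$ packets at the start of a slot, the $\k/2$ arrivals are each placed on a currently shortest queue. No queue then receives two packets while another queue receives none, so the maximum stays at most $c$.

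Applying the same argument at the aggregation layer, whose northbound input is at most $\k/2$ packets per slot, gives $\qup=\bigO{1}$. A trivial $\Omega(1)$ holds because every packet occupies its queue for at least its own service time, which gives $\qup=\Theta(1)$.

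\textbf{Pinning.} This is the key step. For \simplerr, each slot advances the edge pointer by exactly $\k/2\equiv 0 \pmod{\k/2}$. If input ports are served in a fixed order within a slot, host $i$ of an edge switch is mapped to the same uplink (pointer offset $+i$) in every slot for the whole message. The same holds at the aggregation switch for each of its input links. Hence every flow follows a single fixed up-path, determined by its source position and not by its destination.

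For \jsq, I would show that after a constant transient the queue states enter a periodic regime. In that regime, the service phases are staggered, so exact ties stop occurring and the choice becomes a deterministic function of input-port order, again pinning each flow. I expect this to be the main obstacle. If ties persist, random tie-breaking yields a fresh random matching each slot and would only give $\sqrt{\m}$-type fluctuations. The first thing to try is to formalize the lockstep model, in which packets are ordered within a slot by input port. Then I would check by induction that queue lengths after the first slot are all equal to $1$ at decision times with a deterministic service order, so that the tie-breaking rule plus port order fixes the mapping. If needed, I would fall back to stating the \jsq result under the deterministic tie-breaking that the synchronized model induces.

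\textbf{Downlinks.} With paths pinned independently of destinations, the random derangement assigns destinations to fixed paths at random. Fix a downlink, say aggregation$\to$edge or core$\to$aggregation. The number of flows whose pinned path uses it behaves like a balls-in-bins count with mean $1$. With probability bounded below by a constant independent of $\m$, two or more flows share it. Such a link receives at least $2$ packets per slot and serves $1$, for all $\m$ slots, so its queue grows linearly and its time-averaged size is $\Omega(\m)$.

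Taking expectation over the permutation, a constant fraction of downlink queues have average size $\Omega(\m)$. This gives $\qdown=\Omega(\m)$. The matching $\bigO{\m}$ upper bound is trivial, since a queue can never hold more than the at most $(\k/2)\m$ packets ever routed to it. Combining with the uplink part yields $\q=\Theta(\m)$, as in \cref{eq:linear}.
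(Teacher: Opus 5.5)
\textbf{Gap: the \jsq pinning step.} This step is where the theorem's content lies, and your proposal does not establish it. In the lockstep model you plan to formalize, \jsq is in fact \emph{not} sticky. When the next slot's packets arrive, each flow's previous packet has already left its port (or, under your slot convention, all ports sit at the same level). So every decision in a slot except the last is a tie among the not-yet-used ports, and random tie-breaking draws a fresh uniform matching every slot. Your planned induction (``tie-breaking rule plus port order fixes the mapping'') therefore fails. As you suspected, this model only produces $\sqrt{m}$-type downlink fluctuations. Falling back to deterministic tie-breaking changes the algorithm, since \jsq is defined with random tie-breaking, so it does not prove the statement.

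The missing idea is desynchronization, as in the paper's proof:
\begin{itemize}
\item Give each flow an independent random phase, uniform over one period $T=T_{DATA}+T_{IPG}$. Then use $T>T_{DATA}$: a flow's own port empties exactly $T_{IPG}$ before its next packet arrives.
\item Call a flow \emph{safe} if no other flow of its (internal-traffic-free) edge switch completes a packet within $T_{IPG}$ before or after its own completions.
\item For a safe flow, at each of its arrivals the other $w-1$ ports are still serving the other flows. Its previous port is the unique empty queue, and no competitor arrives while that port is empty. So \jsq picks that port deterministically, and by periodicity it does so forever.
\item Safety has probability $(1-2T_{IPG}/T)^{w-1}$, a constant. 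You therefore get a constant fraction of sticky flows rather than all of them. That suffices: count only collisions among sticky flows on each downlink, since extra non-sticky traffic can only increase the backlog.
\end{itemize}

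\textbf{Smaller repairs.}
\begin{itemize}
\item \emph{No internal traffic.} Your \simplerr pinning assumes the edge pointer advances by exactly $w$ per period. That holds only when the source edge switch has no internal traffic; otherwise flows rotate over ports. Condition on this event, which has constant probability by a union bound over the switch's $w$ hosts.
\item \emph{Aggregation-layer pinning.} Pinning at the aggregation layer is not true in general: intra-pod traffic changes how many northbound packets advance the aggregation pointer each period. It is also unnecessary. Use only aggregation-to-edge downlinks: a packet leaving its source edge switch through uplink $\ell$ must reach aggregation switch $\ell$ of the destination pod whatever core it crosses. So edge-level stickiness alone fixes the downlink.
\item \emph{Balls-in-bins independence.} Flows from the same source edge switch stick to distinct indices and never collide. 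Condition on the $w$ sources feeding a destination edge switch sitting at distinct edge switches, a constant-probability event, to get independent uniform indices.
\item \emph{Uplinks in continuous time.} Once jitter is in the model, your slot-based uplink induction no longer applies verbatim, so redo it in continuous time. At most $w-1$ other northbound packets arrived in the last period, and earlier ones have already left. Hence every packet finds an empty uplink under either scheme: consecutive round-robin choices are distinct, and \jsq sees at most $w-1$ busy ports.
\end{itemize}
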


\thm{thm:sync} is the more surprising result, especially for \jsq, which is often presented as near-optimal~\cite{duato650interconnection,nvidia_AR_whitepaper,nvidia_AR_whitepaper2,gomez2007deterministic}. The cause is \emph{synchronization}, and it is a direct consequence of the workload uniformity described in \S\ref{sec:background}: all senders transmit the same message and packet size, paced at identical rates. This uniformity creates per-sender ``stickiness,'' in which successive packets from the same sender are forwarded out the same port to the same aggregation switch. 
For example, consider again an edge switch whose $\w=\k/2$ flows all go northbound through its $\w$ ports, and assume \simplerr. Since all flows are paced at the same rate, their packets arrive in the same order every period. So if the pointer sends a packet of some flow to port $i$, the pointer advances exactly $\w$ times before that flow's next packet, and returns to port $i$: the flow ends up sticking to port $i$ for the whole run.

Given the fat tree, such sticky packets necessarily arrive at the same aggregation switch in the destination pod, then stick to the same path all the way down---where they collide with other sticky flows. Note that local \emph{salting} does not help: even if each switch starts \simplerr at a random port and proceeds in a different periodic order, the same synchronization arises. We discuss this in detail in \cref{sec:sync}.

\subsection{Destination-Based Rotation}
\label{sec:dest-aware}

Given these results, we looked to the broader scheduling literature for approaches that reach $\bigO{1}$. We find inspiration in the DRB scheduler of Cao \textit{et al.}~\cite{drb}, whose key idea is that each source host load-balances packets in round-robin fashion on a \emph{per-destination} basis. We generalize this to a scheduling discipline we call \emph{Destination-Based Rotation} (\DR).
We later contribute a switch-based \DR scheme in \S\ref{sec:algo}.

\mypar{\drb as host \DR} \hdr maintains round-robin pointers at each source host, one per (destination host, quantized packet size) pair, using two quanta to separate data and ACK packets. For each (host, size) pair it finds the lowest common switch layer in the topology and 
round-robins among the switches at that layer with a single pointer. For example, each colored source host in \cref{fig:cong} sends its data packets to the four cores in round-robin order; this is done using encapsulation. Under failures, \hdr assumes hosts have full topology knowledge and round-robins only over switches reachable by both endpoints.

\mypar{Bounded queues} Between source and destination, \hdr flows are load-balanced equally and periodically across all allowed paths, so any given path sees a sum of many periodic flows with random phases. We can therefore model \hdr's queues with the $N\,D/D/1$ model~\cite{menth2010packet,roberts1996broadband}, which is particularly noteworthy for \emph{its bounded average queue size even when the arrival rate equals the service rate} (this does not happen in M/M/1 for example). {We prove} the following result in \cref{sec:DRB}. Note that the proof here is weaker than for \cref{thm:sqrt,thm:sync}, as we prove it for the \drb $N\,D/D/1$ approximation, not for the exact algorithm.

\begin{theorem}
\label{thm:DRB}
The average queue size for the \textit{\hdr} model is bounded:
\begin{equation} \label{eq:DRB}
    \q = \Theta \para{1}.
\end{equation}
\end{theorem}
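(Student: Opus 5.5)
The proof works in the $N\,D/D/1$ approximation of \hdr described above. We first establish that every switch queue (uplink or downlink) is fed by a superposition of $N$ periodic flows with random phases, where $N \le \k$ is bounded and the total rate is at most the service rate. We then apply the known $N\,D/D/1$ results to show that the mean queue length is bounded by a constant depending only on $N$, not on $\m$. Averaging over all queues and over the sending horizon then gives $\q=\Theta(1)$. The lower bound $\Omega(1)$ is immediate, since any queue carrying more than one flow has a nonzero probability of two packets arriving in the same slot.

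\textbf{Step 1 (periodicity and rate).} Under a random derangement, each source sends one flow at line rate. \hdr rotates that flow's packets over the $P$ core switches (or fewer, at the lowest common layer) with a single per-destination pointer. Hence each path, and each link on it, receives exactly one packet of the flow every $P$ packet times, i.e., a deterministic stream of period $P$ and rate $1/P$ of line rate. Next we count flows per link. An $E\to A$ uplink carries the $\k/2$ flows of its edge switch's hosts, each contributing rate $2/\k$ of the flow after the edge-level split. A downlink $A\to E$ toward a given edge switch carries exactly the flows destined to its $\k/2$ hosts. Because each destination receives one flow (permutation), the total rate is at most $1$. The same fixed-fanout argument applies at the $A\to C$ and $C\to A$ stages. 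So every queue is an $N\,D/D/1$ system with $N \le \k/2$ (or $\k^2/4$ at the core stage, a constant independent of $\m$) and load $\rho \le 1$.

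\textbf{Step 2 (random phases).} Within the model, the sources start synchronously, but the offset of each flow's rotation pointer, together with the random derangement, makes the slot in which each flow's packets hit a given link uniform over the $P$ residues and independent across flows. This step is the modeling assumption the paper flags as the weakening relative to \cref{thm:sqrt,thm:sync}. We state it explicitly rather than derive it from the exact algorithm.

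\textbf{Step 3 (the bound, main obstacle).} We bound the stationary $N\,D/D/1$ queue at $\rho=1$. Because arrivals repeat with period $D$, the workload process is periodic after one period, so the queue never drifts, and no $\m$-dependence can enter beyond a transient of one period. The key obstacle is to show the periodic queue's mean is $O(1)$ uniformly. We handle it with the Benes/ballot-type formula used by Roberts and Menth: the queue at a slot equals $\max_{s}\bigl(A(t-s,t]-s\bigr)$, where $A(t-s,t]$ is the number of arrivals in the last $s$ slots. $A(t-s,t]$ is hypergeometric/binomial with mean $Ns/D \le s$, so $\mathbb{E}[\max_s(\cdot)]$ is bounded by a sum over $s \le D$ of tail probabilities, finite for fixed $N$. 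Even crudely, the queue is at most $N$, since each period injects at most $N$ packets and serves $D \ge N$. This crude bound already gives $O(1)$ independent of $\m$. Averaging over the finite horizon, the sum of per-queue means over all queues divided by their number stays bounded. Adding the $O(1)$ uplink and downlink contributions yields \cref{eq:DRB}.
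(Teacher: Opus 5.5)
Your argument is correct within the model, but it reaches boundedness by a different route from the paper's.

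The paper's argument is stochastic. It replaces the number of northbound flows by its expectation $w\bigl(1-\frac{w-1}{n-1}\bigr)$, so that $\rho=N/w<1$. It then approximates the superposition of periodic sources by a Gaussian process with negative drift, takes a truncated-normal stationary queue, and concludes from stationarity that the modeled mean is independent of $m$. It repeats this at the $A\to C$ layer and treats downlinks as a coarser approximation. Separately, it gives an exact envelope bound of $w+1$, but only at the edge uplinks.

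You instead use the deterministic structure. Lindley's formula restricted to the last period shows that a periodic input carrying at most one period's worth of work per period gives a queue that is periodic after one period and never exceeds the number of arrivals in a period. This is essentially the paper's edge-uplink envelope argument, applied at every layer of the model. Your route gives a clean $O(1)$ at the worst-case load $\rho=1$ rather than at the expected load. It needs neither the Gaussian limit nor random phases; you use the phases only for the $\Omega(1)$ side, which the paper leaves implicit. What you give up is the quantitative per-layer estimates that the paper's Gaussian model provides and checks against simulation.

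Two remarks on your steps.

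First, in Step 1 a link does not lie on a single path of a flow. An edge uplink lies on $w$ of the flow's $k^2/4$ core paths, so it receives $w$ packets every $k^2/4$ slots. These are evenly spaced only if \hdr's core order alternates aggregation switches, which the paper assumes for its $w+1$ bound. Otherwise the per-flow stream is bursty, and your crude bound should read ``at most the number of arrivals per period'' (still at most $k^2/4$) rather than ``at most $N$''.

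Second, your Step 3 needs less than you assume. It only uses that the \emph{aggregate} input of a link is periodic with load at most one. A FIFO queue with such an input is itself periodic after one period, so its departures are periodic too, and this property survives queueing from layer to layer, up to a bounded transient backlog. What the paper says is lost beyond the edge layer is per-flow even spacing, not aggregate periodicity. Your route could therefore likely be pushed from the $N\,D/D/1$ model to the idealized algorithm itself.
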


\mypar{\ethereal} A second destination-aware design, 
 \ethereal~\cite{ethereal}, splits each flow into many subflows. Specifically, under 
 a permutation, each host has a single flow, which \ethereal splits into as many equal-size subflows as there are paths, \eg one subflow per core switch for an inter-pod destination. Since the NIC serves its queue pairs in round-robin, the host effectively rotates its packets across all these paths, much like \drb. Furthermore, 
 ACKs follow the reverse path of their data packets, so they are also equally load-balanced across all paths. 
 
 Thus, while \ethereal{} is not a \DR scheme, since it keeps no rotation pointers, its per-destination subflows achieve a similar effect, and we obtain the same bounded-queue result (proof in \cref{sec:ethereal}):
\begin{theorem}
\label{thm:ethereal}
The average queue size for the \ethereal model is bounded:
\begin{equation} \label{eq:ethereal}
    \q = \Theta \para{1}.
\end{equation}
\end{theorem}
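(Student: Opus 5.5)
The plan is to reduce the \ethereal model to the same $N\,D/D/1$ structure used for \thm{thm:DRB}, and then invoke that theorem's bound on each queue. We fix a random permutation in a three-level $\k$-ary fat tree, with all hosts sending at line rate. For an inter-pod flow of size $\m$, \ethereal creates $(\k/2)^2$ subflows, one per core switch, each carrying $\m/(\k/2)^2$ packets. For an intra-pod flow it creates $\k/2$ subflows, one per aggregation switch. Because the NIC serves its queue pairs in round-robin, the source emits its packets in a fixed cyclic order over the paths. The packets of each subflow therefore leave the host as a periodic stream of period $(\k/2)^2$ (resp.\ $\k/2$) packet times, and each stream follows one fixed path. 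Packet for packet, this is the same sending pattern as a \drb source whose per-destination pointer rotates over the same set of switches. The first step is to make this equivalence explicit, so that every statement about path loads in the \drb model transfers verbatim. The only differences are the phases, which the NIC's initial queue-pair order sets instead of the pointer's initial value; in both cases we take them to be uniform and independent across sources.

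Next we check the per-link load. Fix any link, uplink or downlink. In the \drb analysis, the key facts were that each link carries at most its capacity in expectation, and exactly capacity when the permutation saturates it, and that its traffic is a superposition of periodic streams with common period and independent uniform phases. Both facts hold here. The load statement follows from the equal split: every flow crossing a given stage places exactly $1/(\text{number of paths})$ of its packets on each of its paths. For an edge uplink, the $\k/2$ hosts below the edge switch each contribute one stream per uplink. For a core downlink toward an aggregation switch, each flow destined to that pod contributes exactly one subflow through that core. The per-link counts are therefore deterministic and equal to those of \drb. We also need to account for ACKs. They follow the reverse path of their data subflow, so they form periodic streams of small packets that are themselves equally spread. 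We either fold them into the model as a second traffic class, as \drb does with its size quanta, or note that they use the reverse-direction queues and obey the same argument there.

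With this structure in place, each queue is an $N\,D/D/1$ queue with load at most $1$. The bounded mean-queue result from \cref{sec:DRB} then gives $\Theta(1)$ per queue, independent of $\m$: the number of superposed streams $N$ is bounded by a function of $\k$ only, and the mean backlog of $N\,D/D/1$ at full load scales like $\sqrt{N}$. Averaging over all switch queues and over the sending period gives $\q=\bigO{1}$. The matching lower bound $\Omega(1)$ comes from the same $N\,D/D/1$ analysis: with random phases, a collision occurs with positive probability, so the expected backlog is positive. This yields \cref{eq:ethereal}.

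The main obstacle is justifying the $N\,D/D/1$ model at downstream stages. Queueing at the uplinks perturbs the periodicity and the independence of the phases before packets reach the core and downlink queues. We would handle this the way \cref{sec:DRB} does, treating the result as a statement about the model. The approximation is that upstream queueing delays are bounded, since they are $\Theta(1)$ by the first-stage argument, so each stream remains periodic up to a bounded jitter. A jittered periodic stream can be dominated by a periodic stream plus an $\bigO{1}$ burst. We therefore expect each downstream queue to grow by at most an additive constant, which preserves the $\Theta(1)$ bound. A secondary subtlety is the rounding when $\m$ is not divisible by the number of paths. This leaves a per-link imbalance of at most one packet, which changes the queues only by $\bigO{1}$.
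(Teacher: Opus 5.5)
Your argument follows essentially the paper's route: reuse the \drb $N\,D/D/1$ model with the subflows as periodic sources, check that the equal split over paths gives every link load $\rho\le1$, and conclude from stationarity that the modeled queue does not depend on $\m$. One correction: \ethereal is not packet-for-packet \drb with only the phases changed, since the NIC's random queue-pair order need not alternate between aggregation switches, so a host may send several consecutive packets through the same uplink and an edge uplink sees up to $\k^2/4$ subflow sources rather than $\w$ per-host streams of period $\w T$ (hence the paper's edge bound of $\k^2/4+1$ rather than $\w+1$ packets); this changes the constants, not the $\Theta(1)$ conclusion.
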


\mypar{Deployment challenges} Both schemes ask for something hosts do not have today, exhibiting significant deployment challenges.
\hdr requires each host to know the complete set of currently reachable switches, which in practice means running the fabric's routing protocol (\eg BGP) on the NIC. Exposing routing and link-state updates to hosts violates the layering that separates them from the fabric's internals, and requires every NIC to maintain global topology state and re-converge on every failure.
\ethereal requires the NIC to maintain a very large number of queue pairs (QPs) per flow (one per core, \ie on the order of $1$K with 64-port switches) in order to place each subflow on its own path. Increasing the number of QPs by 3 orders of magnitude could prove daunting at scale.
Both limitations are consequences of running per-destination load balancing \emph{at the host}. A host sees only its own flows and has no native view of the fabric. As we show next, a switch sees both.

\subsection{Validation via Simulation}
\label{sec:theory-sims}

Since our theoretical analysis makes a number of simplifying assumptions, we present simulation results showing both the CCT and the max queue size for the various algorithms discussed above. For the CCT metric, to focus on the variable part impacted by load-balancing, we compute the CCT increase beyond a lower-bound baseline that we derive for the CCT of a permutation (\cref{sec:lowerbound}), and use naive lower bounds for non-permutation workloads. 
For completeness, we also show the performance of \pipe and \name, which are discussed later. 

\begin{figure*}
    \centering
    \begin{subfigure}[]{\linewidth} %
        \centering
        \includegraphics[width=0.7\linewidth]{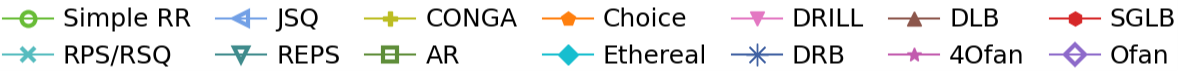}
    \end{subfigure}  
   \par  
    \begin{subfigure}[t]{0.47\linewidth}
        \centering
        \includegraphics[width=0.65\linewidth]{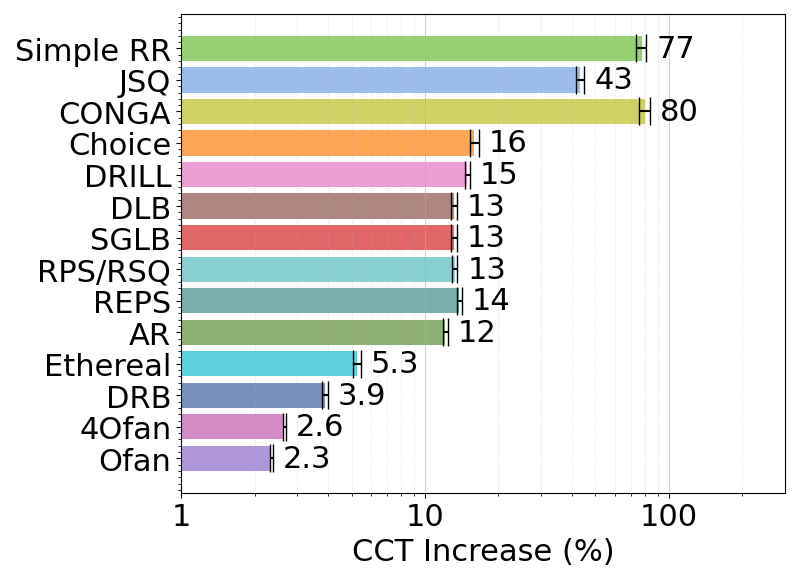} %
        \caption{CCT increase }
        \label{fig:spray-cct}
    \end{subfigure}
    \hfill
    \begin{subfigure}[t]{0.51\linewidth}
        \centering
        \includegraphics[width=0.65\linewidth]{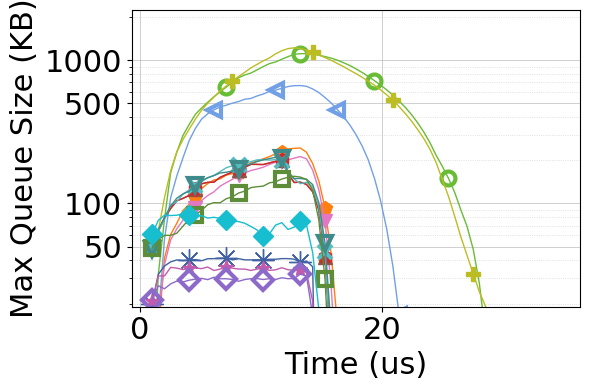}
        \caption{Max queue size}
        \label{fig:max}
    \end{subfigure}
    \caption{Comparison of packet-based LB algorithms.
    }
    \label{fig:growth}
\end{figure*}

\Cref{fig:growth} shows that max queue size correlates well with the CCT metric from our earlier sections. We also see the general trends revealed by our theoretical analysis. Specifically, algorithms cluster in three groups: (1) \simplerr and \jsq (together with \conga) form a first group with near-linear growth of the max queue size; (2) The second group includes \hpkt, \rsq, \hpktar, and \spktar whose queue growth is closer to the square-root of the message size, and (3) \ethereal, \drb, \four and \name  appear to reach an optimal $O(1)$ queue size.  

Given these results, we next take a closer look at practically realizing \DR-based LB.
Since \drb and \ethereal face significant deployment challenges, and \name outperforms them in our simulations, we focus on switch-based \ar and host-based \reps as leading alternatives in the remainder of the evaluation (except for \cref{fig:vsDRB}).

\section{\name} \label{sec:algo}

\subsection{Algorithm}

\S\ref{sec:dest-aware} left us with a discipline that is optimal but has only ever been realized at the host, where it is impractical. This section makes the case that the switch, rather than  the host, is the natural home for \DR, and presents \name, a switch-based realization of it.

\mypar{Switch benefits} We find that the switch enjoys two significant benefits when implementing \DR. 
First, {the switch already knows the topology}. It knows through BGP which links are currently alive. \drb needs the host to obtain this, \eg by implementing somehow BGP onto the NIC; \ethereal keeps iterating to guess it from the host's viewpoint.
A switch gets it for free, and re-converges quickly on failure through the mechanisms it already runs.

Second, and less obviously, the switch can see more traffic. This enables it to  apply round-robin algorithms more efficiently, as described below. 

\mypar{Overview of \name} 
\textit{\name}\footnote{\name's name is based on Ezekiel's vision of wheels within wheels, fitting the many pointer rotations.} 
is a switch-based realization of \DR scheduling. Its goal is to maintain per-destination pointers at edge and aggregation switches, and use these pointers to select the outgoing ports for northbound packets. However, there are many possible destination hosts, leading to an impractical number of pointers to implement.

To address this scalability challenge, \name exploits the structured nature of a fat-tree topology to \emph{consolidate} destinations based on their location in the tree, such that a switch only maintains a single pointer for groups of destinations that share a common edge or aggregation switch. The idea is that a switch does not need to care about the exact destination of its northbound packets, but can consolidate these packets based on their common last-mile switches. 

Specifically, an edge switch $E$ will maintain one pointer for each group of  destinations that share an edge switch $F$ ($F \ne E$). Similarly, an aggregation switch $A$ maintains one pointer for each group of destinations that share an aggregation switch $B$ ($B \ne A$). Hence, an edge switch on the northbound path will round-robin all traffic bound for the same destination edge switch, and an aggregation switch will round-robin traffic bound for the same destination aggregation switch. 

\name's consolidation of destinations significantly reduces the number of pointers, while retaining \DR's property of balancing traffic on both uplinks and downlinks. To see this, consider two flows that originate at source hosts under source edge switch $E$, and are destined to destination hosts under destination edge switch $F$. Then $E$ and $F$ are \textit{mandatory waypoints} in their path. Therefore, $E$ can maintain a single pointer that consolidates both flows. Consolidation is even more effective at the aggregation layer due to the properties of three-level fat-trees. In a fat-tree, all packets destined for a pod and come through a source aggregation switch in their uplink path can only go through a single aggregation switch in their downlink path.\footnote{To see this, notice that any packets leaving the first black aggregation switch in \cref{fig:cong} can only traverse the second black aggregation switch on their southbound path to pod 2.} This destination aggregation switch is again a mandatory waypoint, hence a source aggregation switch can consolidate its scheduling of all packets destined to this pod. %

\mypar{Per-destination pointers} The core idea in \name is to have \emph{a pointer per destination}, and each incoming packet that goes up in the fat tree accesses its specific pointer to determine its northbound port.
For the consolidation explained above, we want to have  at each source edge switch a different pointer per destination edge switch, and at each source aggregation switch a different pointer per destination pod.
Assume that each such switch has $\w=\k/2$ northbound ports. At start up, each pointer is initialized to a random value in $\brac{0,\dots,\w-1}$. 
Then, at each packet arrival for this pointer, it is incremented (modulo the number $\w$ of uplink ports). 

\mypar{Random port order} A simple way is for each packet to be sent to the port indicated by the pointer. For example, if the pointer for the packet's destination indicates 3, then it will be sent to port 3. However, to avoid a possible synchronization across pointers, we don't want all of them to go RR in the same order. Instead, we create a random port traversal order by mapping each pointer value to an uplink port through a pointer-dependent permutation.
That is, even though each pointer goes cyclically through the values $\brac{0,1,2,\dots,\w-1}$ in the same RR order, its permutation will then respectively map them to some port order like $\brac{1,\w-1,2,\dots,0}$. The easiest way to implement these random per-pointer permutations would be to store all of them. However, this would need a large storage space. Instead, we use a simple formula that takes the pointer value and pointer ID and returns the corresponding port ID using a XOR and an addition modulo $\w$  (see \cref{sec:ofan-orders} for details).\footnote{As an order of magnitude, the randomized port order reduces CCT by $\approx 0.2\%$ for permutation workload, but increases it by $\approx 0.1\%$ for all-to-all; so its main appeal is in avoiding or reducing unforeseen synchronization, not necessarily in the immediate CCT reduction.} 

\mypar{ACKs} In addition, using the same pointer for data and ACK packets would not load-balance the load equally, which is our goal: \eg we could send a data packet on one port, then an ACK packet on the next port, \etc Thus, we follow \drb and distinguish pointers for different packet sizes. Specifically, in \name, an edge switch maintains a pointer for each (destination edge switch, quantized packet size) pair and an aggregation switch maintains the same for each (destination pod, quantized packet size). In our simulations, we use two quantization groups for packet sizes: one for data packets, with fixed frame sizes for data payloads, and another for ACKs. Other datacenters and workloads might require more fine-grained quantization. We initialize data pointers at random, and ACK pointers half a rotation away (offset $\floor{\w/2}$).

\mypar{Memory needs} Thus, an edge switch maintains O(\# edge switches) pointers, and an aggregation switch maintains O(\# pods) pointers. 
For example, in a fat-tree of parameter $\k=64$ with $65K$ GPUs~\cite{metaDC,nvidiaDC,broadcomSwitch,nvidiaSwitch}, each edge switch would maintain $4K$ pointers of $\log_2{\w}=5$ bits each, totalling $2.5\unit{KB}$, and each aggregation switch $126$ pointers of 5 bits each, \ie $79\unit{B}$. We add $20\unit{B}$ to store the same basis permutation in each switch.  
Our discussions with switch vendors suggest that this is a very reasonable number of pointers and storage amount for switches to maintain.

Note that unlike \drb, \name does not require any tunneling or encapsulation of traffic, nor any change at endhosts. Of course, it does require changes at edge and aggregation switches.

\mypar{Failures} When there are link failures in the topology, the pointers still cycle through the ports in the same way. However, if they pick an invalid port for a packet, \ie a port through which there is no possible path to the destination, then they just keep incrementing until yielding a valid port. \footnote{We assume that the operator has a procedure in place in case the switch is a dead end.}

\subsection{Bounded queue sizes}

\mypar{Properties} 
We get the same $\Theta \para{1}$ result for \name's model as for \drb (again, we only prove it for the queueing model, see \cref{sec:ofan}):
\begin{theorem}
\label{thm:ofan}
The modeled average queue size for \textit{\name} is bounded:
\begin{equation} \label{eq:ofan}
    \q = \Theta \para{1}.
\end{equation}
\end{theorem}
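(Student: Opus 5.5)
The plan is to reduce \name's model to the same $N\,D/D/1$ structure used for \thm{thm:DRB}, and then reuse that theorem's bounded-queue argument stage by stage. We work under the same assumptions: a random derangement, all hosts at line rate, no failures, and data and ACK pointers kept separate. We also use the same modeling simplification as for \drb: each pointer's output stream on a port is treated as a periodic (deterministic) source with a uniformly random phase, independent across pointers. This independence is supported by the random pointer initialization and the random per-pointer port orders.

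\textbf{Step 1: uplinks.} Take an edge switch $E$ with $\w$ northbound ports. Each data pointer at $E$ corresponds to a destination edge switch $F$. It receives an aggregate periodic stream, namely the $n_F \le \w$ host flows from below $E$ to below $F$, each at line rate. Because the pointer rotates over all $\w$ ports, each port gets exactly a $1/\w$ fraction of that stream, with deterministic spacing up to $O(1)$ jitter. Summed over all pointers, each uplink port therefore sees total rate at most $1$. Its arrival process is a superposition of at most $\w$ periodic sources with random phases, which is an $N\,D/D/1$ queue with load $\le 1$ and hence has $O(1)$ mean queue. The same argument applies at an aggregation switch $A$, where pointers are indexed by destination pod. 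Its input is the output of the edge stage, which is periodic up to bounded $O(1)$ perturbation, so the uplink contribution is $\qup=O(1)$.

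\textbf{Step 2: downlinks.} This is the step that differs from \drb and is the one I expect to be the main obstacle. We must show that each core-to-aggregation and aggregation-to-edge link receives a balanced, periodic share of its destination traffic, even though rotation is performed per group rather than per destination host. Two facts do the work. First, destination edge switch $F$ and destination aggregation switch $B$ are mandatory waypoints. Second, a pointer at $E$ for $F$ spreads the traffic to $F$ equally over aggregation switches. The pointer at each source aggregation switch for the destination pod then spreads it equally over its cores. Each core reaches pod $F$ through a unique aggregation switch, so every downlink into $F$'s pod, and every link $B\to F$, carries exactly a $1/\w$ or $1/\w^2$ share of the flows destined to $F$. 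Since the workload is a permutation, the total rate on each downlink is therefore at most $1$.

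One point needs care: pointer consolidation mixes flows to different hosts under the same $F$. This does not break balance on the link $B\to F$. However, it means the final edge-to-host link carries a single flow at rate exactly $1$, and that flow arrives reassembled from $\w^2$ interleaved periodic substreams. I would treat this reassembled stream as another $N\,D/D/1$ superposition of $\w^2$ sources at total load $1$, which again gives $O(1)$. The remaining argument I would need to make, and where I am least sure the modeling assumption holds cleanly, is that upstream $O(1)$ queueing jitter does not accumulate across stages. I would handle it as in \thm{thm:sqrt} by neglecting upstream queueing variation, or by bounding the extra delay by a constant per stage.

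\textbf{Step 3: ACKs and the lower bound.} The ACK pointers form an independent copy of the same structure, with total rate bounded away from $1$ on every link, so they contribute $O(1)$. Summing over the constant number of stages gives $\q=\qup+\qdown=O(1)$. The matching $\Omega(1)$ holds because with random phases, two packets collide at some queue with constant probability. This yields $\q=\Theta(1)$.
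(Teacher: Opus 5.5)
Your overall route is the paper's. You model each stage as an $N\,D/D/1$ queue fed by pointer streams, obtain load at most $1$ on every link from the waypoint/consolidation argument (\cref{thm:S,thm:both} with \cref{prop:inv}), and conclude from stationarity that the modeled queue does not depend on $m$. The paper additionally counts the pointers (about $w-1/4$ per edge switch, at most $k-1$ per aggregation switch) to set $N$, and proves an exact $w+1$ bound at the edge uplinks.

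However, the key claim of your Step~2 is false: a core-to-aggregation downlink does \emph{not} carry a $1/w^2$ share of each flow destined to $F$. The pointer at a source aggregation switch $A_S$ is keyed by the destination pod. It is therefore shared by traffic to \emph{every} edge switch of that pod, coming from every edge switch of $A_S$'s pod, so the $F$-bound packets alone can land on its cores in any proportion. For instance, take $w=2$ and suppose $A_S$ alternately receives a packet for $F$ and one for $F'$ in the same pod. Its pointer then alternates cores, sending all $F$-packets to one core and all $F'$-packets to the other. You checked consolidation at the edge (mixing hosts under $F$), which is harmless, but missed this aggregation-level one. It is exactly the subtlety the paper handles in part~(1) of the proof of \cref{thm:both}.

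The repair is to balance aggregates rather than per-$F$ traffic. First, $A_S$ splits its whole pod-bound traffic evenly over its $w$ cores. Second, these same $w$ cores serve the same-index aggregation switch of every pod, and all of them lead to $B$. Hence the $w$ core-to-$B$ links carry equal loads, summing to the inter-pod traffic through $B$. That traffic is at most $w^2/w=w$ packets per $T$, because each source edge switch sends exactly $1/w$ of its traffic for each edge switch of the pod through $B$'s index. So each core-to-$B$ link carries at most $1$. Your $1/w$ claim for the $B\to F$ links is correct, since it only uses the edge pointers. The same misconception underlies your picture of the edge-to-host stream as $w^2$ evenly interleaved substreams; a single flow need not even be split evenly over aggregation switches, since $E_S$'s pointer for $F$ is shared among flows. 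That link, however, carries one flow at rate $1$, so only the bounded upstream delay variation matters there.

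Once Step~2 is repaired, your second option for the jitter (a constant extra delay per stage) actually works, and gives more than the paper's model-level statement. Every pointer gives each of its ports exactly one of any $w$ consecutive packets. Hence the number of packets entering any link's queue in any interval $(s,t]$ is at most $(t-s)/T+\sigma$, with $\sigma$ depending only on $k$. A FIFO with such input has backlog at most $\sigma+1$ and delay variation at most $\sigma T$, which only adds a $k$-dependent constant to the bursts at the next stage. Since up-down paths cross only five feedforward stages, this extends the paper's exact edge-uplink bound to every layer, without the $N\,D/D/1$ approximation.

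Two smaller points. First, the theorem's model (\cref{sec:theory}) has no ACKs, so drop Step~3's ACK argument. It would be wrong anyway: ACKs would share the queues of data already at load $1$, not form an independent low-load system. Second, for the $\Omega(1)$ side, a collision ``at some queue'' contributes only $O(1/n)$ to the average over $5n$ queues. You need a constant fraction of queues: \eg by \cref{eq:northbound}, a constant fraction of edge switches have fully loaded uplinks, as in Step~5 of the proof of \cref{thm:sync}.
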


\mypar{Consolidation pays off} \Cref{fig:overload} in \S\ref{sec:downlink} already showed \name flat across all four link stages. It also showed the effect consolidation buys over \hdr: \hdr is perfectly balanced for large messages but degrades for small ones, because its many per-flow pointers spread traffic too thin, leaving a constant residual imbalance that a small message cannot amortize. \name's far smaller number of consolidated pointers avoids this. In \cref{sec:sensitivity} we show that the gap widens with network size, tracking the widening disparity in pointer counts.

\mypar{Impact on CCAs}
Bounded queues change how we think about CCAs. For example, they 
allow the design of more aggressive CCAs with \emph{shorter queue targets} that impact less delay-sensitive traffic (mice) (see MSwift \vs UEC's NSCC in \cref{sec:sensitivity}). 

\mypar{Impact on packet sizes} Second, and surprisingly, they change the \emph{optimal packet sizes} in the datacenter networks. 
The impact of packet size as a parameter in AI training is one that, to our knowledge, has not been extensively explored in the networking literature. 
Intuitively, sending $D$ bytes with a packet size $P$ and header $H$, involves a trade-off: large packets increase queue size and hence CCT, while small packets incur higher header overheads. We find that we can model the optimal packet size for some of our LB approaches. 
For \DR schemes such as \name, if we model the  $O(1)$ network queueing as a constant of $\alpha$ packets 
(independent of the number of sent packets), 
then we obtain the following (\cref{sec:pkt}): 

\begin{theorem}
\label{thm:pkt}
To minimize the CCT, the  optimal packet size should carry a payload that scales with $\sqrt{D}$:
\begin{equation} \label{eq:payload}
P-H = \sqrt{\frac{H}{\alpha} \cdot D}
\end{equation}
\end{theorem}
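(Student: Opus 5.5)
We model the CCT as the sum of a transmission term and a queueing term, and then minimize over the packet size $P$. Sending $D$ payload bytes with packets of size $P$ and header $H$ takes $D/(P-H)$ packets, so the total number of bytes on the wire is
\begin{equation*}
B(P) = \frac{D}{P-H}\cdot P = D + \frac{D\,H}{P-H}.
\end{equation*}
At line rate, up to the line-rate constant and a fixed propagation term that does not depend on $P$, the transmission time is proportional to $B(P)$.

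For the queueing contribution we use the modelling assumption stated before the theorem. Under a \DR scheme such as \name, \thm{thm:ofan} gives $\q = \Theta(1)$, and we idealize this as exactly $\alpha$ packets, independent of how many packets are sent. The extra delay is then about $\alpha P$ bytes' worth of service time. Writing the CCT in byte-time units, we get
\begin{equation*}
\mathrm{CCT}(P) \approx D + \frac{D\,H}{P-H} + \alpha P .
\end{equation*}
The points to justify here are that the last packet of the collective waits behind about $\alpha$ queued packets (a constant number, by \thm{thm:ofan}), and that store-and-forward and other per-hop terms add only quantities that are linear in $P$ and can be absorbed into $\alpha$.

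Next we substitute $x = P-H > 0$, which turns the objective into $f(x) = D + \alpha H + \frac{DH}{x} + \alpha x$. This function is strictly convex on $x>0$ and tends to $+\infty$ at both ends, so its unique minimizer solves $f'(x) = -\frac{DH}{x^2} + \alpha = 0$. That gives $x = \sqrt{HD/\alpha}$, which is \cref{eq:payload}. One could equally reach the same conclusion directly from AM--GM applied to $\frac{DH}{x} + \alpha x$.

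The calculus is routine. The step that needs the most care is the modelling step: arguing that the queueing delay contributes a term linear in $P$ with a constant independent of $D$. This is exactly where the $O(1)$ queue result matters. For a destination-oblivious scheme, \thm{thm:sqrt} would make the queue grow like $\sqrt{D/P}$ packets, and the optimum would scale differently. We will therefore state explicitly that the result holds within the constant-$\alpha$ model and ignores integrality of the packet count and MTU caps.
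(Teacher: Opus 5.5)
Your proposal is correct and matches the paper's proof. The paper likewise drops the factor $1/C$ and the zero-load RTT, writes the $P$-dependent cost as $f(P)=\frac{PD}{P-H}+\alpha P$, solves $f'(P)=-\frac{DH}{(P-H)^2}+\alpha=0$, and confirms a minimum via $f''(P)=\frac{2DH}{(P-H)^3}>0$; this is the same computation as your substitution $x=P-H$ with convexity or AM--GM, and your remark about absorbing per-hop store-and-forward delays into $\alpha$ is a small refinement over the paper's assertion that the zero-load RTT does not depend on $P$.
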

Interestingly, the constant $H/\alpha$ reflects the battle between the two types of CCT overhead: \textit{headers} and \textit{queueing}. We illustrate this result in \cref{sec:sensitivity}.

\section{\pipe Pipe-Based Implementation}
\label{sec:implementation}

\subsection{\pipe}

\mypar{Pipes} Modern switch ASICs are organized into several parallel packet-processing \emph{pipes} (Broadcom) or \emph{slices} (Cisco), each handling a fixed subset of the chip's ports with dedicated stateful hardware. Because ports within the same pipe pass through identical hardware execution stages, the pipeline natively guarantees atomic Read-Modify-Write access to local registers at line rate without cross-pipe locks. Hardware pipelines historically scaled around 4-port blocks per processing engine: Broadcom's Tomahawk~3 and~4 utilize 4 ports of 400G per pipeline (1.6~Tbps per pipe)~\cite{broadcom-bcm56980,wheeler2019tomahawk4}, while 51.2~Tbps generation ASICs such as Cisco's Silicon One G200 and P200 explicitly implement 16 independent slices of $4\times800$G ports (3.2~Tbps per slice)~\cite{cisco-nexus9000-brkdcn3939}. Thus, we focus here on pipes of four ports.

\mypar{\pipe} This pipe-based organization motivates an easier-to-implement variant, \emph{\pipe}: instead of a single RR pointer shared among all bottom ports for a given destination key, as in \name's pipeline above, each pipe keeps its own independent RR pointer per key (\eg with \four, the RR pointer in each pipe is shared by four inputs). This sacrifices some consolidation, and hence some of \name's load-balancing quality, since we are effectively splitting each pointer into several independent, uncorrelated ones, similar in spirit to DRB's many per-flow pointers (\cref{sec:algo}), in exchange for a design that fits directly within existing switch architectures, with no cross-pipe combining hardware required. By default, we assume pipes of 4 ports and use \four, matching current switch ASICs; we explore the impact of this choice on CCT performance in \cref{sec:sensitivity}.

\mypar{Example} For our running example ($\k=64$, so $\w=\k/2=32$ downlink ports) with $8$ pipes of $4$ ports each, \pipe replaces each of \name's $N\approx4K$ shared pointers with $8$ independent, per-pipe pointers, for a total of $8 \times 4K \approx 32K$ pointers in the switch.

\mypar{Bounded queue sizes} 
\pipe keeps the same bounded queueing model as \drb and \name (proof in \cref{sec:pipe}):
\begin{theorem}
\label{thm:pipe}
The modeled average queue size for \pipe is bounded:
\begin{equation} \label{eq:pipe}
    \q = \Theta \para{1}.
\end{equation}
\end{theorem}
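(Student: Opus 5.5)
We reduce \thm{thm:pipe} to the same $N\,D/D/1$ argument used for \thm{thm:DRB} and \thm{thm:ofan}. The task is to show that, in the \pipe model, every switch queue (uplink or downlink) is fed by a superposition of periodic streams of equal packet size. Each stream has an independent random phase, and the aggregate rate is at most the service rate. The bounded mean queue of $N\,D/D/1$ at load $\le 1$ then gives $\q=\bigO{1}$.

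\textbf{Step 1: the per-pipe streams are periodic.} Fix an edge switch $E$, a pipe $\pi$ of $E$, and a destination edge switch $F$. Under a permutation with every host at line rate and RR among a host's flows, the packets entering pipe $\pi$ and bound for $F$ form a deterministic periodic sequence. The pipe pointer for $(F,\text{size})$ advances once per such packet and cycles through its $\w$ uplinks in its randomized order. So each uplink of $E$ receives, from this pointer, a periodic sub-stream with period equal to $\w$ times the inter-arrival period of the pointer's input. The same holds at an aggregation switch for each (pipe, destination pod) pointer. Compared with \name, the only change is that one consolidated pointer is replaced by at most $\w/4$ independent pointers, one per pipe. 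This is the same kind of splitting that \drb does with its per-flow pointers, which \thm{thm:DRB} already handles.

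\textbf{Step 2: flow conservation gives the rates.} Each per-pipe pointer splits its input equally over all $\w$ uplinks in every rotation. Hence each uplink carries exactly a $1/\w$ fraction of every (pipe, destination) aggregate, up to one packet per pointer. Follow this through the fat tree. Because $F$, and the destination aggregation switch, are mandatory waypoints, every downlink toward a destination carries exactly its fair share. Its long-run load is therefore $\le 1$, equal to $1$ in the fault-free full-rate case.

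\textbf{Step 3: apply $N\,D/D/1$.} Random pointer initialization and the random per-pointer port order give each periodic sub-stream an independent uniform phase. Each queue is then modelled as $N\,D/D/1$ with $N$ bounded by the number of (pipe, destination) pointers feeding it, which is at most a function of $\k$ and independent of $\m$. We then invoke the bounded-average-queue result already used for \thm{thm:DRB}, which holds even at load $1$. The lower bound $\Omega(1)$ is trivial, since queues are nonnegative and some queueing occurs whenever $N>1$.

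\textbf{Main obstacle.} The hardest point is Step 1's claim that streams downstream of the first hop remain periodic with independent phases. At aggregation and core switches, arrivals have already passed through upstream queues. Following the paper's convention for \thm{thm:sqrt}, we neglect upstream queueing jitter, which amounts to taking the model as defined. The second subtlety is the residual imbalance when $\m$ is not a multiple of $\w$. We would handle it by noting that each pointer contributes at most one extra packet per link. That gives an additive $\bigO{N}$ term, again independent of $\m$, so the $\Theta(1)$ bound is preserved.
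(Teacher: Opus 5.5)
Your proposal is correct and follows essentially the same route as the paper: reuse \name's $N\,D/D/1$ model with up to $P$ times more periodic sources (one pointer per ingress pipe), note that the per-pipe pointers still split each key's traffic evenly so the balance arguments (\cref{thm:S,thm:both} with \cref{prop:inv}) still give load $\rho\le1$ on every link, and conclude from the stationarity of the $N\,D/D/1$ queue for any $N$ that does not depend on $m$. Your extra remarks on stream periodicity and on the at-most-one-extra-packet-per-pointer residual imbalance are consistent refinements, not departures from that argument.
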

\noindent The intuition is that splitting a pointer into several independent ones keeps each of them balanced, so the load per destination on every link stays balanced as well. The price is a larger number of pointers, and hence a larger queue bound constant.

\subsection{P4 implementation}
\begin{table}
\caption{Resource consumption of \pipe on an Intel Tofino 1 switch.}
\label{tab:tofino-resource-consumption}
\centering %
    \begin{tabular}{@{}l c c@{}}
    \toprule
                     & \multicolumn{2}{c}{Usage}                  \\ \cmidrule{2-3}
    Resource         & Absolute        & Relative                 \\ \midrule
    Match Crossbar   & 16 bytes        & 3.7\%                    \\
    Meter ALU        & 1 ALU           & 16.7\%                   \\
    Gateway          & 2 gateways      & 3.6\%                    \\
    SRAM             & 207,530 bytes   & 6.7\%                    \\
    TCAM             & 3,754 bytes     & 5.6\%                    \\
    VLIW Instruction & 3 slots         & 5.6\%                    \\
    Hash Bits        & 82 bits         & 11.0\%                   \\
    \bottomrule
    \end{tabular}
\end{table}

We implement \pipe using Intel P4 Studio~\cite{p4-studio} on the Intel Tofino 1 platform~\cite{tofino} to showcase its practicality. Each pipe stores ($\#\text{edge switches} \cdot \#\text{size classes}$) RR pointers in a single register. The register is indexed by the destination edge switch and size class. Its value at the index is read and updated for each packet.

\Cref{tab:tofino-resource-consumption} summarizes the resource consumption of \pipe on an Intel Tofino 1 switch assuming a network with 64 edge switches and 4 pipes. Most of the SRAM is not used by the RR pointers themselves, but by the table that maps each (destination address, size class) pair to its RR pointer index. This table could be removed by assuming that the IP address space follows the physical topology, \eg with a prefix per edge switch, which our implementation does not assume. The RR pointers themselves grow with the number of pipes and of edge switches, but only in steps, since Tofino allocates SRAM in blocks: the block holding the pointers for 64 edge switches holds them for up to about $1{,}024$ edge switches, and the $2{,}048$ edge switches of our $\k=64$ running example need a single additional block. Since combining several memory operations in a single register is common in Tofino programs, the growth remains at most linear. \pipe thus still leaves plenty of resources for the implementation of other data plane functionality.

Due to resource limitations, we cannot evaluate a hardware cluster of Tofino switches implementing \pipe. We therefore resort to simulation for performance comparison with \sysname in~\cref{sec:sensitivity}.

\section{Evaluation}
\label{sec:sensitivity}

We now determine the robustness of our results when varying several scaling parameters (\S\ref{sec:scaling}), introducing link failures (\S\ref{sec:failures}), changing the topology (\S\ref{sec:topo}), deploying \four incrementally (\S\ref{sec:mix}), using congestion control (\S\ref{sec:cca}) and considering MoE traffic (\cref{sec:moe}). 
Finally, we present results measuring CCT in a scenario simulating the training of Llama-3~\cite{llama3modelcard} models with up to 405B parameters parallelized using FSDP~\cite{rajbhandari2020zero,zhao2023pytorch} and running a full stack of CCA, loss recovery, and load balancing (\S\ref{sec:e2e}).

\mypar{Simulator} We compare different load balancing schemes with simulations on \htsim~\cite{csghtsim}, which we chose due to its ability to scale to large datacenter networks and its existing implementation of many design points in this space. 
Specifically, unless stated otherwise, we simulate a 432-node 3-tier fat-tree network corresponding to $\k=12$-port switches, with $800\unit{Gbps}$ links, $0.5\us$ link latency, RoCE-v2-sized packets with $4,096\unit{B}$ payloads and $62\unit{B}$ headers, $64\unit{B}$ ACKs, and $20\unit{B}$ gaps between frames ($12\unit{B}$ inter-frame gap and $8\unit{B}$ preamble and start frame delimiter), $800\unit{KB}$ per-port
buffers,\footnote{This value roughly follows the per-port per-Gbps buffer size of modern switches~\cite{broadcomtom,spectrumx}, is above a BDP, and resulted in no loss in most evaluated cases. We only use it in the CCA cases, else we use a practically infinite buffer.}, and $80\unit{KB}$ ECN thresholds (corresponding to 10\% of the buffer size; we increase this threshold to 10 packets when studying $64\unit{KB}$ packets). 
We also perform a sensitivity analysis for many of these parameters. 
We run all considered scenarios at least 10 times, and display error bars for the SEM (unless they affect readability).

\mypar{Workloads} 
In our evaluation of LB schemes, we consider an all-to-all (ATA) traffic matrix and a random permutation matrix. 
We focus on these two traffic matrices since they are representative of many collective algorithms, rather than modeling specific implementations of every collective. Permutation matrices are often used to implement collectives such as AllGather and AllReduce \cite{ring-mpich, nccl-ring, sccl, meta} and can be used to iteratively perform an AlltoAll ($n-1$ permutation matrices iteratively). Similarly, an all-to-all traffic matrix may implement AllGathers or a one-shot AllReduce, as well as an AlltoAll.
Notably, in most training jobs, multiple forms of parallelism will be used and thus no one collective will span the whole network. We explore this more in \S\ref{sec:sensitivity} and for now make the simplifying assumption of one collective at a time across all nodes.
We use a default permutation message size of $1\unit{MB}$.

We ran tests with the entire set of our LB schemes from \S\ref{sec:space}. Across our tests, we find that the \emph{relative} ranking and broad assessment of  different schemes that we have made so far remains unchanged. For legibility, in this section, we compare results for  switch-based \spktar and host-based \hpktar, representing  state-of-the-art algorithms, against our \name and \four algorithms. 

\mypar{Metric} Our main metric is the \emph{collective completion time} (CCT), which is the time from when the first packet of a collective is sent until the last packet of that collective is acknowledged. %

\subsection{Scaling}
\label{sec:scaling}

\begin{figure*}
    \centering
    \begin{subfigure}[]{1\linewidth}
    \centering
        \includegraphics[width=0.3\textwidth]{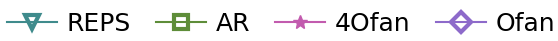}
    \end{subfigure}   
    \par   
    \begin{subfigure}[b]{0.31\linewidth}
        \includegraphics[width=0.9\textwidth]{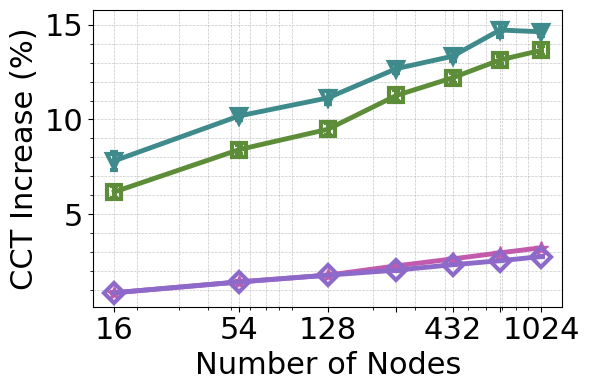}
        \caption{Permutation}
        \label{fig:netw1}
    \end{subfigure}%
    \hfill    
    \begin{subfigure}[b]{0.31\linewidth}
        \includegraphics[width=0.9\textwidth]{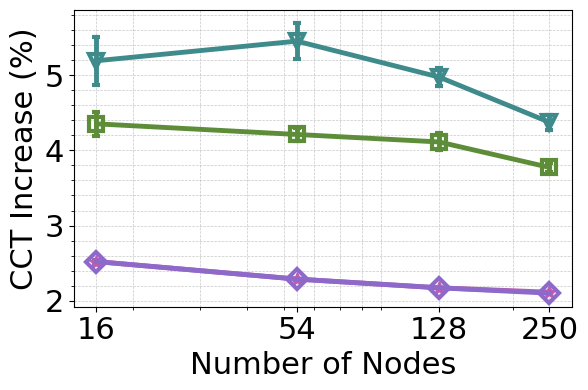}
        \caption{ATA, constant per-flow size}
        \label{fig:netw2}
    \end{subfigure}%
    \hfill     
    \begin{subfigure}[b]{0.31\linewidth}
        \includegraphics[width=0.9\textwidth]{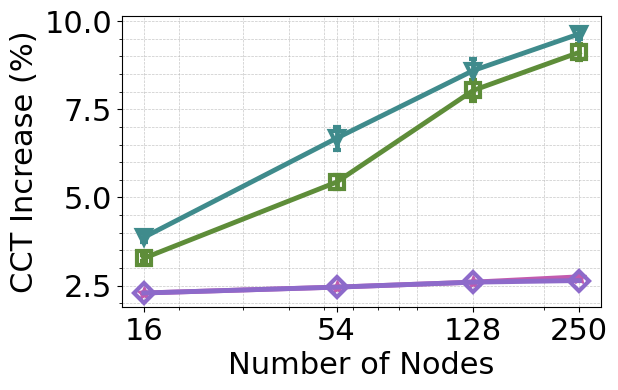}
        \caption{ATA, constant per-host size}
        \label{fig:netw3}
    \end{subfigure}%
    \caption{ CCT increase as  network size increases.}
    \label{fig:networksize}
\end{figure*}

\mypar{Network Size}
\Cref{fig:networksize} shows the impact of scaling the number of nodes in the network (and thus participating in the collective). For ATA, we illustrate two ways of scaling the network: either with constant per-flow size (\cref{fig:netw2}), corresponding to a model that scales with the network; or constant total host sending size (\cref{fig:netw3}), corresponding to adding more nodes with a constant model.
We see that both the relative ordering of the schemes and the magnitude of their CCT performance differences holds or even increases as the network size grows, both for permutation and all-to-all traffic. In particular, in the permutation case, the CCT gap between \four and existing algorithms grows from about 5\% with 16 nodes to about 11\% with 1,024 nodes; and the gap between \four and \name remains reasonably small.

\begin{figure}
    \centering
    \begin{subfigure}[]{1\columnwidth}
    \centering
        \includegraphics[width=0.8\columnwidth]{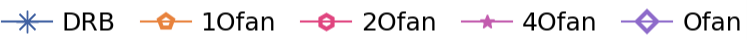}
    \end{subfigure}
    \begin{subfigure}[b]{0.85\columnwidth}
        \includegraphics[width=\columnwidth]{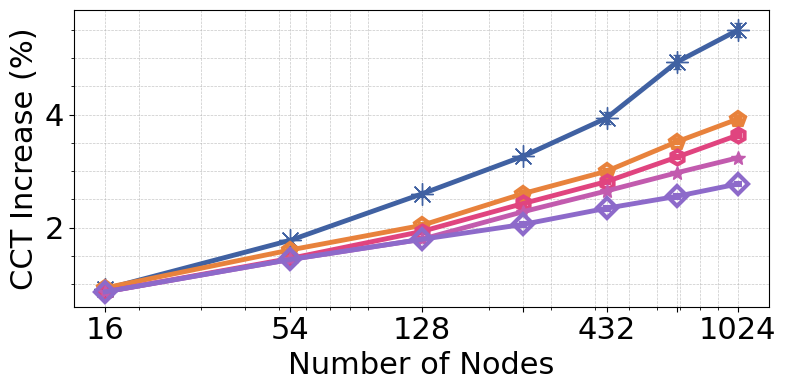}
    \end{subfigure}%
    \caption{CCT increase of \pipe variants vs. DRB as network size increases for a random permutation.}
    \label{fig:vsDRB}
    \vspace{-0.5em}
\end{figure}

\mypar{Pipe Size} \fig{fig:vsDRB} shows the impact of scaling both the pipe size in \pipe and the network size. As network grows, the gap between \pipe and \name predictably expands, but somewhat moderately. In addition, it is interesting to compare \one to \drb: \one holds a per-destination decision at each switch input, while \drb holds a per-destination decision at each sending host. Since each edge switch input corresponds to one host, we would expect them to be similar. However, the plot for \one exhibits and quantifies the power of consolidation at the aggregation switches, which seems to roughly correspond to half the CCT gap between \name and \drb.

\begin{figure}
    \centering
    \begin{subfigure}[]{1\linewidth}
    \centering
        \includegraphics[width=0.6\textwidth]{f/reduced_legend.png}
    \end{subfigure}   
    \par   
    \begin{subfigure}[b]{0.48\linewidth}
        \includegraphics[width=\textwidth]{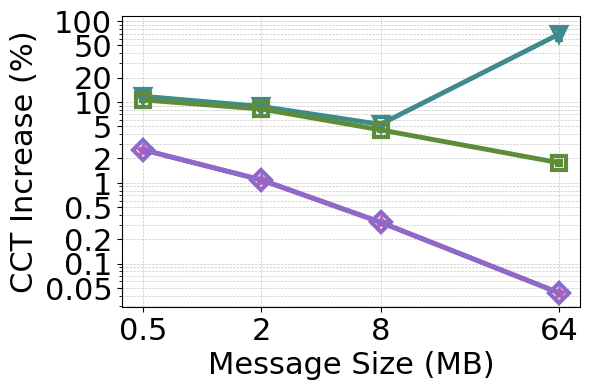}
        \caption{Permutation}
        \label{fig:40d}
    \end{subfigure}%
    \hfill    
    \begin{subfigure}[b]{0.48\linewidth}
        \includegraphics[width=\textwidth]{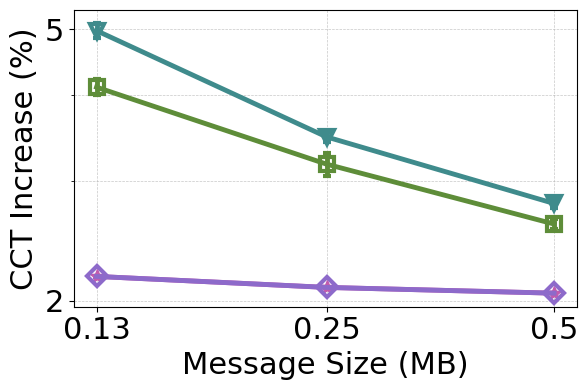}
        \caption{All-to-All}
    \end{subfigure}%
    \caption{CCT increase as  message size increases.}
    \label{fig:messagesize}
\end{figure}

\mypar{Message Size}
\Cref{fig:messagesize} plots the CCT increase as the message size increases. First, the relative ordering of the algorithms does not change, and \four and \name keep beating \reps and \ar. 
Second, the absolute CCT gap between them decreases with increasing message size. This is because, as the message size $\m$ increases, the CCT gap is expected to increase as $\sqrt{\m}$ (\cref{sec:theory}) while CCT grows linearly with $\m$, so the ratio should decrease like $\sqrt{\m}/\m = 1/\sqrt{\m}$.
Third, \cref{fig:40d} illustrates the surprising (and apparently previously-unreported) phenomenon that \reps gets notably worse at very large message size with permutation traffic. We conjecture that this is due to the ``freezing mode'' in \reps triggering as queues become long enough to cause a timeout~\cite{bonatoreps}. %

\begin{figure}
    \centering
    \begin{subfigure}[]{0.9 \columnwidth}
        \includegraphics[width=\columnwidth]{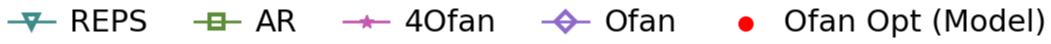}
    \end{subfigure}    
    \begin{subfigure}[t]{0.48\columnwidth}
        \centering
        \includegraphics[width=\linewidth]{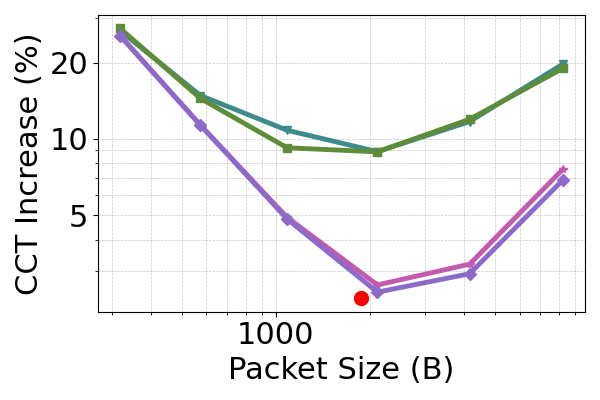}
        \caption{$0.5\unit{MB}$ message}
        \label{fig:pktsize1}
    \end{subfigure}
    \hfill
    \begin{subfigure}[t]{0.48\columnwidth}
        \centering
        \includegraphics[width=\linewidth]{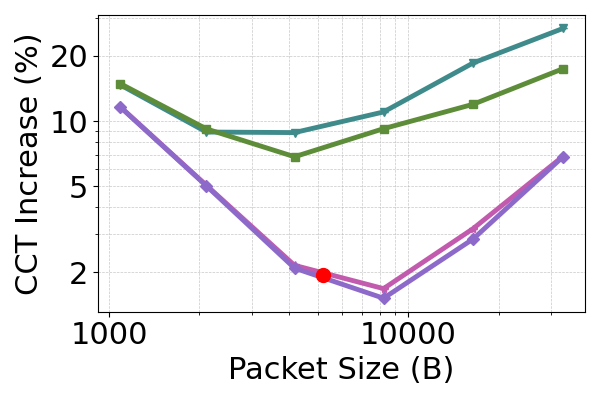}
        \caption{$4\unit{MB}$ message}
        \label{fig:pktsize2}
    \end{subfigure}
    \caption{CCT increase as packet size increases, for two message sizes in bytes.
    }
    \label{fig:pktsize}
\end{figure}

\mypar{Packet Size} 
We illustrate with a red dot in~\cref{fig:pktsize} the optimal size and modeled CCT increase \vs our best constructed lower bound. The graph shows a clear ``sweet spot'' in packet size at which CCT is optimized.  We can see how the model roughly predicts the packet-size sweet spot observed in simulation and its CCT increase. 
\cref{fig:pktsize2} shows how the shorter queue sizes of \name also encourage larger packets: We can derive that when using spraying algorithms with square-root queue-size growth, optimal packet size only grows as $\Theta\para{D^{1/3}}$ rather than $\Theta\para{\sqrt{D}}$. 
Also note that we neglect second-order factors, \eg the CRC may need to be a bit longer to protect longer packets.
This result suggests many avenues for future work on optimizing performance through dynamically tuning packet size, and exploring standardization support for larger Ethernet MTUs.

\subsection{Link Failures}\label{sec:failures}

\begin{figure}
    \centering
    \begin{subfigure}[]{0.7\columnwidth}
    \centering
        \includegraphics[width=0.8\columnwidth]{f/reduced_legend.png}
    \end{subfigure}
    \begin{subfigure}[]{0.7\columnwidth}
        \includegraphics[width=\columnwidth]{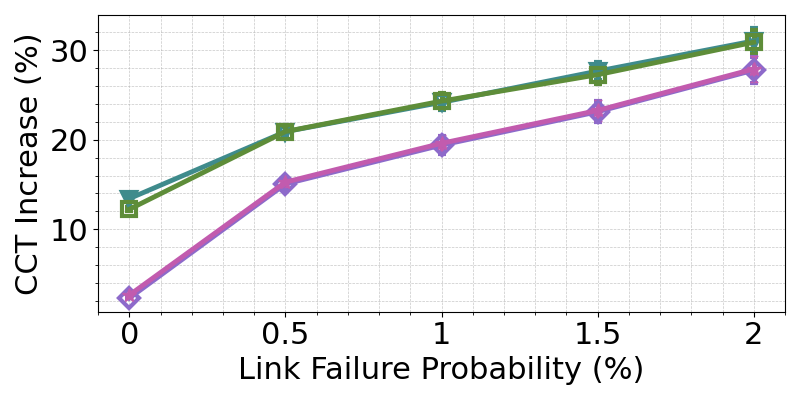}
    \end{subfigure}%
    \caption{CCT increase due to random link failures (beyond the CCT baseline  in a fully-connected network).}
    \label{fig:fail}
\end{figure}

To evaluate the impact of random link failures, we drop each edge-aggregation and aggregation-core link with a given probability in each topology instance. We then look at the increase in the average CCT, compared to the CCT lower-bound baseline without link failures. Note that in each topology instance, given the set of link drops, 
each host paces at a fixed rate, $\rhomax$, computed as the highest rate at which all flows could send if traffic were balanced perfectly across all valid shortest paths, to make sure it's feasible (\cref{sec:approach}).

\Cref{fig:fail} shows the results. We can see that the relative ordering of the LB schemes is kept about the same, and \name keeps outperforming \reps and \ar.

\subsection{Different Topologies}
\label{sec:topo}

\begin{figure}
    \centering
    \begin{subfigure}[]{1\linewidth}
    \centering
        \includegraphics[width=0.7\textwidth]{f/reduced_legend.png}
    \end{subfigure}   
    \par   
    \begin{subfigure}[b]{0.48\linewidth}
        \includegraphics[width=\textwidth]{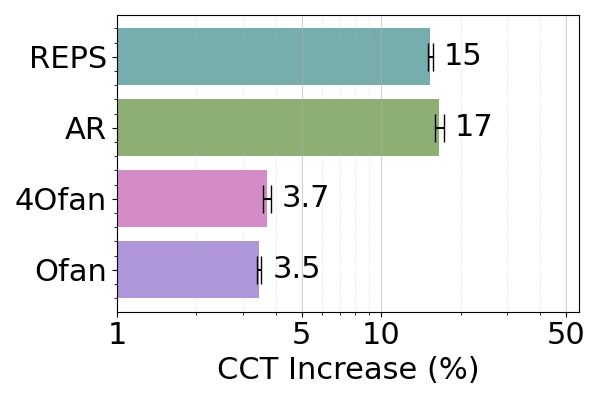}
        \caption{CCT}
    \end{subfigure}%
    \hfill    
    \begin{subfigure}[b]{0.48\linewidth}
        \includegraphics[width=\textwidth]{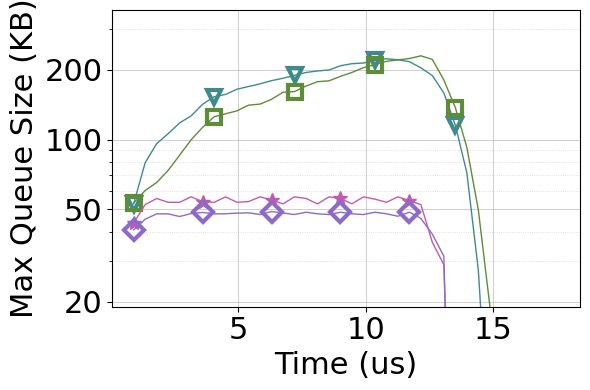}
        \caption{Max queue size}
    \end{subfigure}%
    \caption{Two-level fat-tree with 2,048 hosts.}
    \label{fig:topo}
\end{figure}

\Cref{fig:topo} illustrates the results of running the algorithms in a two-level fat-tree with switch radix $\k=64$ and $\k^2/2=2,048$ hosts. \name and \four keep outperforming \reps and \ar and reduce the CCT by over 11\%.

\subsection{Incremental Deployment}
\label{sec:mix}

\begin{figure}
    \centering
    \begin{subfigure}[]{1\linewidth}
    \centering
        \includegraphics[width=0.7\textwidth]{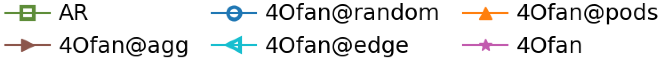}
    \end{subfigure}   
    \par   
    \begin{subfigure}[b]{0.48\linewidth}
        \includegraphics[width=\textwidth]{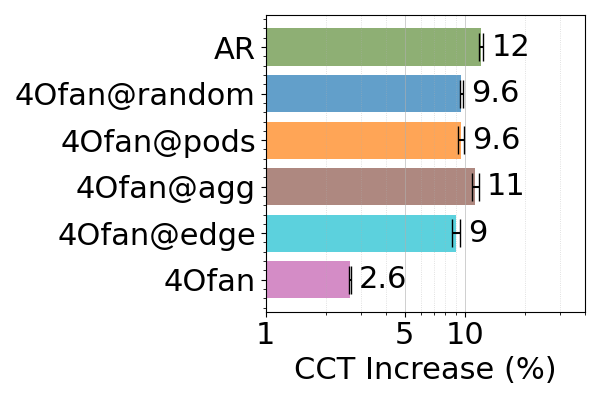}
        \caption{CCT}
    \end{subfigure}%
    \hfill    
    \begin{subfigure}[b]{0.48\linewidth}
        \includegraphics[width=\textwidth]{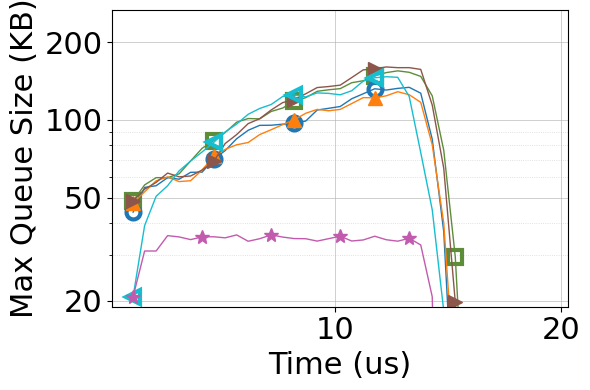}
        \caption{Max queue size}
    \end{subfigure}%
    \caption{Incremental deployment: scenarios where half the switches use \ar and half \four.}
    \label{fig:mix}
\end{figure}

We want to study the impact of incrementally deploying \four in a datacenter using \ar, since operators roll out new switch generations incrementally~\cite{poutievski2022jupiter,zhang2019lifecycle} and may worry about mixing switches with conflicting LB policies. We assume that exactly half of the LB switches (\ie edge and aggregation) use \four, while the other half keeps using \ar. We compare four ways of selecting the switches to deploy \four: (i)~in random switches, (ii)~in random pods (which seems closest to how operators deploy new switches, one aggregation block at a time~\cite{poutievski2022jupiter}), and in (iii)~all aggregation or (iv)~all edge switches.

\Cref{fig:mix} conveys three lessons. 
First, introducing \four in some switches only decreases the CCT, \ie mixing the two LB policies in the network is not harmful. 
Second, the reduction in CCT is less than proportional to the fraction of upgraded switches: the midpoint between the CCT increases of \ar and \four is 7.3\%, and the CCT increase in all mixes is slightly higher. This was expected, since the CCT is set by the slowest flows, which are likely to cross at least one \ar switch, so measuring the worst-case scenario is intrinsically unfavorable to mixes. 
Last, interestingly, the best place to deploy \four is at the edge switches. %

\subsection{Congestion Control }
\label{sec:cca}

\begin{figure*}
\centering
\begin{subfigure}[]{0.32\linewidth}
    \includegraphics[width=\linewidth]{f/reduced_legend.png}
\end{subfigure}  
\par
\begin{subfigure}{0.32\linewidth}
	\includegraphics[width=0.75\textwidth]{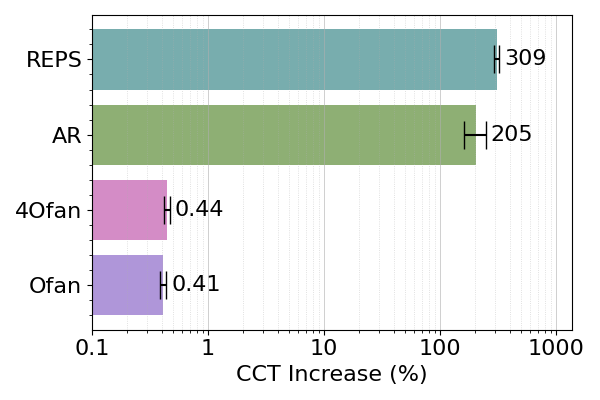}
	\caption{DCTCP: CCT} 
	\label{fig:d1}
\end{subfigure}
\hfill
\begin{subfigure}{0.32\linewidth}
	\includegraphics[width=0.75\textwidth]{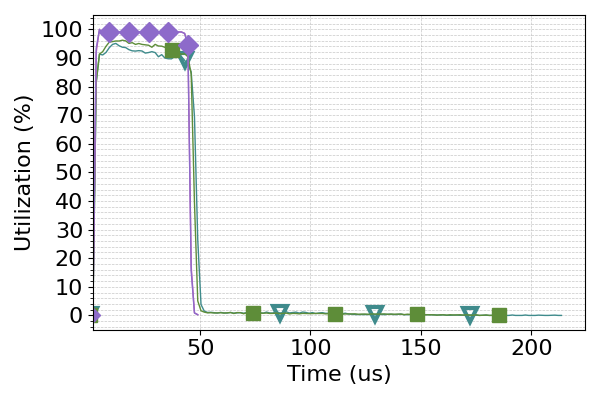}
	\caption{DCTCP: Utilization}
	\label{fig:d2}
\end{subfigure}
\hfill
\begin{subfigure}{0.32\linewidth}
	\includegraphics[width=0.75\textwidth]{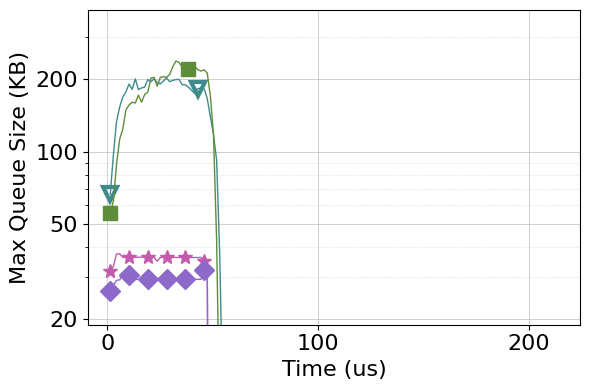}
	\caption{DCTCP: Max Queue Size}
	\label{fig:d3}    
\end{subfigure}

\vspace{2ex} %

\begin{subfigure}{0.32\linewidth}
	\includegraphics[width=0.75\textwidth]{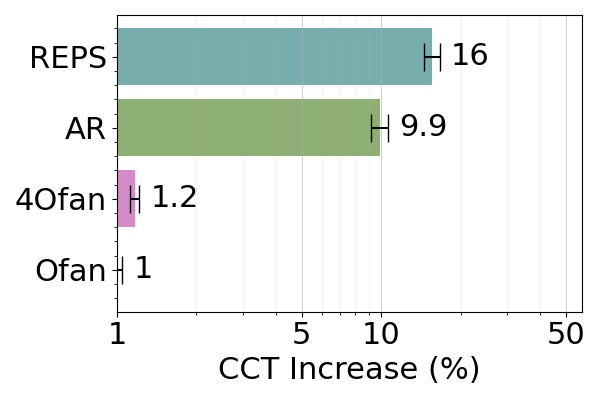}
	\caption{NSCC: CCT} 
	\label{fig:n1}
\end{subfigure}
\hfill
\begin{subfigure}{0.32\linewidth}
	\includegraphics[width=0.75\textwidth]{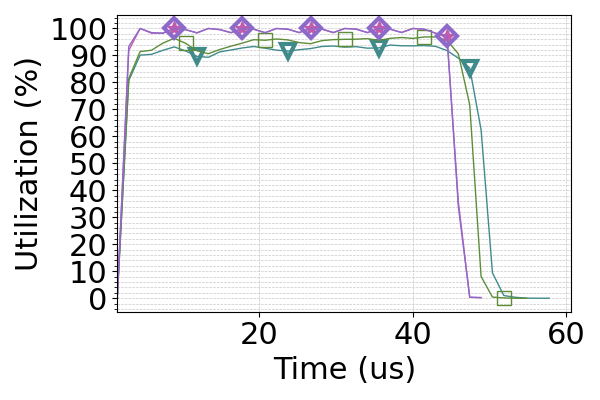}
	\caption{NSCC: Utilization}
	\label{fig:n2}
\end{subfigure}
\hfill
\begin{subfigure}{0.32\linewidth}
	\includegraphics[width=0.75\textwidth]{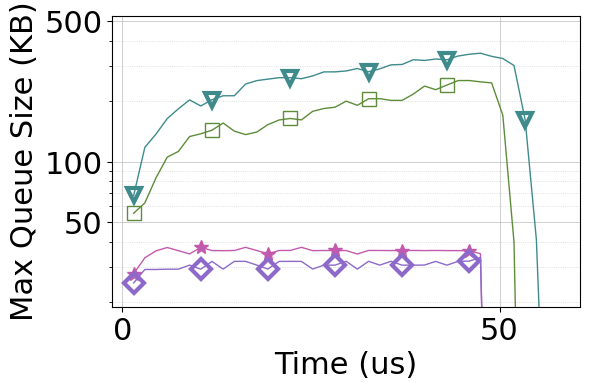}
	\caption{NSCC: Max Queue Size}
	\label{fig:n3}    
\end{subfigure}

\vspace{2ex} %

\begin{subfigure}{0.32\linewidth}
	\includegraphics[width=0.75\textwidth]{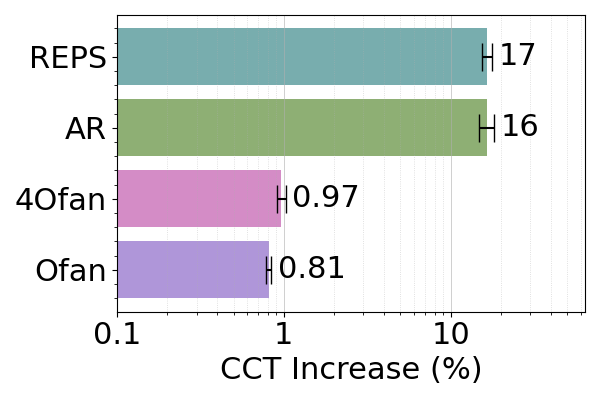}
	\caption{MSwift: CCT} 
	\label{fig:m1}
\end{subfigure}
\hfill
\begin{subfigure}{0.32\linewidth}
	\includegraphics[width=0.75\textwidth]{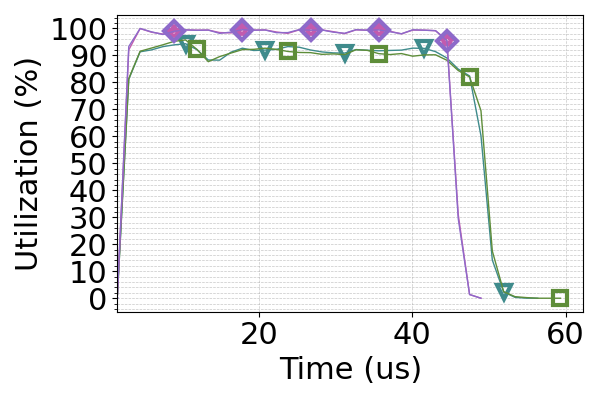}
	\caption{MSwift: Utilization}
	\label{fig:m2}
\end{subfigure}
\hfill
\begin{subfigure}{0.32\linewidth}
	\includegraphics[width=0.75\textwidth]{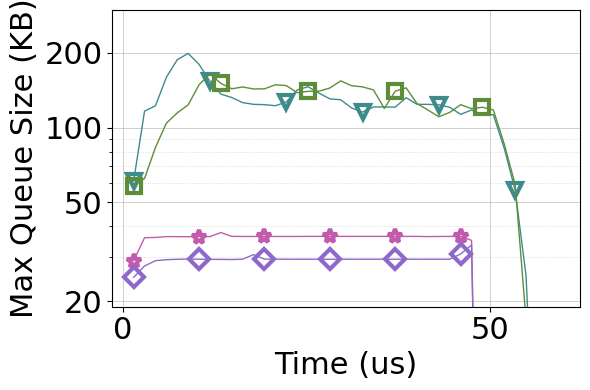}
	\caption{MSwift: Max Queue Size}
	\label{fig:m3}    
\end{subfigure}
\caption{Permutation with  $4$\,MB messages using either (a-c)~DCTCP, (d-f)~UEC's NSCC or (g-i)~MSwift.}
\label{fig:cca}
\end{figure*}

We now compare per-packet spraying algorithms with CCA. We rely on three different CCAs: (i)~DCTCP, a standard datacenter CCA, and two recent CCAs designed for spraying: (ii)~UEC's NSCC~\cite{uec,uec-spec} and (iii)~MSwift~\cite{mswift}. We keep evaluating a permutation traffic, but increase the message size from $1\unit{MB}$ to $4\unit{MB}$ to give enough time for the CCAs to adapt (propagation RTT is about 150 packets, so we increase the message size from 1.7 RTTs to 6.8 RTTs).

\Cref{fig:cca} illustrates the CCA impact. 
With DCTCP, the CCTs of \reps and \ar are just awful, as they are multiplied by $3$--$4\times$, while \pipe and \name have a CCT increase under 1\%. The main reason is that even though htsim's default DCTCP implementation allows for high duplicate-ACK threshold (specifically, 3 plus the number of paths that can be taken by the flow), a few flows still reach the threshold, aggressively decreasing their windows and hurting the CCT. With such a stark difference, we drop DCTCP from now on.

With the spraying-based CCAs, \reps and \ar also perform worse, but by more reasonable margins.
Each CCA needs to rein down the flow rates for \reps and \ar, and therefore their CCT inflation can worsen by at least 8\% with NSCC and 14\% with MSwift \vs \name and \four, which do not need to be slowed down as much. The difference is a bit larger with MSwift as it targets smaller queues than NSCC.

\subsection{MoE Workload}
\label{sec:moe}

\begin{figure}
\centering
\begin{subfigure}{0.49\linewidth}
	\includegraphics[width=\textwidth]{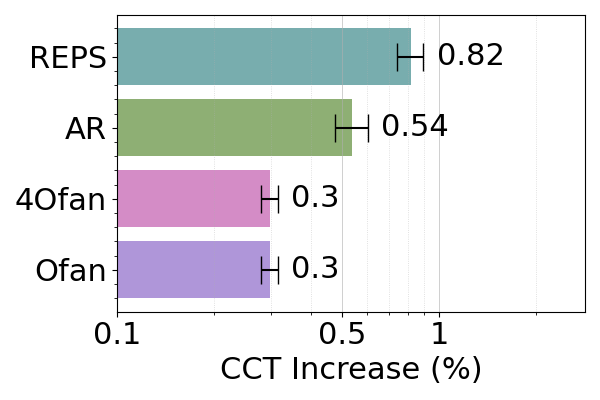}
	\caption{No CCA} 
	\label{fig:moe1}
\end{subfigure}
\hfill
\begin{subfigure}{0.49\linewidth}
	\includegraphics[width=\textwidth]{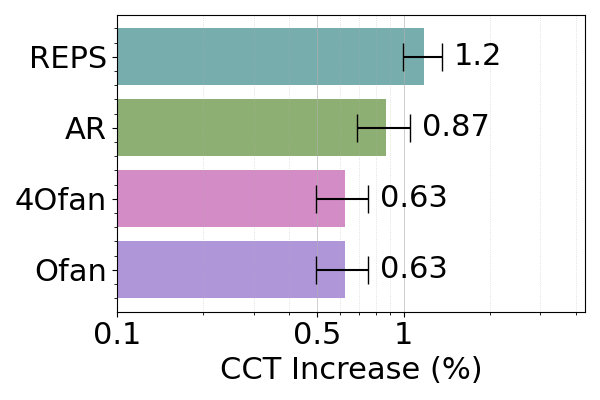}
	\caption{No CCA+EPLB} 
	\label{fig:moe2}
\end{subfigure}

\vspace{2ex} %

\begin{subfigure}{0.49\linewidth}
	\includegraphics[width=\textwidth]{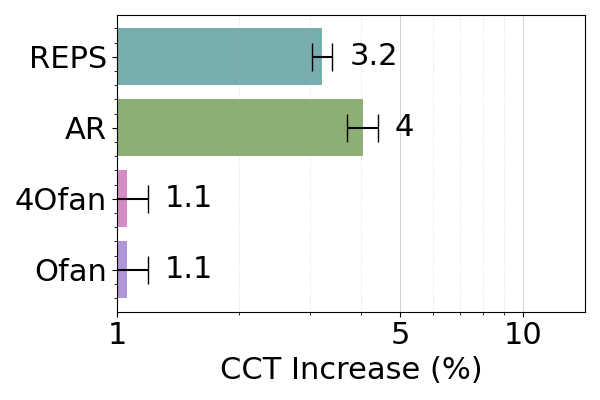}
	\caption{NSCC}
	\label{fig:moe3}
\end{subfigure}
\hfill
\begin{subfigure}{0.49\linewidth}
	\includegraphics[width=\textwidth]{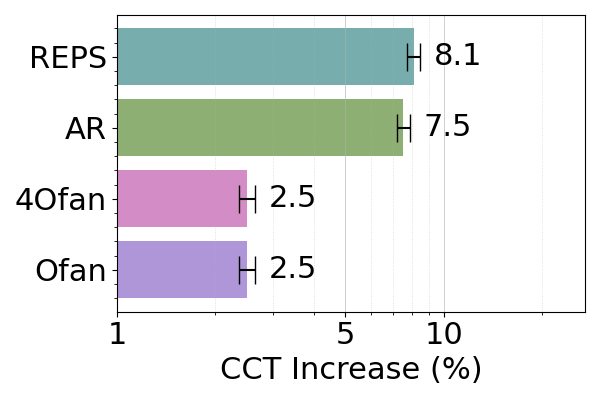}
	\caption{NSCC+EPLB}
	\label{fig:moe4}
\end{subfigure}
\caption{MoE workload: (a-b) without CCA and (c-d) with NSCC, either without or with EPLB reducing the skew.}
\label{fig:moe}
\end{figure}

We generate an MoE routing trace by running one training iteration of the open-source Qwen3-235B model, which has 128 experts per layer and activates 8 experts per token~\cite{qwen3}, starting from its released weights with synthetic input tokens, and recording the token-to-expert routing decisions of one layer.
The trace yields a $128\times128$ GPU-to-expert matrix of token counts. Each token copy carries 8,192\,B for a hidden size of 4,096 in BF16. We then consider a $\k=8$ fat tree with $\n=128$ GPUs. At each run, we randomly map one expert per GPU, and obtain a $128\times128$ GPU-to-GPU traffic matrix. As in DeepSeek-V3's DeepEP~\cite{deepseekai2025deepseekv3technicalreport,deepep}, dispatch is a hierarchical all-to-all over 8-GPU servers: each GPU sends the tokens for a remote server to its same-rank peer there, which forwards them over NVLink, so only inter-server traffic crosses the fat tree. Each GPU sends 492\,MB on average over the fat tree, and the load is skewed: the busiest GPU receives $1.48\times$ the mean (728\,MB). Given these large messages, we use 64\,KB packets, in line with the larger optimal packet size of \cref{thm:pkt}, which also keeps the simulations tractable. 

To evaluate the sensitivity to a less skewed MoE workload, we also replicate popular experts using expert-parallel load balancing (EPLB)~\cite{deepseekai2025deepseekv3technicalreport,eplb}, such that each GPU hosts two expert slots, which reduces the skew to $1.13\times$.

\Cref{fig:moe} illustrates the results. We start with an idealized example without CCA (\cref{fig:moe1,fig:moe2}), where we just allow all sources to blast traffic with infinite buffers and no losses. Since MoE traffic is incast-like and is bound by the most requested GPU, the bottleneck is at the receiver and not in the network. Thus, buffers grow rapidly (beyond $100\unit{MB}$), and the LB impact is small, although \name's CCT is still a bit better. 

We next employ NSCC, and differences become larger, especially with a smaller skew (\cref{fig:moe4}), as LB takes a larger role. \name clearly outperforms other algorithms. Note that since $\w=\k/2=4$ here, \name and \four are equivalent.

\subsection{Training Scenario}
\label{sec:e2e}

\begin{figure*}
    \centering
    \begin{subfigure}[h!]{0.35\linewidth}
        \includegraphics[width=\linewidth]{f/reduced_legend.png}
    \end{subfigure}  
    \par    
    \begin{subfigure}[t]{0.19\linewidth}
        \centering
        \includegraphics[width=\linewidth]{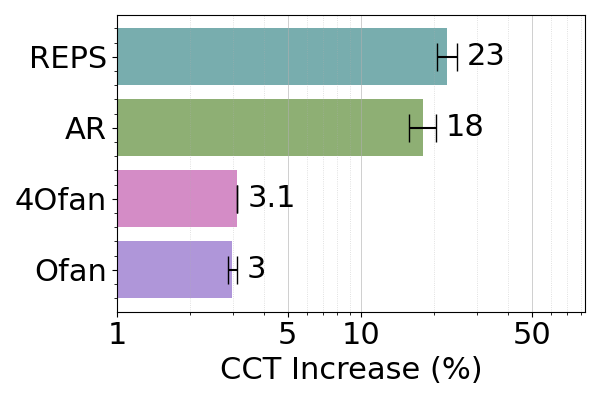}
        \caption{8B, NSCC: CCT}
        \label{fig:F1}
    \end{subfigure}
    \hfill
    \begin{subfigure}[t]{0.19\linewidth}
        \centering
        \includegraphics[width=\linewidth]{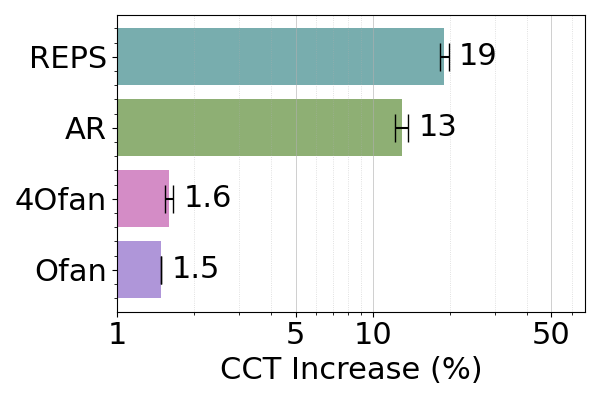}
        \caption{70B, NSCC: CCT}
        \label{fig:F2}
    \end{subfigure}
    \hfill
    \begin{subfigure}[t]{0.19\linewidth}
        \centering
        \includegraphics[width=\linewidth]{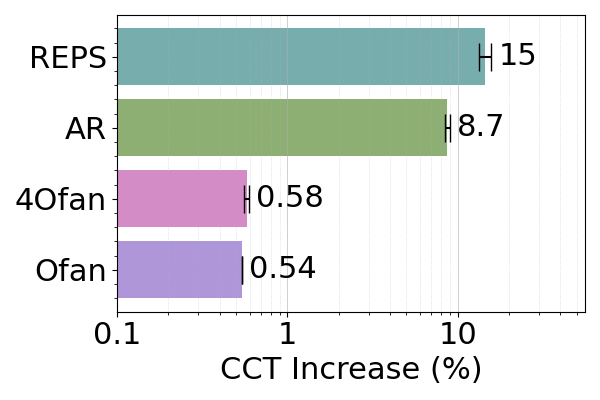}
        \caption{405B, NSCC: CCT}
        \label{fig:F3}
    \end{subfigure}
    \hfill    
    \begin{subfigure}[t]{0.19\linewidth}
        \centering
        \includegraphics[width=\linewidth]{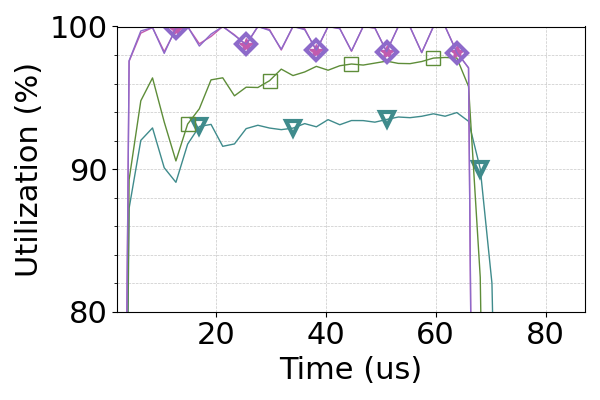}
        \caption{405B, NSCC: Util.}
        \label{fig:F4}
    \end{subfigure}    
    \hfill
    \begin{subfigure}[t]{0.19\linewidth}
        \centering
        \includegraphics[width=\linewidth]{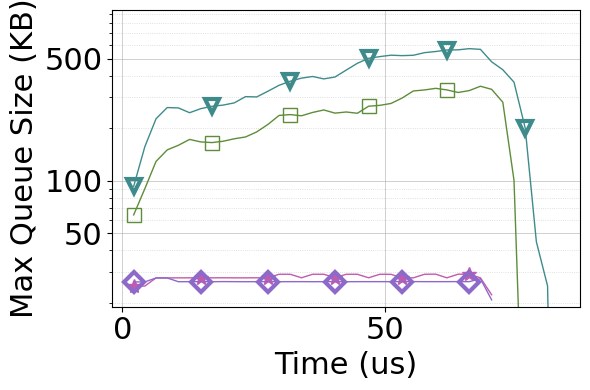}
        \caption{405B, NSCC: Queue}
        \label{fig:F5}
    \end{subfigure}

\vspace{2ex} %

    \begin{subfigure}[t]{0.19\linewidth}
        \centering
        \includegraphics[width=\linewidth]{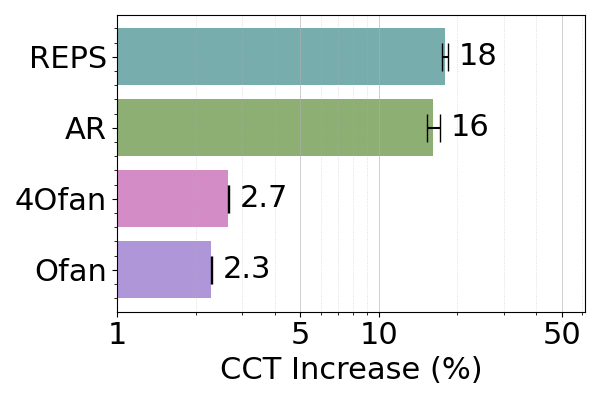}
        \caption{8B, MSwift: CCT}
        \label{fig:F6}
    \end{subfigure}
    \hfill
    \begin{subfigure}[t]{0.19\linewidth}
        \centering
        \includegraphics[width=\linewidth]{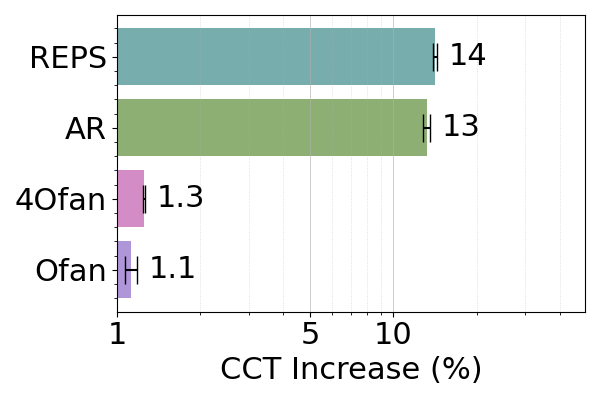}
        \caption{70B, MSwift: CCT}
        \label{fig:F7}
    \end{subfigure}
    \hfill
    \begin{subfigure}[t]{0.19\linewidth}
        \centering
        \includegraphics[width=\linewidth]{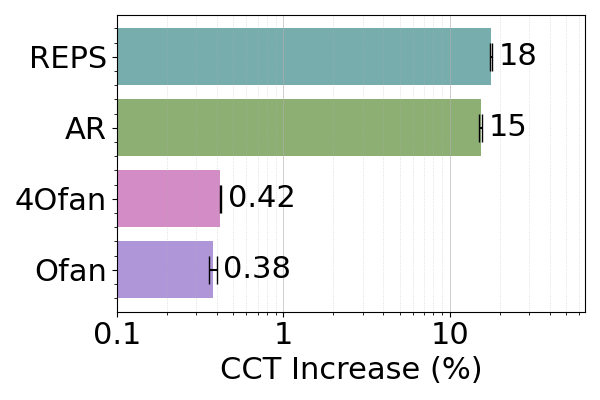}
        \caption{405B, MSwift: CCT}
        \label{fig:F8}
    \end{subfigure}
    \hfill    
    \begin{subfigure}[t]{0.19\linewidth}
        \centering
        \includegraphics[width=\linewidth]{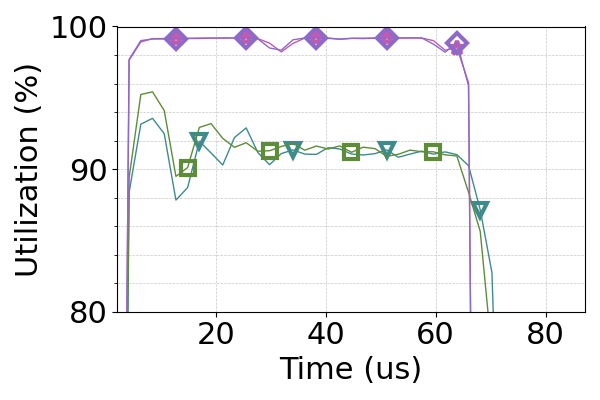}
        \caption{405B, MSwift: Util.}
        \label{fig:F9}
    \end{subfigure}    
    \hfill
    \begin{subfigure}[t]{0.19\linewidth}
        \centering
        \includegraphics[width=\linewidth]{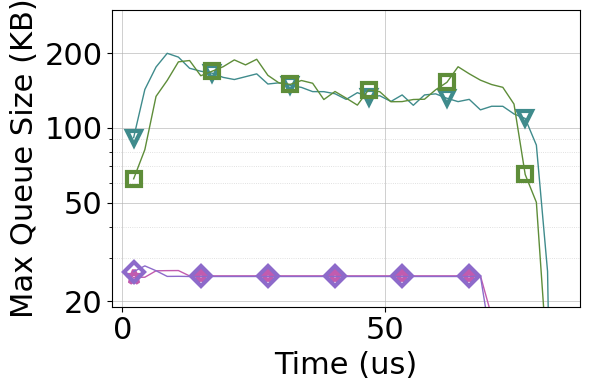}
        \caption{405B, MSwift: Queue}
        \label{fig:F10}
    \end{subfigure}
    \caption{FSDP2 with Llama-3 8B, 70B and 405B parameters using either (a-e)~NSCC or (f-j)~MSwift.
        }
    \label{fig:FSDP}
\end{figure*}

Finally, we evaluate the performance of each LB scheme on a realistic AI training scenario. 
Specifically, we simulate training Llama-3~\cite{llama3modelcard} (8B, 70B, and 405B) parallelized using Fully Sharded Data Parallel (FSDP2)~\cite{rajbhandari2020zero,zhao2023pytorch,fsdp2} with HSDP configuration on a 1,024-GPU cluster.
The cluster is organized as 128 servers of 8 GPUs, and we assume random server placement. 
We evaluate the main network bottleneck in FSDP2 training by focusing on the backward pass, as it combines a ReduceScatter of gradients and an AllGather of weights.
FSDP2 traffic is implemented as a hierarchical ring, where intra-server communications are performed in the high-bandwidth domain (scale-up, like NVLink), while our fat-tree network implements inter-server communications (scale-out) using 8 parallel rings: each logical GPU $i$ is connected to logical GPU $i+8 \pmod{1,024}$. Because sharding is $1{,}024$-way while each ring spans only $128$ GPUs, a ring step carries one $1/1{,}024$ shard rather than $1/128$ of a layer; the intra-server AllGather then completes the remaining $7/8$. 
We assume FP8 precision and $4\unit{KB}$ packet payload, translating to 119 packets per flow per layer in Llama-3 8B (32 layers), 418 in Llama-3 70B (80 layers) and 1533 in Llama-3 405B (126 layers). 
We enable both the NSCC and MSwift CCAs with loss recovery, and  
measure the CCT of each collective phase.

As shown in \Cref{fig:FSDP}, while \reps and \ar overheads remain about constant, the relative CCT increase for \name drops as model size scales. When compared against the best out of the two existing models, at 405B parameters, \name (resp. \four) reduces the CCT increase beyond baseline by $16\times$ (resp. $15\times$) with NSCC and $39\times$ (resp. $36\times$) with MSwift. This improvement stems from superior utilization (\cref{fig:F4,fig:F9}): Since the CCA must throttle \reps and \ar to maintain stable queue sizes (\cref{fig:F5,fig:F10}), \name achieves significantly higher effective throughput.

\section{Discussion}
\label{sec:discussion}

\mypar{Realizing \name's pointers at line rate} \S\ref{sec:implementation} presented \pipe, the variant that sidesteps cross-pipe coordination by keeping an independent pointer per pipe. It is fair to ask what it would take to implement \name's shared pointers exactly, without that approximation. The difficulty is concurrent access: a switch processes many packets per cycle, and several of them may be bound for the same destination switch and therefore contend for the same round-robin register. This is a classical problem with many processors issuing fetch-and-add to a shared counter. Its  classical solution, a \emph{combining network}~\cite{kruskal1988efficient,lebeck1994request}, merges duplicate requests before they reach the counter. The general machinery is too costly and too variable in latency for a switch pipeline, but our problem is far narrower: a small, fixed set of local pointers. In \cref{sec:implementation-details} we show that a fixed-latency, four-stage pipeline suffices. It sorts the batch by pointer key, combines packets sharing a key, updates each key's register once by the group's count, then decombines and unsorts so each packet leaves with its own port. A back-of-envelope estimate puts this at roughly $10^5$ gates and a few milliwatts, against a switch ASIC that draws hundreds of watts: negligible. 
We therefore regard \name's exact form as implementable, with \pipe as the pragmatic near-term option.

\section{Related Work}
\label{sec:related}

There is a vast literature on load balancing in datacenters, and we cannot hope to summarize it in any depth. The citations in \cref{tab:models} provide a large subset. 
Many proposed systems \cite{letitflowlet, kandulaflowlet, plb, meta} do not seriously consider packet spraying given its potential adverse effects on the transport due to reordering. Some systems do include spraying in their design \cite{ndp, eqds, dcpim,pfabric}, but often introduce specialized mechanisms to make spraying work (\eg proposing an entirely new transport). Two exceptions are \cite{rps} and \cite{bonatoreps}, where packet spraying is evaluated against ECMP and some other load balancing granularities, but one transport is assumed and the distance from optimality is unclear. 

We already discussed throughout the paper many LB schemes that we compare against: switch- and host-based spraying in \cref{sec:intro} and \cref{tab:models}, and the destination-based \drb and \ethereal in \cref{sec:dest-aware}.

\mypar{Fat-tree routing in HPC} Routing in fat trees has long been studied for HPC interconnects~\cite{duato650interconnection}. Deterministic routing sends all the traffic to a given destination along a fixed path, as in ECMP, without packet spraying. In the paper we focus more on packet spraying, which can achieve better load balancing.

\mypar{Load-balanced routers} Destination-based rotation has roots in three-stage load-balanced router algorithms such as UFS, FOFF and Padded Frames~\cite{LBR,thesis,padded-frames}. Load-balanced routers spread the traffic of each flow cyclically over intermediate ports to guarantee full throughput.

\mypar{\sglb} Broadcom offers a Global Load Balancing (GLB) feature in its Tomahawk chips~\cite{bytedance-glb,juniper-glb}. However, GLB is limited to 3-stage Clos networks with at most 64 edge switches, and cannot run end-to-end on a 5-layer fat-tree topology~\cite{juniper-glb}. This is because GLB relies on the "Next-Next-Hop" (NNH) visibility. This inherently limits the decision scope. For example, an edge switch in a 5-layer fat-tree cannot see the congestion status of the link between a core switch and a distant aggregation switch. ByteDance's \sglb introduces several ways to mitigate this limitation in three-level fat trees~\cite{bytedance-glb}, which we apply in our evaluations.

\ifblind
\else
    \section*{Acknowledgments}
    We thank Nandita Dukkipati, Ziming Mao and David Tennenhouse for their helpful comments. This work was partly supported by the Louis and Miriam Benjamin Chair in Computer-Communication Networks.
\fi

\section{Conclusion}

AI training pushes every source to $100\%$ of line rate, leaving no slack to hide imperfect load balancing: queues grow with the message size and utilization stalls. Because a fat tree's downward path to a destination is unique, destination-oblivious schemes cannot undo the imbalance they create on the way up, as we proved for randomized spraying using an $\Omega(\sqrt{\m})$ queueing lower bound. %

\name instead rotates per destination in the switch, consolidating all traffic sharing a destination onto one pointer. Moreover, we showed how its \four variant fits commodity ASICs 
and cuts CCT inflation by $15$--$36\times$ in an end-to-end FSDP2 evaluation at 405B parameters.

\bibliographystyle{plain-etal} %
\bibliography{references}

\appendix{}

\section{Permutation CCT Lower-Bound}\label{sec:lowerbound}

\mypar{Outline} Our goal is to provide a tight lower-bound baseline for the CCT of a permutation workload when hosts send at 100\% of the link rate. This baseline is used in all of the figures that study the CCT increase caused by a load-balancing scheme beyond this baseline.

\mypar{Example} With our default assumptions (\cref{sec:sensitivity}), we get below a lower bound of $17.05694\us$ for a message size $\m=256.$ We want to compare that to smallest value that we can get. We only send a flow from the first host to the last and back in a small network ($\k=4$), using all algorithms, iterate 1000 times, and take the minimum over all possibilities, we measure in practice a value of $17.0587\us$ (with $\k=8$, we get $17.0609\us$, which is less tight). That is, the difference is $10^{-4}$ of the CCT, less than a packet time, showing the tightness of the lower bound.

\mypar{Dynamics} \Cref{fig:ACK} illustrates how permutation collectives in fat trees exhibit interesting interactions between data and ACK packets, as each host is at the same time a sender and a receiver. When computing the CCT, we find that there are in fact three successive modes of communication: (1)~Data packets only, then (2)~interleaved data and ACK packets, and finally (3)~ACK packets only. In order to provide a lower bound for the CCT, we need to take these three modes into account. 

\begin{figure}
    \centering
    \includegraphics[width=0.65\linewidth]{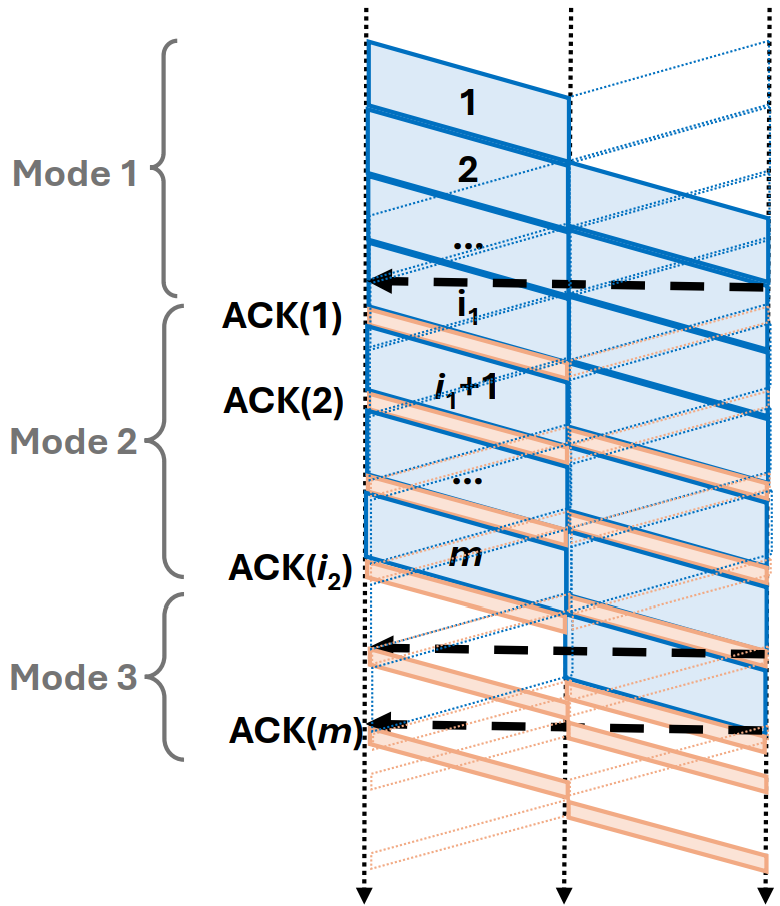}
    \caption{Illustration of the symmetric dynamics of the permutation collective. 
    }
    \label{fig:ACK}
    \vspace{-1em}
\end{figure}

\mypar{Assumptions} In the remainder, we focus on a node that receives a flow of data packets going through $\H=6$ hops, and sends another flow of ACK packets traversing $\H$ hops as well. Let $\Td$, $\Ta$, $\Tg$ and $\p$ respectively denote the data and ACK packet serialization times, the gap time between packets, and the one-way propagation time. Also, let $\d=\Td+\Tg$ and $\a=\Ta+\Tg$.

\mypar{Mode 1: only data} The last bit of the first data packet of each message goes through at least a propagation time of $\p$ and $\H$ serialization delays of $\Td$, thus packet 1 arrives at time $t_1=\p+\H\cdot\Td$.

When it gets to the receiver, this receiver is in the middle of sending data packet indexed $i_1$. It needs $\Td+(i_1-1)\cdot \d$ to send these $i_1$ data packets. Therefore, putting both equations together yields $i_1$:

\begin{equation}
    i_1 = \ceil{\frac{\p+(\H-1)\cdot\Td}{\d}}+1
\end{equation}

\mypar{Example} With the paper's settings, $i_1=78$ data packets. In other words, each sender sends at least 78 consecutive data packets before it  sends ACKs. Of course, if $\m<i_1$, then it only sends $\m$ data packets.

\mypar{Mode 2: interleaved data and ACK packets} Once the host starts interleaving data and ACK packets, we assume that the host policy performs round-robin between the two, as long as there is data left. 

Let's study when the first ACK arrives. Before, we saw that $i_1$ data packets are sent. We also saw that packet 1 arrives at time $t_1=\p+\H\cdot\Td$. Therefore, packet $i_1$ arrives at time $t_{i_1}=t_1+(i_1-1)\d$. The first ACK for packet 1 arrives just after, at $t^{\text{ACK}}_1=t_{i_1}+\a$.

Afterwards, all packets are sent in an interleaved way, data-ACK-data-ACK-etc. Therefore, the ACK for packet $i$ arrives at $t^{\text{ACK}}_i=t^{\text{ACK}}_1+(i-1)(\d+\a)$. 

\mypar{Mode 3: only ACKs} At the end, the hosts send the $\m\th$ packet, and don't need to interleave data and ACKs anymore. This is a symmetric case from $i_1$: it happens as they send ACK $i_2$, with $i_2=\m-i_1+1$. For example, with $\m=256$ and $i_1=78$, it happens at the ACK for packet $179$.

Beyond index $i_2$, we have two constraints. First, ACKs should have time to be transmitted. We get a first lower-bound time $t^{\text{ACK}}_i(1)=t^{\text{ACK}}_{i_2}+(i-i_2)(\a)$. Second, ACK packets can obviously only be sent if they acknowledge a data packet that has arrived. Just like packet $i_1$ arrived just before ACK $1$, so does packet $i$ arrive just before ACK $i-(i_1-1)$. Thus, we obtain a second lower-bound time: $t^{\text{ACK}}_i(2)=t^{\text{ACK}}_{i-(i_1-1)}+(\H-1)(\a)+\p$. 
Finally, the lower bound is the maximum of both constraints: for $i_2+1 \leq i \leq \m$,$
t^{\text{ACK}}_i=
\max\para{t^{\text{ACK}}_i(1),
     t^{\text{ACK}}_i(2)}
$.

\section{Proof of \cref{thm:sqrt}: Square-Root Scaling} \label{sec:sqrt}

\mypar{Proof overview} We consider an edge uplink and an aggregation-to-edge downlink. In both cases, the number of packets that reach the queue is a binomial variable, \ie a sum of independent coin flips, whose mean equals the service. Its excess over the service, and therefore the queue, grows as $\sqrt\m$. We also prove that no queue grows faster than $\sqrt\m$, since every link has load at most 1.

\begin{proof}[Proof of \cref{thm:sqrt}]

\T{No internal traffic.} Since the permutation is a uniform random derangement, the flow destination from any sending host $h$ is uniform over the $\n-1$ other hosts, and it lies under $h$'s own edge switch with probability $\frac{\w-1}{\n-1}$. A union bound over the $\w$ hosts of an edge switch shows that it has no internal traffic with probability
\begin{equation}
    P_{\text{northbound}} \;\geq\; 1-\w\cdot\frac{\w-1}{\n-1} \;=\; 1-\frac{\w(\w-1)}{2\w^3-1} \;\geq\; \frac{13}{15}
    \label{eq:northbound}
\end{equation}
for any fat tree with $\w\geq2$, 
since the middle term is largest at $\w = 2$.

\T{Binomial counts.} Fix an index $\ell\in\{1,\ldots,\w\}$. First, consider uplink $\ell$ of an edge switch $E$, and let $A_t$ be the number of packets it receives among the first $t$ packets of each host. If $E$ has no internal traffic, its $\w$ hosts send all their packets northbound, and each packet independently takes uplink $\ell$ with probability $1/\w$. 

Second, consider the downlink from the $\ell$-th aggregation switch of a pod to an edge switch $F$ of this pod. A packet that leaves its source edge switch through uplink $\ell$ reaches the $\ell$-th aggregation switch of the destination pod, so the downlink receives exactly the packets that the $\w$ sources sending to hosts under $F$ send through uplink $\ell$. Let $A_t$ count them among the first $t$ packets of each source. If $F$ has no internal traffic, each of these packets again takes uplink $\ell$ independently with probability $1/\w$. 

In both cases, $A_t$ is binomial with $\w t$ trials and parameter $1/\w$, with mean $t$ and variance $V=t(1-1/\w)\ge t/2$.
For a binomial variable with $N$ trials and parameter $p\in[1/N,1-1/N]$, the mean absolute deviation is at least $1/\sqrt2$ times the standard deviation~\cite{berend2013sharp}. Thus $E\para{|A_t-t|}\ge\sqrt{V/2}\ge\sqrt t/2$, and since $E[A_t]=t$, $E[(A_t-t)^+]=\frac{1}{2}E\para{|A_t-t|}\ge\sqrt t/4$.

\T{Queues.} The queue serves at most one packet per packet duration, so after $t$ packet durations it holds at least $(A_t-t)^+$ packets, provided that the $A_t$ packets have arrived. This is exact for the edge uplink, which receives its packets directly from the hosts; for the downlink, it neglects upstream queueing delays. By \cref{eq:northbound}, the expected queue after $t$ packet durations is thus at least $\frac{13}{15}\cdot\frac{\sqrt t}4\ge\frac{\sqrt t}5$, and its time average over the $\m$ packet durations is at least $\frac1{5\m}\sum_{t=1}^\m\sqrt t\ge\frac1{5\m}\int_0^\m\sqrt x\,dx=\frac2{15}\sqrt\m$. Edge uplinks and aggregation-to-edge downlinks each form one fifth of the $5\n$ switch queues, leading to the $\sqrt\m$ lower bound.

\T{Upper bound.} Fix a queue $X$, and let $S_b$ count the packets routed through $X$ among the first $b$ packets of each source. Given the permutation, each packet is routed through $X$ independently, and in each packet duration the expected number of such packets is at most one, \eg $\w$ sources with probability $1/\w$ for an aggregation-to-edge downlink. Thus $S_b-S_a\le(b-a)+M_b-M_a$, where $M_b$ is a sum of $b$ independent centered terms of variance at most $1$, so $E[\max_{b\le\m}|M_b|]\le2\sqrt\m$ by Doob's maximal inequality~\cite{durrett2019probability}. Neglecting upstream queueing delays, packets reach $X$ %
after a fixed time, so by Lindley's formula the queue of $X$ is at most $\max_{a<b}(S_b-S_a-(b-a))+1\le2\max_{b\le\m}|M_b|+1$. Hence $\qup=\bigO{\sqrt\m}$ and $\qdown=\bigO{\sqrt\m}$.
\end{proof}

\section{Proof of \cref{thm:sync}: Synchronization} \label{sec:sync}

\T{Intuition for synchronization.}
We prove the synchronization properties of \simplerr and \jsq. The synchronization derives from three properties:

\mypars{1. Perfect match for first packets} Consider an edge switch without internal traffic.
Assume that flow $1$ sends the first packet, which is forwarded to uplink port $i$. Each of the
other flows then sends one packet that is forwarded to a distinct uplink port: (i)~In RR, because of the round-robin rotation; and (ii)~in AR, because the queues of these ports are initially empty, then any port that is assigned a packet gets a queue size of 1 and is removed from consideration for a duration $T$. At the end, there is a perfect 1-to-1 match of packets to uplink ports.

\mypars{2. Sticky flows} When the second packet of flow $1$ arrives after a period $T$: (i)~in $RR$, it will be sent again to port $i$ due to the round-robin periodic rotation, and (ii)~in $AR$, the first packet of flow 1 will have just left, since it takes $\pkt/\C<T$ to service it. Thus port $i$ is the only empty port and the second picks it. In both cases, port $i$ is a sticky port for flow 1. Likewise, each other flow will have its own sticky port.

\mypars{3. Aggregation switch layout}
Now we understand how flows can be stuck onto specific edge-aggregation links. In addition, in a fat tree, a flow stuck to aggregation switch $i$ of a source pod will also be stuck to aggregation switch $i$ of a destination pod (black circles in \cref{fig:cong}). Thus, two flows from different source pods but destined to the same edge switch may have picked the same aggregation switch $i$, leading to contention at the link from the destination aggregation switch to the destination edge switch and to linear growth of its queue.

\mypar{Proof outline} We first bound the probability that a flow is sticky (Step~1). We then show that an aggregation-to-edge downlink shared by several sticky flows sees its queue grow linearly (Step~2). Since sticky flows from distinct source edge switches pick their aggregation index uniformly and independently, a constant fraction of these downlinks receive at least two sticky flows (Step~3), which yields $\q=\Theta(\m)$ (Step~4). Finally, we show that the uplink queues stay bounded (Step~5).

For simplicity, the proof assumes that each flow starts with a random jitter uniformly distributed over a packet sending time.

\begin{proof}[Proof of \cref{thm:sync}]
Throughout, sources are paced at a fixed rate and are not slowed by congestion control (\S\ref{sec:approach}); and buffers are assumed infinite.
Recall that $\w=\k/2$ is the number of aggregation-to-edge downlinks entering an edge switch.

\T{Step 1: stickiness probability.} We lower-bound the probability $p$ that a flow is \emph{sticky}, \ie sends all $\m$ of its packets through a single aggregation index. We count a flow as sticky only when it satisfies the properties below (the second one only for \jsq), which can only underestimate the queue.

\mypars{Property 1: {no internal traffic}} {The flow's source edge switch has no internal traffic, \ie all its $\w$ flows are northbound. The number of northbound flows then equals the number of northbound ports, so the first round of packets forms a perfect matching of flows to ports. This happens with probability $P_{\text{northbound}}\geq 13/15$ by \cref{eq:northbound}.} 

\mypars{Property 2: safe flows}
For \jsq, stickiness also requires that each packet of the flow finds exactly one port with the strictly shortest queue, namely the port used by its previous packet. Let $T_{SERVICE}=T_{DATA}+T_{IPG}$ be the period, where $T_{DATA}$ is a packet's transmission time and $T_{IPG}$ the inter-packet gap. We call the flow \emph{safe} if no other flow of its edge switch completes a packet transmission within $T_{IPG}$ before or after the flow's own packet completes. Otherwise, the competitor's queue empties either just before the competitor's next packet arrives, which may then pick our flow's port among two empty ones, or just before our flow's next packet arrives, which may then pick the competitor's port. When the flow is safe, its next packet finds a single empty queue: its own previous port, and one packet in each other northbound port.
Thus, \jsq then picks the same port deterministically. Since all flows are periodic, the same holds for every subsequent packet. Each of the $\w-1$ other flows, whose jitter is independent and uniform over the period, completes a packet within this window of width $2T_{IPG}$ with probability $2T_{IPG}/T_{SERVICE}$, so
\begin{equation}
    P_{\text{safe}} = \left(1-\frac{2T_{IPG}}{T_{SERVICE}}\right)^{\w-1}.
\end{equation}

\mypars{Stickiness probability} The two properties are independent, since the first depends only on the permutation and the second only on the hosts' jitters. Hence $p=P_{\text{northbound}}\cdot P_{\text{safe}}$ for \jsq. For \simplerr, the round-robin rotation does not depend on queue occupancies, so safety is automatic, $P_{\text{safe}}=1$, and $p=P_{\text{northbound}}$. In both cases, the index at which a flow sticks is uniform over the $\w$ aggregation switches, since it is set by the random initial pointer or tie-breaking together with the hosts' jitters.

\T{Step 2: queue on one downlink.} If $c$ sticky flows share an aggregation-to-edge downlink $\ell$, each emits one packet per period $T$, so by the end of period $j$ the link has received at least $c(j-1)$ packets while serving at most $j$ packets, yielding $Q_\ell(j)\ge(c-1)j-c$. Averaging over $j=1,\ldots,\m$, the time-average of $\ell$'s queue is at least $(c-1)\m/2-\bigO{1}$; the drain phase after the sources stop only increases it. This matters only if a \emph{constant fraction} of the queues carry $c\ge2$, since one overloaded link among $\Theta(\n)$ queues contributes only $\Theta(\m/\n)$ to the average.

\T{Step 3: a constant fraction of downlinks is overloaded.} Let $Y_\ell$ be the number of sticky flows on downlink $\ell$, and fix a destination edge switch $E_D$. Under a permutation exactly $\w$ flows are destined to hosts under $E_D$, one per host. Their $\w$ sources share an edge switch with probability at most $\binom \w2\frac{\w-1}{\n-\w}
<\frac14$ by a union bound over pairs, so they sit at $\w$ distinct source edge switches with probability at least $3/4$ for every $\w\ge 2$ ($\k\ge4$). 

Let's condition on that event. Distinct source switches then decide independently, so the number $B$ of these flows that are sticky is $\mathrm{Bin}(\w,p)$, and each sticky flow falls on one of the $\w$ downlinks into $E_D$ uniformly and independently, its index being fixed by its own switch's jitter and initial state. 
Let $O$ denote the number of occupied downlinks into $E_D$, so that $\sum_{\ell\to E_D}(Y_\ell-1)^+=B-O$. Each of the $\w$ flows independently ends up as a sticky flow on a given downlink $\ell$ with probability $p/\w$, so $Y_\ell\sim\mathrm{Bin}(\w,p/\w)$ and $E[O]=\sum_{\ell\to E_D}P(Y_\ell\ge1)=\w\para{1-\para{1-\tfrac p\w}^{\!\w}}$. Since $E[B]=p\w$,
    \be
    \label{eq:excessED}
    \begin{aligned}
    E\Big[\sum_{\ell\to E_D}(Y_\ell-1)^+\Big]
    &= E[B]-E[O] = p\w-\w\para{1-\para{1-\tfrac p\w}^{\!\w}}\\
    &= \w\brac{\para{1-\frac p\w}^{\!\w}-(1-p)} .
    \end{aligned}
    \ee
    By Bernoulli's inequality $(1-p/\w)^{\w}>1-p$ strictly whenever $\w\ge2$ and $p\in(0,1)$, so the bracket is a positive constant depending only on the topology; it tends to $p-1+e^{-p}$ as $\w$ grows. {\Cref{eq:excessED} holds conditionally on the $\w$ sources sitting at distinct edge switches; since that event has probability at least $3/4$ and the sum is nonnegative, the unconditional expectation is at least $\frac34$ times it.} Summing over the $\k^2/2$ destination edge switches and using $\tfrac{\k^2}{2}\cdot \w=\tfrac{\k^3}{4}=\n$,
    \be
    E\Big[\sum_{\ell} (Y_\ell-1)^+\Big]\;\ge\;\frac34\,\n\,\brac{\para{1-\frac p\w}^{\!\w}-(1-p)}\;=\;\Omega(\n).
    \ee

\T{Step 4: combine.} A three-level fat tree has $\k^2/2$ edge, $\k^2/2$ aggregation and $\k^2/4$ core switches of radix $\k$, hence $\tfrac{5\k^3}{4}=5\n$ output queues in total. By Steps~2 and~3,
    \be
    \begin{aligned}
    E\sbrac{\q}&\;\ge\;\frac{1}{5\n}\cdot\frac{\m}{2}\cdot E\Big[\sum_{\ell}(Y_\ell-1)^+\Big]\\
    &\;\ge\;\frac{3}{40}\brac{\para{1-\frac p\w}^{\!\w}-(1-p)}\m\;=\;\Omega(\m).
    \end{aligned}
    \ee
    For the matching upper bound, at most $\n\m$ packets ever enter the network, so $\q$ is at most linear in $\m$.
    Hence $\q=\Theta(\m)$. Since this bound only counts aggregation-to-edge downlink queues, it also gives $\qdown=\Theta(\m)$.

\T{Step 5: bounded uplinks.} We show by induction over arrival times that no packet ever waits at an uplink. Consider a northbound packet arriving at a switch at time $t$. Each of the $\w$ southern input links of the switch carries at most one packet per period: at an edge switch, because each host sends one packet per period; at an aggregation switch, because by induction the edge uplinks do not queue, and therefore forward their packets as spaced as they receive them. Hence at most $\w-1$ other northbound packets arrived in $(t-T,t]$, and those that arrived earlier have already left, since a packet takes less than $T$ to transmit. With \simplerr, these at most $\w$ packets are consecutive northbound packets, so they are sent to distinct uplinks. With \jsq, at most $\w-1$ uplinks are busy, so the packet joins an empty one. In both cases, the packet is transmitted upon arrival, so every uplink queue holds at most one packet, and $\qup=\bigO{1}$. Conversely, $\qup=\Omega(1)$, since by \cref{eq:northbound} a constant fraction of the edge uplinks transmit one packet per period.
\end{proof}

\section{\drb Bounded Queues
    } \label{sec:DRB}

\begin{proof}[Proof of \cref{thm:DRB}] 
We want to model the queueing behavior of \hdr when using a random permutation. First, at the edge switch uplinks, we saw that we get a sum of periodic flows with random phases. However, since the permutation is random, not all of the flows are necessarily northbound. Thus, the load $\rho$ may be arbitrary. Instead, we model it using its expected value. Any of the $\w$ source hosts under an edge switch sends to another {host} internal to the same edge switch with probability $(\w-1)/(\n-1)$, therefore the average number of northbound flows is
\begin{equation}
    \w\cdot\para{1-\frac{\w-1}{\n-1}}
\end{equation}
Each such flow has a period of $\w T$, where $T$ is the packet transmission time (including inter-frame gap). We model this system as $N\cdot D/D/1${, where $N$ is this expected number of northbound flows, each of period $D=\w T$ at a given uplink port, and the load is $\rho=N/\w<1$}. We model the superposition of periodic arrivals as a Gaussian process as $\k$ increases~\cite{addie2002approximation}, together with a negative drift since $\rho<1$. Since the queue size cannot be negative, we use a truncated normal distribution. {This queue is stationary: its distribution depends on the number of flows and on $\rho$, but not on how long the flows keep sending, so the modeled average queue is independent of $\m$, \ie $\Theta(1)$.}

In the same way, for aggregation-to-core links, we neglect intra-pod traffic, and get the following expected number of inter-pod flows per source pod:
\begin{equation}
    \w^2\cdot\para{1-\frac{\w^2-1}{\n-1}}
\end{equation}
We then apply the same model. Note that the fit is not exact anymore, as traffic coming from the edge switches loses its periodicity and therefore it is important to emphasize that \textit{the ND/D/1 model is only an approximation}. Downlinks are an even coarser approximation, since each layer makes traffic less periodic.
\end{proof}

\mypar{An exact bound at the edge uplinks} For the first queueing layer, the model can be replaced by a deterministic bound. As above, each northbound flow appears at a given uplink port with period $\w T$, so it contributes at most $\frac{t-s}{\w T}+1$ packets to that port in any interval $(s,t]$. With at most $\w$ northbound flows, the port receives at most $\frac{t-s}{T}+\w$ packets in $(s,t]$ while serving one packet per $T$, so its queue never exceeds $\w+1$ packets, whatever the message size $\m$. The other layers lose this periodicity, which is why we resort to the $N\,D/D/1$ approximation there.

\begin{figure}
\centering
\begin{subfigure}[]{0.9\linewidth}
    \includegraphics[width=\linewidth]{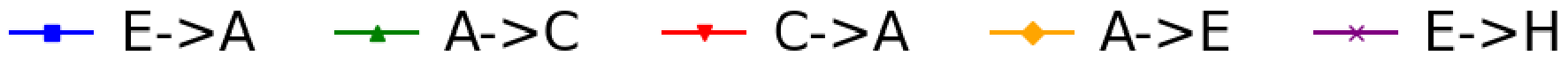}
\end{subfigure}  
\par
\begin{subfigure}{0.7\linewidth}
	\includegraphics[width=\textwidth]{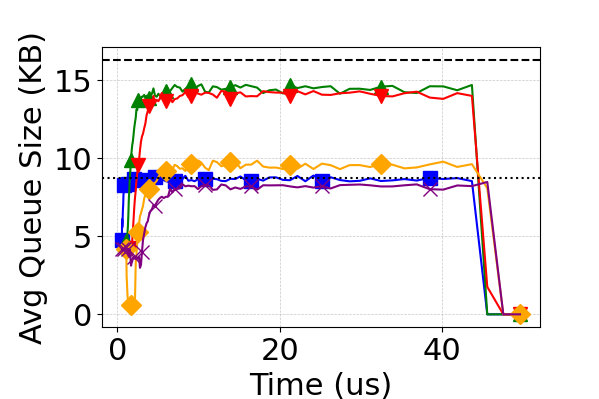}
\end{subfigure}
\caption{Queue models for \hdr. The average queue size at each of the 5 switch layers is compared against queueing models (dotted and dashed horizontal lines). }
\label{fig:drbmodel}
\end{figure}

\mypar{Validation} \Cref{fig:drbmodel} uses $\n=1,024$ hosts with a $4\unit{MB}$ message to show the behavior of \hdr at scale. It shows the average queue sizes at each of the 5 switch queueing layers: (i)~edge to aggregation ($E\to A$),   
(ii)~aggregation to core ($A\to C$),   
(iii)~core to aggregation ($C\to A$), 
(iv)~aggregation to edge ($A\to E$) and
(v)~edge to host ($E\to H$).  
It illustrates that the approximation is very close for the first E$\to$A layer. The model then is a bit pessimistic for the two core layers. It has three main sources of errors: (i)~the traffic periodicity is broken by queueing at successive switches, (ii)~the load varies and is different at each pod, and 
(iii)~we only use a {Gaussian} approximation. Finally, the two downlink edge layers are relatively close to the model as well despite the loss of periodicity.

\section{\ethereal Bounded Queues} \label{sec:ethereal}

\begin{proof}[Proof of \cref{thm:ethereal}]
We use the same model as for \hdr\ (\cref{sec:DRB}), with a single change. Under a permutation, each host sends a single flow in \drb, which \ethereal~\cite{ethereal} splits into $s$ subflows of $\m/s$ packets, one per path to the destination.
Since the NIC serves its queue pairs in round-robin~\cite{ethereal}, each subflow is a periodic source that sends one packet every $sT$, with a random phase. Hence the only change is that we model subflows, rather than flows, as sources: more sources, each with a longer period, for the same total rate.

This change does not affect the boundedness of the queue. First, each path receives exactly $\m/s$ packets of each flow, so, as with \hdr, all downlinks into a switch carry the same load and every link has load $\rho\le1$. Second, each queue is still an $N\cdot D/D/1$ queue, which is stationary, so its average does not depend on $\m$, \ie $\q=\Theta(1)$.
\end{proof}

\mypar{An exact bound at the edge uplinks} As for \hdr, the first layer admits a deterministic bound. Consider an edge uplink and a northbound flow of its edge switch. If the flow is inter-pod, $\w$ of its $\k^2/4$ subflows use that uplink, each sending one packet every $\frac{\k^2}{4}T$; if it is intra-pod, one of its $\w$ subflows does, sending one packet every $\w T$. Either way, the flow contributes at most $\frac{t-t'}{\w T}+\w$ packets to the uplink in any interval $(t',t]$. With at most $\w$ northbound flows, the uplink receives at most $\frac{t-t'}{T}+\frac{\k^2}{4}$ packets while serving one per $T$, so its queue never exceeds $\k^2/4+1$ packets, whatever $\m$. The constant is larger than \hdr's $\w+1$ because the random order of the queue pairs does not alternate between aggregation switches, so a flow may send several consecutive packets through the same uplink.

\section{Proofs and Details of \name's Properties} \label{sec:ofan}

\subsection{Correctness of Consolidation}

\mypar{Notation} Throughout this appendix, $H$ denotes a host, $E$ an edge switch, $A$ an aggregation switch and $C$ a core switch, with subscript $S$ for the source side and $D$ for the destination side. Thus $E_S$ and $A_S$ are a source edge switch and a source aggregation switch, and $H_D$, $E_D$ and $A_D$ are a destination host, its edge switch, and an aggregation switch of its pod. A \emph{downlink} is a southbound link, \ie $C\to A_D$, $A_D\to E_D$ or $E_D\to H_D$.

\mypar{Load-balancing invariant} We start by formally formulating the goal of a switch \dr in general (and \name in particular) in a network without link failures. 
It is clear that we want a perfect LB locally among uplink ports. But we also want to make sure that any source edge switch $E_S$ load-balances traffic as equally as possible across the links in the destination pods. We want to formulate a target ideal invariant that this load balancing should satisfy. However, we need to be careful in the formulation. The traffic that $E_S$ load-balances depends on the destinations of the flows that cross $E_S$. These destinations depend on the arbitrary permutation pattern, and are therefore not uniform. For example, $E_S$ cannot send the same amount of traffic to all edge switches in the network. Instead, we condition the invariant on the destination switch.

\begin{invariant}[Load-balancing invariant] \label{inv:balance}
Given any source switch {$X$} in a source pod and any destination switch {$Y$} in a destination pod, {$X$} should send traffic at the same rate through all of the southbound links to {$Y$} that are reachable from {$X$}.
\end{invariant}

The invariant implies a perfect load balancing at downlinks:

\begin{property}\label{prop:inv}
If \inv{inv:balance} holds for source aggregation switches ($X$) and destination aggregation switches ($Y$), and likewise for source edge switches and destination edge switches, then all southbound links to any switch have equal load. 
\end{property}
\begin{proof}%
    {Partition the traffic on the downlinks into a switch by its source at the layer above: the load on each core-to-$A_D$ link is the sum of the rates of the source aggregation switches that reach it, and the load on each $A_D$-to-$E_D$ link is the sum of the rates of the source edge switches that reach it. If each source sends at the same rate on all the links it reaches, these sums are equal for all the downlinks into a given switch.}
\end{proof}

\mypar{Consolidation} To achieve the above target invariant, we use \textit{consolidation into virtual flows}. The key idea is that each source switch consolidates into a single shared virtual flow all of its northbound packets %
destined to a shared  destination $Y$. Then, it load balances these packets across all of its uplink ports. However, we still need to carefully define the level at which to consolidate flows. 

\mypar{Mandatory waypoints} Consider a source aggregation switch $A_S$. A destination host $H_D$ that is reachable through its uplinks is necessarily in a different pod. Moreover, a property of the three-level fat tree is that when sending traffic from $A_S$ to $H_D$, any packet \textit{must} reach a given destination edge $E_D$ and a given destination aggregation switch $A_D$ in the destination pod, and cannot reach other edge or aggregation switches. For example, the third aggregation switch of the source pod can only send traffic destined to $H_D$ through the third aggregation switch $A_D$ of the destination pod, then through the edge switch $E_D$ that is connected to $H_D$. We can think of $A_D$ and $E_D$ as \textit{mandatory waypoints}. They stand in contrast with the many possible core switch waypoints that could be used to reach $H_D$.

As a consequence, all of the traffic going through $A_S$ and destined to $H_D$ could be consolidated into a single virtual flow using several consolidation levels: (i)~per destination host $H_D$, (ii)~per destination edge $E_D$, and (iii)~per destination aggregation $A_D$. The following result shows that they can all provide a first performance guarantee.

\begin{theorem}\label{thm:S}
Source aggregation switch $A_S$ satisfies \inv{inv:balance} whether it consolidates traffic per destination host, edge or aggregation switch.
\end{theorem}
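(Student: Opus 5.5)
The plan is to reduce everything to one structural fact about the three-level fat tree. For a source aggregation switch $A_S$ and a destination switch $Y$ in another pod, the set of southbound links into $Y$ that $A_S$ can reach is in bijection with the uplinks of $A_S$.

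First, I would make this bijection explicit. Let $A_S$ be the $i$-th aggregation switch of its pod. Its $\w$ uplinks go to the cores of core group $i$, and each such core has exactly one link to the $i$-th aggregation switch $A_D$ of every other pod.

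Two cases follow, according to the type of $Y$:
\begin{itemize}
\item If $Y=A_D$, the reachable southbound links into $Y$ are exactly the $\w$ links from the cores of group $i$. Uplink $j$ of $A_S$ maps to the link from core $j$ of group $i$.
\item If $Y=E_D$ is an edge switch, only one southbound link into $E_D$ is reachable from $A_S$, namely $A_D\to E_D$. In that case \inv{inv:balance} holds trivially.
\end{itemize}
The reachable-link set is taken as stated in \inv{inv:balance}. So the only nontrivial case is $Y=A_D$, where the rate on the link from core $j$ equals the rate of $A_S$'s traffic to pod$(A_D)$ that leaves on uplink $j$.

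Next, I would treat the three consolidation levels as three partitions of the traffic of $A_S$ bound for pod$(A_D)$:
\begin{itemize}
\item by destination host $H_D$;
\item by destination edge switch $E_D$;
\item by destination aggregation switch $A_D$, which is a single class.
\end{itemize}
Each class is one virtual flow driven by its own round-robin pointer over the $\w$ uplinks. Round-robin, with any fixed port permutation as in the random-port-order mechanism, sends each class at rate $1/\w$ of its rate on every uplink in the long-run sense. A class with $r$ packets puts either $\floor{r/\w}$ or $\ceil{r/\w}$ of them on each uplink.

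Summing over the classes that make up $A_S$'s traffic to $A_D$ then gives the same rate on every uplink, and through the bijection, the same rate on every southbound link into $A_D$. This argument never uses which partition was chosen, only that every class is a subset of the traffic bound for $A_D$. This is exactly what the mandatory-waypoint property ensures: every class under all three levels is contained in a single $A_D$-class.

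The main obstacle is choosing the notion of ``same rate''. With finitely many packets, finer consolidation leaves a per-class rounding error of at most one packet per pointer. I would state the invariant in terms of long-run rates, or equivalently per-period counts under periodic arrivals, so that all three levels satisfy it exactly. In a remark I would note that the deviation is bounded by the number of pointers: one for $A_D$-level consolidation versus up to $\w^2$ for host-level. This sets up why coarser consolidation is preferable, without affecting the theorem itself.
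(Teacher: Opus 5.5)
Your proposal is correct and follows the paper's proof. The paper argues, as you do, that under any consolidation level each pointer class is spread evenly over $A_S$'s cores, so the same-index $A_D$ receives equal rates from each of them. The remaining downlinks reachable from $A_S$ are unique, so they satisfy \inv{inv:balance} trivially; the paper lists $E_D\to H_D$ alongside $A_D\to E_D$ here. Your explicit uplink-to-core bijection, the summation over classes, and the long-run-rate/rounding remark only make precise what the paper states in one line.
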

\begin{proof}%
    Whatever the consolidation level, for any destination belonging to that level in the destination pod, $A_S$ always sends traffic at the same rate to all the cores it is connected to. Thus $A_D$, the aggregation switch of the same index in the destination pod, which is connected to those same cores, also receives the same rate from each of them, and therefore satisfies the invariant. In addition, the only downlinks $A_S$ can reach beyond $A_D$ are $A_D\to E_D$ and $E_D\to H_D$, and it can reach no other downlink into $E_D$ or into $H_D$. Being the only ones, they trivially satisfy the invariant.
\end{proof}

Consider now a source edge switch $E_S$. Its mandatory waypoints for a given destination host $H_D$ are only $H_D$ and $E_D$. Since it can reach $H_D$ using several aggregation switches, none is mandatory. Thus, $E_S$ can consolidate traffic (1)~per destination host $H_D$ and (2)~per destination edge $E_D$. We find that both work out:

\begin{theorem}\label{thm:both}
No matter the consolidation level, if both the source edge switch $E_S$ and the source aggregation switch $A_S$ consolidate traffic, then {$A_S$ satisfies \inv{inv:balance} for the downlinks into destination aggregation switches, and $E_S$ for the downlinks into destination edge switches and hosts. Hence all the downlinks into any switch have equal load.}
\end{theorem}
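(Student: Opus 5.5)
The plan is to split the statement into its two halves, the $A_S$ half and the $E_S$ half, and then conclude with \cref{prop:inv}.

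\textbf{The $A_S$ half.} This follows directly from \cref{thm:S}. That result shows $A_S$ satisfies \inv{inv:balance} at any consolidation level, with no assumption on the traffic it receives. So whatever rates $E_S$ sends, consolidated per $H_D$ or per $E_D$, the traffic arriving at $A_S$ and destined to a given $A_D$ is spread equally over the cores connected to $A_S$. The only thing to verify is that \cref{thm:S} holds for arbitrary input rates at $A_S$, and its proof never used a particular input pattern. Hence every core-to-$A_D$ downlink reached from $A_S$ carries equal load, and summing over source aggregation switches gives equal load on all downlinks into $A_D$.

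\textbf{The $E_S$ half, for downlinks into $E_D$.} Here I will use the fat-tree structure. A packet leaving $E_S$ on uplink $\ell$ reaches the $\ell$-th aggregation switch of its pod. Through any core it can reach only the $\ell$-th aggregation switch of the destination pod, so it necessarily descends the link $A_D^{(\ell)}\to E_D$. There is therefore a bijection between the uplinks of $E_S$ and the downlinks into $E_D$, and this bijection is independent of the core choices made by $A_S$. Suppose $E_S$ rotates per $E_D$, or per $H_D$ with each $H_D$ under $E_D$. Then for each such destination it sends equal rates on all $\w$ uplinks. Summing over the destinations under $E_D$, it sends equal rates on all $A_D^{(\ell)}\to E_D$ links, so \inv{inv:balance} holds with $X=E_S$ and $Y=E_D$.

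\textbf{The $E_S$ half, for downlinks into $H_D$.} The link $E_D\to H_D$ is the unique downlink into $H_D$, so the invariant holds trivially there. The same holds for $A_S$ toward $E_D$ and $H_D$, as already observed in \cref{thm:S}. Intra-pod traffic from $E_S$ goes up to aggregation and straight down, and I would handle it the same way: the bijection $\ell\mapsto A^{(\ell)}\to E_D$ still applies.

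\textbf{Combining, and the main obstacle.} Having the invariant at both levels, I apply \cref{prop:inv} to obtain equal loads on all downlinks into any switch. The main obstacle is interpreting ``rate'' rigorously: round-robin gives exact equality only in the long run, or up to one packet per pointer. I would state the invariant in terms of long-run average rates, note that per-pointer discrepancies are bounded by one packet, and so argue that the equality holds asymptotically in $\m$. I would also check that the pointer for a given destination at $E_S$ is not disturbed by traffic to other destinations, which holds because pointers are per-key. Quantized packet-size classes are handled by applying the argument separately to each class.
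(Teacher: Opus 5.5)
Your proposal is correct and follows the paper's proof essentially step for step: the core-to-$A_D$ case by applying \cref{thm:S} to the aggregate traffic at $A_S$ (the paper likewise notes that $E_S$'s own share need not be spread evenly over the cores), the $A_D\to E_D$ case via the index-preserving correspondence between $E_S$'s uplinks and the downlinks into $E_D$ (including $A_D=A_S$ in the intra-pod case), the trivial single-link case for $H_D$, and then \cref{prop:inv}. Your additions go beyond the paper, which treats rates as fluid: reading rates as long-run averages with at most a one-packet discrepancy per pointer, and handling each packet-size class separately. They do not change the argument.
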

\begin{proof}%
    We structure the proof based on the destination type.
    
    (1) For any destination aggregation switch $A_D$, the source edge switch $E_S$ has a single source aggregation waypoint $A_S$: if the destination is in the same pod then $A_S=A_D$, and otherwise $A_S$ is the aggregation switch of the same index in the source pod. All of $E_S$'s traffic for $A_D$ is therefore sent to $A_S${. Since $A_S$ shares its pointer with the other edge switches of its pod, $E_S$'s own share need not be spread evenly over the cores; but the aggregate that $A_S$ sends satisfies the invariant for $A_D$ by \thm{thm:S}, which is all that \cref{prop:inv} needs for the core-to-$A_D$ links.}

    (2) For any destination edge switch $E_D$, the source edge switch $E_S$ splits its traffic for $E_D$ equally across all the aggregation switches of its own pod, sending the same rate $R$ to each. In turn, the rate leaving any source aggregation switch $A_S$ is carried to the aggregation switch $A_D$ of the same index in the destination pod (with $A_D=A_S$ when the two pods coincide). Hence $E_D$ receives the same rate $R$ on each of its downlinks from the aggregation switches of its pod.

    (3) Finally, destination host $H_D$ is only connected through a single link to the network, so the invariant holds trivially.
\end{proof}

Since it is clear that pointers also load-balance perfectly locally among uplink ports, the above theorem means that LB achieves its dual goals both at uplinks and at downlinks.

\subsection{Low-Cost Per-Destination Port Orders}
\label{sec:ofan-orders}

\mypar{Requirements} Ofan keeps, at each switch, one round-robin pointer per destination (per destination edge switch at an edge switch, and per destination pod at an
aggregation switch) and per packet type (\eg data or ACK). Each pointer is an index $i \in \{0,\dots,\w-1\}$ into an order of
the switch's $\w$ uplinks, starting at a random position.
Two requirements
drive the choice of order. 
First, starting from any position, a pointer must visit
every uplink exactly once per $\w$ packets, so traffic is split evenly. Second,
pointers that are active at the same time at the same switch should follow
{different} orders; otherwise they tend to select the same uplink at the same
time. Storing an independent random permutation per destination meets both
requirements, but costs $\Theta(\w \log \w)$ bits per destination.

\mypar{\name operation} Instead, \name computes the order from a single global permutation $P$ of
$\{0,\dots,\w-1\}$, shared by all switches, and a per-pointer identifier
\begin{equation}
  \mathit{id} = d + N_d \cdot t,
\end{equation}
where $d$ is the destination (edge switch or pod), $N_d$ is the number of possible
destinations, and $t \in \{0,1\}$ distinguishes data from acknowledgments. We split
this identifier into its low-order and high-order parts,
\begin{equation}
  \mathit{id}_1 = \mathit{id} \bmod 2^{b},
  \qquad
  \mathit{id}_2 = \bigl\lfloor \mathit{id} / 2^{b} \bigr\rfloor,
\end{equation}
where $b = \lceil \log_2 \w \rceil$. The $i$-th port of the order is then
\begin{equation}
  \pi(i) = \bigl( \mathrm{xor}_\w(P[i],\, \mathit{id}_1) + \mathit{id}_2 \bigr) \bmod \w,
\end{equation}
where $\mathrm{xor}_\w(v,x)$ equals $v \oplus x$ if $v \oplus x < \w$ and $v$ otherwise.

\mypar{Correctness}
Each step maps $\{0,\dots,\w-1\}$ onto itself one-to-one. Adding a constant modulo
$\w$ clearly does. For $\mathrm{xor}_\w$, since XOR is its own inverse, if $v$ is kept
because $v \oplus x \ge \w$, no other value $u < \w$ can satisfy $u \oplus x = v$,
since that would require $u = v \oplus x \ge \w$. Hence $\pi$ is a permutation, and
every pointer visits each uplink exactly once per $\w$ packets from any starting
position. Under link failures, ports that no longer lead to the destination are
skipped, so each remaining valid port still receives exactly one packet per round.

\mypar{Diversity from two non-collapsing operations}
Using a single kind of operation yields few orders: all additions combine into one
shift, giving at most $\w$ orders, and all XORs combine into one XOR, giving at most
$2^b$. XOR and addition, however, do not reduce to one another: addition preserves
arithmetic differences and breaks bitwise ones through carries, while XOR does the
opposite. Driving the XOR with $\mathit{id}_1$ (the low-order bits of $\mathit{id}$) and the addition
with $\mathit{id}_2$ (its high-order bits) therefore produces up to $2^b \cdot \w$ combinations. For
$\w = 6$ they yield 30 distinct cyclic orders, versus 6 with additions alone, so the
few destinations served concurrently by a switch almost always follow different
orders.

\mypar{Computational cost}
The per-packet work is one lookup of $P[i]$, one XOR, one comparison, one addition,
and one conditional subtraction, all on $\lceil \log_2 \w \rceil$-bit values. The
constants $\mathit{id}_1$ and $\mathit{id}_2 \bmod \w$
can be derived from the destination on the fly. 

\mypar{Storage cost} Beyond the pointers themselves, a
switch stores nothing per destination. $P$ is a single global constant of
$\w \lceil \log_2 \w \rceil$ bits that can be hard-wired and is the same for all switches. $P$ is drawn once offline
and chosen so that its $\w$ cyclic shifts define $\w$ distinct orders.

\subsection{Scaling Impact \vs \hdr}

{Each switch can consolidate at several levels (\cref{thm:S,thm:both}), and \name always chooses the highest one: the higher the consolidation level, the fewer pointers a switch holds, and the smoother its load balancing, since each pointer coordinates more flows. We now quantify this gain \vs \hdr.}

\mypar{Scaling impact on pointer numbers} Consider a random permutation with inter-pod traffic only. In \name, aggregation switches consolidate traffic per destination aggregation switch. Thus, a destination aggregation switch can only receive up to $\k-1$ different virtual flows, since only the $\k-1$ source aggregation switches of the same index, one per other pod, can reach it. However, in \hdr, they  virtually consolidate traffic per destination host, hence there are $\k^2/4$ flows, corresponding to the $\k^2/4$ hosts in the destination pod. Thus, the consolidation gain in the number of pointers is at least $\k^2/(4(\k-1))>\k/4,$ increasing with the network size. 

\begin{figure}
    \centering
    \includegraphics[width=0.55\linewidth]{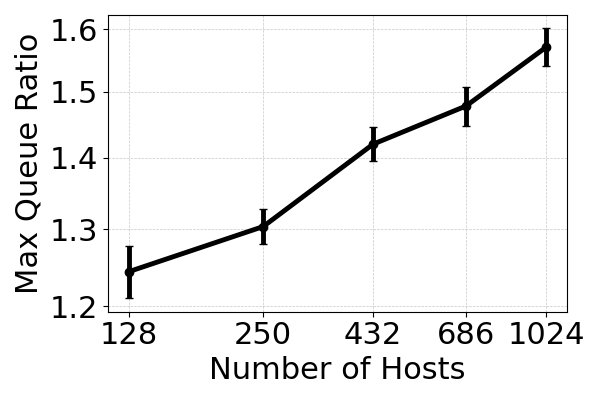}
    \caption{Ratio of the maximum queue size of \drb to that of \name as a function of the number of hosts. 
    }
    \label{fig:ratio}
    \vspace{-1em}
\end{figure}

\mypar{Scaling impact on queue sizes} {Under the same traffic,} \Cref{fig:ratio} plots the ratio of the maximum queue size of \drb to that of \name as a function of network size, with 30 runs per size. The ratio increases with the network size, \ie \name outperforms even more as we scale the network. 

\subsection{Model with Bounded Average Queue Sizes }

{We now build a model for the queue size under \name.}
\begin{proof}[Proof of \cref{thm:ofan}] 
The model follows similar lines for \name as for \hdr. However, since we consolidate pointers, we   evaluate the number of distinct pointers, and model as if all pointers carried the same number of periodic flows, ignoring the variability across pointers.

We start with the edge model. We want to determine the expected number of pointers at the edge switch. A single source edge switch comprises $\w$ flows (balls), and these flows can be destined to $\k^2/2$ edge switches (bins). The expected number of distinct destination edge switches is:
$$ \frac{\k^2}{2}\cdot \para{ 1 - \left( 1 - \frac{1}{\k^2/2} \right)^{\w} }. $$
Using Taylor expansion, we find that this is about $\w-1/4$.

Likewise, at each aggregation switch, the number of northbound pointers is at most $\k-1$, as there are $\k-1$ other pods and therefore other aggregation switches that can be destinations. The expected number of pointers also converges to $\k-1$ as $\k$ grows. Thus, we obtain a new model, which neglects many aspects of the algorithm that can be too complex to capture, and in particular the variability in the periods of each virtual flow. Again, \textit{the ND/D/1 model is only an approximation}. {As for \hdr, the modeled queue is stationary and therefore independent of $\m$.}
\end{proof}

{\mypar{An exact bound at the edge uplinks} As for \hdr (\cref{sec:DRB}), the first queueing layer admits a deterministic bound. A pointer shared by $f$ flows advances $f$ times per period $T$, and therefore visits each uplink port exactly $f$ times per $\w$ periods, \ie it contributes at most $f\para{\frac{t-s}{\w T}+1}$ packets to a given port in any interval $(s,t]$. Since the (at most $\w$) northbound flows of an edge switch are split among its pointers, the port receives at most $\frac{t-s}{T}+\w$ packets in $(s,t]$ while serving one packet per $T$, so its queue never exceeds $\w+1$ packets, whatever the message size $\m$.}

\begin{figure}
\centering
\begin{subfigure}[]{0.9\linewidth}
    \includegraphics[width=\linewidth]{fig/lmodel.png}
\end{subfigure}  
\par
\begin{subfigure}[]{0.7\linewidth}
	\includegraphics[width=\textwidth]{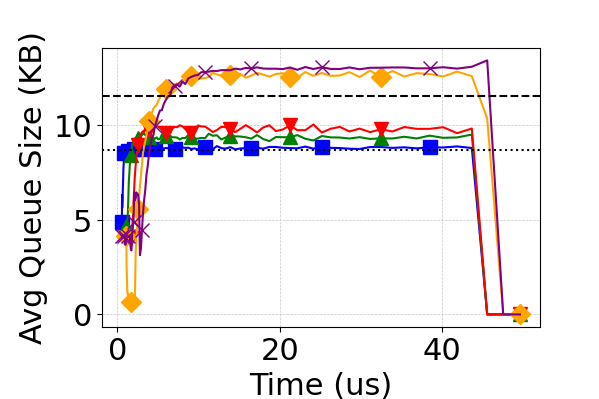}  
\end{subfigure}
\caption{Queue models for \name. The average queue size at each of the 5 switch layers is compared against queueing models (dotted and dashed horizontal lines). }
\label{fig:ofanmodel}
\end{figure}

{\mypar{Validation}} \Cref{fig:ofanmodel} illustrates the model for \name under the same conditions. We can see that the edge model holds well (dotted line). However, as expected, the models beyond have a slightly poorer fit. In particular, the downlink queues are  harder to model exactly.

\subsection{Dynamic First Pointer}\label{sec:advanced}

In a network without link failures, we find that if (i)~the first pointer is not picked at random, but rather chooses the shortest queue at the arrival of the first packet, then (ii)~it cycles in an unshuffled round-robin order, then CCT is slightly improved. However, 
switches would need to know the start of a collective, or restart pointers after a timeout; we do not further evaluate this variant in this paper.

\section{Proof of \cref{thm:pkt}: Optimal Packet Size} \label{sec:pkt}

\begin{proof}[Proof of \cref{thm:pkt}] 
The CCT can be modeled as the sum of (i)~the time to send the message, (ii)~the time to go through the queues, and (iii)~the zero-load RTT. Only the first two components depend on the packet size $P$, so let's focus on them. 

(i)~The time to send the message is
\be \frac{D}{P-H}\cdot \frac{P}{\C},
\ee
since we need $\frac{D}{P-H}$ packets to send a message of $D$ bytes using $P-H$ payload per packet, and each packet of size $P$ that is sent at rate $\C$ takes time $\frac{P}{\C}$. 

(ii)~The time to go through the queues is
\be \alpha \cdot \frac{P}{\C},
\ee
by assumption, since the queues hold $\alpha$ packets of size $P$.

Adding the two components and dropping the common factor $\frac1\C$, we want to minimize
\be
f(P):=P\cdot \para{\frac{D}{P-H}+\alpha}=\frac{PD}{P-H}+\alpha P .
\ee
Differentiating with respect to $P$, 
we get
\be
f'(P)=-\frac{D\cdot H}{\para{P-H}^2}+\alpha .
\ee
Zeroing the derivative gives $\para{P-H}^2=\frac{DH}{\alpha}$, that is $P-H=\sqrt{\frac H\alpha D}$, which is the claimed result. It is indeed a minimum, since $f''(P)=\frac{2DH}{\para{P-H}^3}>0$ for every $P>H$. 
\end{proof}

Note that \cref{fig:pktsize} assumes as a first approximation $H=82\unit{B}$, accounting for the header and inter-frame gap components, and $\alpha=10$ data packets.

\section{Proof of \pipe Queueing Model (\cref{thm:pipe})} \label{sec:pipe}

\begin{proof}[Proof of \cref{thm:pipe}]
We use the same model as in the proof of \cref{thm:ofan} (\cref{sec:ofan}), with a single change. In \name, each key ((i)~destination edge switch at an edge switch or destination pod at an aggregation switch, and (ii)~size class) has a single pointer per switch; in \pipe, each key has one independent pointer per ingress pipe, and a packet uses the pointer of the pipe it enters through. Each of these pointers still rotates over all the switch's uplinks. Hence the only change is that more pointers, \ie more periodic sources, feed each queue: up to $P$ times more for $P$ ingress pipes.

This change does not affect the boundedness of the queue, for two reasons. First, the load on every link stays balanced: each pipe pointer spreads its own traffic equally over its uplinks, so the sum of the pipe pointers of a key does too, exactly like \name's single pointer. The proofs of \cref{thm:S,thm:both} therefore hold unchanged, and every link keeps a load $\rho\le1$ by \cref{prop:inv}. Second, the modeled $N\cdot D/D/1$ queue is stationary for any number $N$ of sources: a larger $N$ changes the value of the average queue, but not the fact that it is independent of $\m$. Since $N$ is at most $P$ times \name's number of pointers, which does not depend on $\m$, we get $\q=\Theta(1)$.
\end{proof}

\mypar{An exact bound at the edge uplinks} The deterministic bound of \cref{sec:ofan} also carries over unchanged, since it only used that the (at most $\w$) northbound flows of an edge switch are split among pointers such that a pointer shared by $f$ of these flows visits every uplink port $f$ times per $\w$ periods, whatever the split. Hence the queue of any edge uplink under \pipe is at most $\w+1$ packets, whatever the message size $\m$.

\mypar{The cost of pipes} The price of the pipes is a larger number of pointers, hence a larger constant. At an edge switch the difference is small, since \name's $\approx\w-1/4$ edge pointers mostly carry a single flow each already. At an aggregation switch with $P$ ingress pipes, however, \name's at most $\k-1$ pointers become up to $P(\k-1)$, \eg $P=\k/(2W)=8$ for $\k=64$ and pipes of $W=4$ ports. This is consistent with the gap between \pipe and \name widening with the network size in \cref{fig:vsDRB}, while both remain bounded in $\m$. As for \name, the model assumes that the pointers are randomly initialized so that they do not lock in phase, and it is only an approximation beyond the edge layer.

\section{\name's Hardware Implementation}
\label{sec:implementation-details}

\mypar{Concurrent access challenge} Implementing \name's per-destination-group pointers at line rate raises a structural challenge: a switch processes many packets per cycle in parallel, arriving from different input ports, and several of these packets can be destined to the same destination edge or aggregation switch within the same cycle. Such packets must all access the \emph{same} \name round-robin (RR) pointer, creating a read-modify-write hazard on a single shared register.

\mypar{Combining network} This concurrent-access problem is not new: it is a classic problem in parallel computing, where many concurrent processors issuing fetch-and-add requests to a shared counter can create the same kind of contention~\cite{kruskal1988efficient,lebeck1994request}. The classical fix, called a \emph{combining network}, is to merge duplicate concurrent requests together before they reach the shared counter, so the counter is only touched once no matter how many requests asked for it.

That said, these classical papers solve a more general problem, where requests to a huge, dynamically-addressed memory are routed hop-by-hop through a large network (\eg an omega or butterfly network) and only combine when they happen to collide en route. This general machinery is costly and has variable latency, which does not fit a switch pipeline that needs guaranteed, fixed per-cycle latency. Instead, we want to solve our narrower problem, over a small fixed set of $N$ local pointers, using a much cheaper, fixed-latency, purpose-built circuit, described next.

\mypar{\name pipeline} We can implement \name's pointers with a simple, fixed-latency pipeline that processes a batch of $\w$ packets in parallel, where $\w = \k/2$ is the number of downlink ports (only packets arriving from downlinks need a pointer decision, since uplink paths are chosen while downlink paths are unique). The pipeline has four stages: (1) sort the batch by pointer key, so packets sharing a pointer become adjacent; (2) combine the counts of packets sharing each pointer key; (3) update the shared pointer register for each key; and (4) decombine the updated pointer values back to each packet in the batch.
We illustrate the pipeline in our fat-tree example of \cref{sec:algo} with $\k=64$-port switches, \ie at most $\w=\k/2=32$ packets arriving in parallel from downstream inputs, $32$ uplink ports to cycle through, and $N=2K$ edge switch destinations.

\Ts{(1) Sort.} First, a $\w$-input bitonic sorting network~\cite{chen2017computer} sorts the batch by pointer key, so packets sharing a pointer become adjacent.

\Ts{(2) Combine.} A combining tree~\cite{dimitrakopoulos2021sum} computes, in $\log_2{(\w)}$ stages, how many packets in the batch share each key; the last packet of each group becomes its \emph{master}, carrying the group's match count.

\Ts{(3) Update.} A register file of $N$ per-pointer counters is updated: each master atomically fetches its pointer's current value and advances it, modulo the number of uplink ports, by the group's match count.

\Ts{(4) Decombine.} Finally, a reverse network decombines the fetched value back to every packet in its group (offsetting it by the packet's position within the group), and an unsorting network---the inverse of the initial permutation---restores the original per-lane order, so each packet leaves the pipeline carrying its own assigned uplink port. For our running example, this requires only a $32$-input sorting/combining tree and $2K$ registers per edge switch, a practical hardware budget.

\mypar{Example} Suppose 3 of the $\w$ packets in a batch are all destined to the same edge switch, which has $\k/2=32$ uplink ports, so its pointer takes values in $\{0,1,\ldots,31\}$ and currently holds the value $1$. In stage (1), sorting by pointer key brings these 3 packets together, adjacent to each other. In stage (2), the combining tree counts the 3 matches and marks the last of them as the \emph{master}, carrying a match count of $3$. In stage (3), this master atomically fetches the pointer's current value ($1$) and advances it by $3 \bmod 32$, leaving the pointer at $(1+3) \bmod 32 = 4$ for the next batch. In stage (4), the 3 packets are decombined: each is assigned the fetched value offset by its position within the group, \ie ports $1$, $2$, and $3$ respectively, and the unsorting network routes each result back to the packet's original lane. Thus, the pointer is touched exactly once for all 3 packets, yet each packet still receives its own distinct, correctly-ordered round-robin port, exactly as if the 3 requests had been processed one at a time.

\mypar{Cost estimate} The pipeline requires $N$ registers for the pointers, plus a $\w$-input sorting network, a $\w$-input combining tree, and a matching decombining/unsorting network of the same size. 
For our running example, recall each edge switch keeps one pointer per (destination edge switch, packet-size quantization group) pair; with $2K$ destination edge switches and $2$ quantization groups (data, ACKs), $N \approx 4K$. Each pointer needs $\log_2(32)=5$ bits to represent its value (one of the $32$ uplink ports), so the register file is only $N \times 5 \approx 20K$ bits ($\approx 2.5$KB). The rest of the pipeline is dominated by two contributions: (i) the $W{=}32$-input sorting and unsorting networks, each with $15$ stages of $16$ compare-and-swap cells (roughly $80$ gates per cell, for a $12$-bit pointer-key comparator, wide enough to address $N\approx4K$ keys, plus payload-swap muxes), for $\approx 20K$ gates apiece; and (ii) the register file's read-modify-write logic, dominated by the $20K$ flip-flops themselves at $\approx 4$ gate-equivalents each, \ie $\approx 82K$ gates. The combining tree adds only a couple thousand more gates ($\approx 80$ small adder nodes for the segmented match counts). Altogether, this is on the order of $10^5$ logic gates ($\approx 127K$). At a 1~GHz clock and a conservative 10\% switching activity, this amounts to only a few milliwatts of dynamic power ($\approx 6$mW). 

Both figures are negligible next to the resources of the very chip we used as our pipe-architecture example: the BCM56980 (Tomahawk~3) datasheet lists a typical power draw of $300$~W and $64$~MB of integrated packet-buffer memory~\cite{broadcom-bcm56980}, independently corroborated by~\cite{wheeler2019tomahawk4}. So \name's pipeline's $\approx 6$mW would add only $\approx 0.002\%$ to such a chip's power budget, and its $\approx 20K$-bit register file is only $\approx 0.0037\%$ of that chip's $64$MB ($\approx 537$ million bits) of buffer memory---both entirely negligible.

\mypar{Other details} Implementing \name only requires modifying the core scheduling algorithm in a switch. No change is required at endhosts and \name can continue to use existing routing protocols such as BGP W-ECMP~\cite{ietf-bess-ebgp-dmz-08,ietf-idr-link-bandwidth-24,rfc7938, nokia-w-ecmp,nvidia-w-ecmp, juniper-w-ecmp,arista-w-ecmp,huawei-w-ecmp} to propagate information about failures \etc

\end{document}